\DeclareMathAlphabet{\mathpzc}{OT1}{pzc}{m}{it}
\newcommand{\mrk}[1]{{#1}'}
\newcommand{\oldstack}[3]{%
{\ifthenelse{\equal{#1}{1}}{%
\mrk{#2}
}%
{#2}}_{#3}%
}
\newcommand{\stack}[3]{%
[%
{\ifthenelse{\equal{#1}{1}}{%
\mrk{#2}
}%
{#2}}\ {#3}%
]%
}
\newcommand{\va}[1]{\stackrel{#1}{\longrightarrow}}
\newcommand{\ourpath}[1]{\stackrel{#1}{\leadsto}}
\newcommand{\flush}[1]{\stackrel{#1}{\Longrightarrow}}
\newcommand{\lflush}[1]{\stackrel{#1}{\Longrightarrow}}
\newcommand{\cflush}[2]{\stackrel{{\color{#2}#1}}{\Longrightarrow}}
\newcommand{\chain}[3]{\langle^{#1} #2 {}^{#3} \rangle}
\newcommand{\config}[2]{\langle #1\ , \ #2 \rangle}
\newcommand{\triple}[3]{\langle #1\ , \ #2, \ #3 \rangle}
\newcommand{\symb}[1]{\mathop{symbol}(#1)}
\newcommand{\state}[1]{\mathop{state}(#1)}
\newcommand{\configtr}[2]{ #1 \downarrow #2}
\newcommand{\comp}[1]{ \stackrel {{#1}} \vdash }
\newcommand{\opl}{\rm OPL}
\newcommand{\opa}{\rm OPA}
\newcommand{\vpa}{\rm VPA}
\newcommand{\ofl}{{\rm$\omega$OPL}} 
\newcommand{\ofa}{{\rm$\omega$OPA}} 
\newcommand{\bfa}{{\rm$\omega$OPBA}} 
\newcommand{\bfae}{{\rm$\omega$OPBEA}} 
\newcommand{\mfa}{{\rm$\omega$OPMA}} 
\newcommand{\dbfa}{{\rm$\omega$DOPBA}} 
\newcommand{\dmfa}{{\rm$\omega$DOPMA}} 
\newcommand{\dbfae}{{\rm$\omega$DOPBEA}}
\newcommand{\bvpa}{{\rm$\omega$BVPA}}
\newcommand{\dbvpa}{{\rm$\omega$DBVPA}}
\newcommand{\lof}[1]{$\mathcal{L}$(#1)}
\begin{document}

\author{Federica Panella\inst{1}, Matteo Pradella\inst{1}, Violetta Lonati\inst{2}, Dino Mandrioli\inst{1}}
\institute{
  DEIB - Politecnico di Milano,
  via Ponzio 34/5, Milano, Italy\\
  \email{\{federica.panella, matteo.pradella, dino.mandrioli\}@polimi.it}
\and 
 DI - Universit\`a degli Studi di Milano,
  via Comelico 39/41, Milano, Italy\\
  \email{lonati@di.unimi.it}
}
\title{Operator Precedence $\omega$-languages}
\maketitle

\begin{abstract} 

$\omega$-languages are becoming more and more relevant nowadays when most applications are ``ever-running''. Recent literature, mainly under the motivation of widening the application of model checking techniques, extended the analysis of these languages from the simple regular ones to various classes of languages with ``visible syntax structure'', such as visibly pushdown languages (VPLs).
Operator precedence languages (OPLs), instead, were originally defined to support deterministic parsing and, though seemingly unrelated, exhibit interesting relations with these classes of languages: OPLs strictly include VPLs, enjoy all relevant closure properties and have been characterized by a suitable automata family and a logic notation.

In this paper we introduce operator precedence $\omega$-languages (\ofl s), investigating various acceptance criteria and their closure properties. Whereas some properties are natural extensions of those holding for regular languages, others required novel investigation techniques.
Application-oriented examples show the gain in expressiveness and verifiability offered by \ofl s w.r.t. smaller classes.\\
\\
{\bf Keywords: }
$\omega$-languages, Operator precedence languages,
Push-down automata,
Closure properties, Infinite-state model checking. 
\end{abstract}

\section{Introduction}
\label{sec:intro}

Languages of infinite strings, i.e. $\omega$-languages, have been
introduced to model nonterminating processes; thus they are becoming
more and more relevant nowadays when most applications are
``ever-running'', often in a distributed environment. The pioneering
work by B\"uchi and others investigated their main algebraic
properties in the context of finite state machines, pointing out
commonalities and differences w.r.t. the finite length counterpart
\cite{bib:Buchi1962,bib:Thomas1990a}. 

More recent literature, mainly under the motivation of widening the
application of model checking techniques to language classes as wide
as possible, extended this analysis to various classes of languages
with ``visible structure'', i.e., languages whose syntax structure is
immediately visible in their strings: parenthesis languages, tree
languages, visibly pushdown languages (VPLs) \cite{jacm/AlurM09} are
examples of such classes.

Operator precedence languages,
 instead, were defined by Floyd in the 1960s with the
original motivation of supporting deterministic parsing, which is
trivial for visible structure languages but is crucial for general
context-free languages such as programming
languages~\cite{Floyd1963}, where structure is often left implicit
(e.g. in arithmetic expressions). Recently, these seemingly unrelated
classes of languages have been shown to share most major features;
precisely OPLs strictly include VPLs and enjoy all the same closure
properties~\cite{Crespi-ReghizziM12}. This observation motivated
characterizing OPLs in terms of a suitable automata
family~\cite{LonatiMandrioliPradella2011a} and in terms of a logic
notation~\cite{LonatiMandrioliPradella2013a}, which was missing in
previous literature.

In this paper we further the investigation of OPLs properties to the
case of infinite strings, i.e., we introduce and study operator precedence
$\omega$-languages ({\ofl}s). As for other families, we consider
various acceptance criteria, their mutual expressiveness relations,
and their closure properties. Not surprisingly, some properties are
natural extensions of those holding for, say, regular languages or
VPLs, whereas others required different and novel investigation
techniques essentially due to the more general managing of the stack.
These closures and the decidability of the emptiness problem are a necessary step towards the
possibility of performing
infinite-state model checking.
Simple application-oriented examples show the considerable gain in
expressiveness and verifiability offered by {\ofl}s w.r.t. previous
classes.

The paper is organized as follows. The next section provides basic
concepts on operator precedence languages of finite-length words and on operator precedence
automata able to recognize
them. 
Section~\ref{sec:omegaLan} defines operator precedence automata which can deal with
infinite strings, analyzing various classical acceptance conditions
for $\omega$-abstract machines.  Section~\ref{sec:clProp} proves the
closure properties they enjoy w.r.t typical operations on
$\omega$-languages and shows also that the emptiness problem is
decidable for these formalisms.  Finally,
Section~\ref{sec:conclusions} draws some conclusions.

\section{Preliminaries}
\label{sec:prelim}

Operator precedence languages~\cite{Crespi-ReghizziM12,Floyd1963} have been characterized in terms of both a generative formalism (operator precedence grammars, OPGs) and an equivalent operational one (operator precedence automata, OPAs, named Floyd automata or FAs in ~\cite{LonatiMandrioliPradella2011a}), but in this paper we consider the latter, as it is better suited to model and verify nonterminating computations of systems. We first recall the basic notation and definition of operator precedence automata able to recognize words of finite length, as presented in~\cite{LonatiMandrioliPradella2011a}.

Let $\Sigma$ be an alphabet. The empty string is denoted $\varepsilon$.
Between the symbols of the alphabet three types of operator precedence (OP) binary relations can hold: \textit{yields} precedence, \textit{equal} in precedence and \textit{takes} precedence, denoted $\lessdot$, $\doteq$ and $\gtrdot$ respectively.
Notice that $\doteq$ is not necessarily an equivalence relation, and $\lessdot$ and $\gtrdot$ are not necessarily strict partial orders.
We use a special symbol \# not in $\Sigma$ to mark the beginning and the end of any string. This is consistent with the typical operator parsing technique that requires the lookback and lookahead of one character to determine the next action to perform~\cite{GruneJacobs:08}.
The initial \# can only yield precedence, and other symbols can only take precedence on the ending~\#.

\begin{definition}		
An \emph{operator precedence matrix} (OPM) $M$ over an alphabet $\Sigma$ is a  $|\Sigma \cup \{\#\}| \times |\Sigma\cup \{\#\}|$ array that with each ordered pair $(a, b)$ associates the set $M_{ab}$ of OP relations holding between $a$ and $b$. $M$ is \textit{conflict-free} iff $\forall a,b \in \Sigma, |M_{ab}|\leq 1$.
We call $(\Sigma, M)$ an operator precedence alphabet if $M$ is a conflict-free OPM on $\Sigma$.
\end{definition}

\noindent Between two OPMs $M_1$ and $M_2$, we define set inclusion and union:\vspace{-0.2cm}
\[
   \nonumber M_1\subseteq M_2 \text{ if } \forall a,b: (M_1)_{ab}\subseteq (M_2)_{ab}, \qquad
   \nonumber M=M_1\cup M_2   \text{ if }  \forall a,b: M_{ab}= (M_1)_{ab}\cup (M_2)_{ab}\vspace{-0.2cm}
\]

If $M_{ab} = \{\circ\}$, with $\circ \in \{\lessdot, \doteq, \gtrdot\}$ ,we write $a \circ b$. For $u,v \in\Sigma^*$ we write $u \circ v$ if $u = xa$ and $v = by$ with $a \circ b$.
Two matrices are \emph{compatible} if their union is conflict-free.  A matrix is \textit{complete} if it contains no empty case.

In the following we assume that $M$ is \textit{$\dot=$-acyclic}, which means that $c_1 \doteq c_2 \doteq \dots \doteq c_k \doteq c_1$ does not hold for any $c_1, c_2, \dots, c_k \in \Sigma, k \geq 1$.

\begin{definition}\label{def:OPA}
A \emph{nondeterministic operator precedence automaton} (OPA) is a tuple
$\mathcal A = \langle \Sigma, M, Q, I, F, \delta \rangle $ where:
\begin{itemize}
\item $(\Sigma, M)$ is an operator precedence alphabet,
\item $Q$ is a set of states (disjoint from $\Sigma$),
\item $I \subseteq Q$ is a set of initial states,
\item $F \subseteq Q$ is a set of final states,
\item $\delta : Q \times ( \Sigma \cup Q) \rightarrow 2^Q$ is the transition function.
\end{itemize}
\end{definition}

\noindent The transition function can be seen as the union of two disjoint functions:\vspace{-0.1cm}
\[
\delta_{\text{push}}: Q \times \Sigma \rightarrow 2^Q
\qquad 
\delta_{\text{flush}}: Q \times Q \rightarrow 2^Q\vspace{-0.1cm}
\]
An OPA can be represented by a graph with $Q$ as the set of vertices and
$\Sigma \cup Q$ as the set of edge labels: 
there is an edge from state $q$ to state $p$ labeled by $a \in \Sigma$ if and only if $p \in \delta_{push}(q,a)$ and 
there is an edge from state $q$ to state $p$ labeled by $r \in Q$ if and only if $p \in \delta_{flush}(q,r)$.
To distinguish flush transitions from push transitions we denote the former ones by a double arrow.

To define the semantics of the automaton, we introduce some notation.
We use letters $p, q, p_i, q_i, \dots $ for states in $Q$ and 
we set $\mrk{\Sigma} = \{\mrk a \mid a \in \Sigma \}$; symbols in $\Sigma'$ are
called {\em marked} symbols.

Let $\Gamma$ be $ (\Sigma \cup \mrk \Sigma \cup \{\#\}) \times Q$; we denote symbols in $\Gamma$ as $\stack 0aq$,
$\stack 1aq$, or $\stack 0\#q$, respectively.
We set $\symb {\stack 0aq} = \symb {\stack 1aq} = a$, $\symb {\stack 0\#q}=\#$, and
$\state {\stack 0aq} = \state {\stack 1aq} = \state {\stack 0\#q} = q$.
Given a string $\beta = B_1 B_2 \dots B_n$ with $B_i \in \Gamma$, we set 
$\state \beta = \state{B_n}$.

A \emph{configuration} is any pair $C = \config {\beta} {w}$,
where $\beta = B_1 B_2 \dots B_n\in \Gamma^*$, $\symb{B_1} = \#$, and $w = a_1 a_2 \dots a_m \in \Sigma^*\#$.
A configuration represents both the contents $\beta$ of the stack and the part of input $w$ still to process.

A computation (run) of the automaton is a finite sequence of moves $C \vdash C_1$; there are three kinds of moves, depending on the precedence relation between $\symb {B_n}$ and~$a_1$:

\smallskip

\noindent {\bf push move:} if $\symb {B_n} \doteq a_1$ then 
$
C_1 = \config{\beta \stack 0{a_1}{q} }{a_2 \dots a_m}
$, with $q \in \delta_{push}(\state\beta,a_1)$;

\smallskip

\noindent {\bf mark move:} if $\symb {B_n} \lessdot \ a_1$ then
$
C_1 = \config{\beta \stack 1{a_1}{q} }{a_2 \dots a_m}
$, with $q \in \delta_{push}(\state\beta,a_1)$;

\smallskip

\noindent {\bf flush move:} if $\symb {B_n} \gtrdot a_1$ then
let $i$ the greatest index such that $\symb{B_i} \in \Sigma'$ (such index always exists). Then
$
C_1 = \config{B_1 B_2 \dots B_{i-2}\stack
0{\symb{B_{i-1}}}{q} }{a_1 a_2 \dots a_m}
$, with $q \in \delta_{flush}(\state{B_n},\state{B_{i-1}})$.

\smallskip

Push and mark moves both push the input symbol on the top of the stack, together with the new state computed by $\delta_{push}$; such moves differ only in the marking of the symbol on top of the stack.
The flush move is more complex: the symbols on the top of the stack are removed until the first marked symbol (\emph{included}),
and the state of the next symbol below them in the stack is updated by $\delta_{flush}$ according to the pair of states that delimit the portion of the stack to be removed; notice that in this move the input symbol is not consumed and it remains available for the following move.

Finally, we say that a configuration $\stack 0{\#}{q_I}$ is {\em starting} if $q_I \in I$ and
a  configuration $\stack 0{\#}{q_F}$ is {\em accepting} if $q_F \in F$.
The language accepted by the automaton is defined as:
\[
L(\mathcal A) = \left\{ x \mid  \config {\stack 0\#{q_I}} {x\#}  \comp * 
\config {\stack 0\#{q_F}} \# , q_I \in I, q_F \in F \right\}.
\]

\begin{remark}
The assumption on the $\doteq$-acyclicity has been introduced in previous literature~\cite{Crespi-ReghizziM12,LonatiMandrioliPradella2011a} to prevent the construction of operator precedence grammars with unbounded length of production's right hand sides (r.h.s.). Correspondingly, in presence of $\doteq$-cycles of an OPM, an OPA could be compelled to an unbounded growth of the stack before applying a flush move.
The $\doteq$-acyclicity hypothesis could be replaced by the weaker restriction of production's r.h.s. of bounded length in grammars and a bounded number of consecutive push moves in automata, or could be removed at all by allowing such unbounded forms of grammars -- e.g. with regular expressions as r.h.s.-- and automata.
In this paper we accept a minimal loss of generation\footnote{An example language that cannot be generated with an $\doteq$-acyclic OPM is the following: 
$\mathcal{L} = \{a^n {(b c)}^n \mid n\geq 0\} \cup \{b^n {(c a)}^n \mid n\geq 0\} \cup \{c^n (a b)^n \mid n\geq 0\} $} power and assume the simplifying assumption of  $\doteq$-acyclicity.
\end{remark}

 An OPA is \emph{deterministic} when $I$ is a singleton and $\delta_{\text{push}}(q,a)$ and $\delta_{\text{flush}}(q,p)$
have at most one element, for every $q,p \in Q$ and $a \in \Sigma$.

 An \emph{operator precedence transducer} can be defined in the usual way as a tuple $\mathcal T = \langle \Sigma, M, Q, I, F, O,\delta, \eta \rangle $ where $\Sigma, M, Q, I, F$ are defined as in Definition~\ref{def:OPA}, $O$ is a finite set of output symbols, the transition function $\delta$ and the output function $\eta$ are defined by $\langle \delta, \eta \rangle : Q \times (\Sigma \cup Q) \rightarrow \mathcal{P}_F(Q \times O^*) $, where $\mathcal{P}_F$ denotes the set of finite subsets of $ (Q \times O^*) $, and 
 $\langle \delta, \eta \rangle$ can be seen as the union of two  disjoint functions, 
  $\langle \delta_{\text{push}}, \eta_{\text{push}} \rangle : Q \times \Sigma \rightarrow \mathcal{P}_F(Q \times O^*) $ and $\langle \delta_{\text{flush}},\eta_{\text{flush}} \rangle : Q \times Q \rightarrow \mathcal{P}_F(Q \times O^*) $.
 
 A \textit{configuration} of the transducer is denoted $\configtr{\config{\beta}{w}}{z}$, where $C = \config {\beta} {w}$ is the configuration of the underlying OPA and the string after $\downarrow$ represents the output of the automaton in the configuration. The transition relation $ \vdash$  is naturally extended from OPAs, concatenating the output symbol produced at each move with those generated in the previous moves.
The \textit{transduction} $\tau : I^* \rightarrow \mathcal{P}_F(O^*)$ generated by $\mathcal T $ is defined by
\[\tau(x) = \left\{ z \mid \configtr{\config {\stack 0\#{q_I}}  {x\#}}{\varepsilon}  \comp * 
\configtr {\config {\stack 0\#{q_F}} \#}{z} , q_I \in I, q_F \in F \right\}
\]

\begin{example}\label{ex:db}
As an introductory example, consider a language of queries on a database expressed in relational algebra. We consider a subset of classical operators (union, intersection, selection $\sigma$, projection $\pi$ and natural join $\Join$). Just like mathematical operators, the relational operators have precedences between them: unary operators $\sigma$ and $\pi$ have highest priority, next highest is the $``multiplicative"$ operator $\Join$, lowest are the $``additive"$ operators $\cup$ and $\cap$.

Denote as $T$ the set of tables of the database and, for the sake of simplicity, let $E$ be a set of conditions for the unary operators. The OPA depicted in Figure~\ref{ex:db} accepts the language of queries without parentheses on the alphabet $\Sigma = T\cup$ $ \{\Join, \cup, \cap\} \cup\{\sigma, \pi\} \times E$, where we use letters $A, B, R\ldots$ for elements in $T$ and we write $\sigma_{\text{expr}}$ for a pair $(\sigma, \emph{expr})$ of selection with condition $\emph{expr}$ (similarly for projection $\pi_{\text{expr}}$). The same figure also shows an accepting computation on input $A \cup B \Join C \Join \pi_{\text{expr}} D$. 

Notice that the sentences of this language show the same structure as arithmetic expressions with prioritized operators and without parentheses, which cannot be represented by VPAs due to the particular shape of their OPM~\cite{Crespi-ReghizziM12}. 
\end{example}

\vspace{-0.5cm}

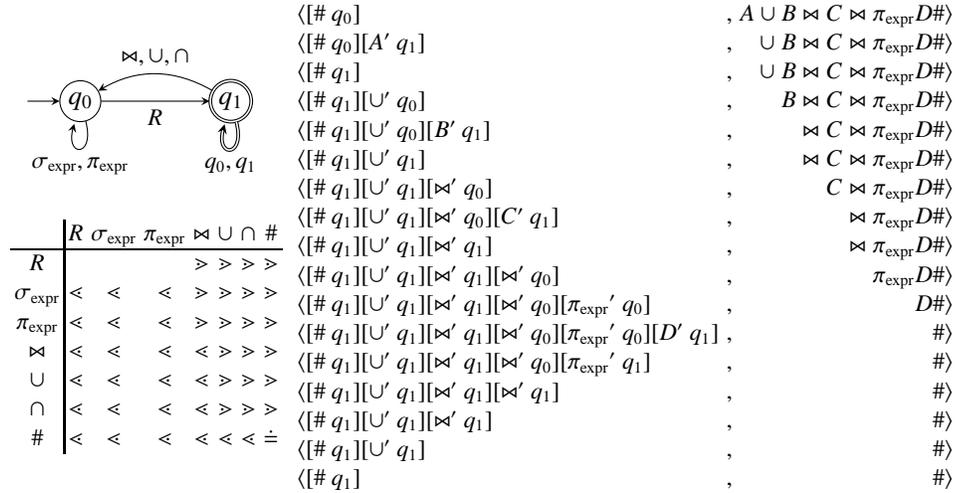
\begin{figure}[h!]
\begin{center}
\begin{tabular}{ll}
\begin{tabular}{l}
\begin{tikzpicture}[every edge/.style={draw,solid}, node distance=4cm, auto, 
                    every state/.style={draw=black!100,scale=0.5}, >=stealth]

\node[initial by arrow, initial text=,state] (S) {{\huge $q_0$}};
\node[state] (E) [right of=S, xshift=0cm, accepting] {{\huge $q_1$}};

\path[->]
(S) edge [loop below] node {$\sigma_{\text{expr}}, \pi_{\text{expr}}$} (S)
(S) edge [below] node {$R$} (E)
(E) edge [bend right, above]  node {$\Join, \cup, \cap$} (S)
(E) edge [loop below, double]  node {$q_0, q_1$} (E) ;
\end{tikzpicture}
\\
\\
$
\begin{array}{c|ccccccc}
      & R & \sigma_{\text{expr}}& \pi_{\text{expr}} & \Join & \cup&\cap & \# \\
\hline
R & & & &\gtrdot &\gtrdot &\gtrdot &\gtrdot \\
\sigma_{\text{expr}} & \lessdot & \lessdot & \lessdot &\gtrdot &\gtrdot &\gtrdot &\gtrdot  \\
\pi_{\text{expr}}   & \lessdot & \lessdot & \lessdot &\gtrdot &\gtrdot &\gtrdot &\gtrdot  \\
\Join  & \lessdot & \lessdot & \lessdot &\lessdot &\gtrdot &\gtrdot&\gtrdot  \\
\cup  & \lessdot & \lessdot & \lessdot &\lessdot &\gtrdot &\gtrdot&\gtrdot \\
\cap  & \lessdot & \lessdot & \lessdot &\lessdot &\gtrdot &\gtrdot&\gtrdot \\
\#       & \lessdot & \lessdot &\lessdot & \lessdot & \lessdot& \lessdot & \dot=  \\
\end{array}
$
\end{tabular}
& 
$
\begin{array}{lcr}
\langle \stack 0\# {q_0}     & , &A \cup B \Join C \Join \pi_{\text{expr}} D\# \rangle \\ 
\langle \stack 0\# {q_0} \stack 1 A {q_1}     & , & \cup \ B \Join C \Join \pi_{\text{expr}} D\# \rangle \\
\langle \stack 0\# {q_1}      & , &\cup \ B \Join C \Join \pi_{\text{expr}} D\# \rangle \\
\langle \stack 0\# {q_1} \stack 1 \cup {q_0}   & , &B \Join C \Join \pi_{\text{expr}} D\#\rangle \\
 \langle \stack 0\# {q_1} \stack 1 \cup {q_0} \stack{1}{B}{q_1}  & , &\Join C \Join \pi_{\text{expr}} D\# \rangle \\
 \langle \stack 0\# {q_1} \stack 1 \cup {q_1}      & , & \Join C \Join \pi_{\text{expr}} D\#\rangle \\
\langle \stack 0\# {q_1} \stack 1 \cup {q_1} \stack{1}{\Join}{q_0}     & ,&C \Join \pi_{\text{expr}} D\# \rangle \\
 \langle \stack 0\# {q_1} \stack 1 \cup {q_1} \stack{1}{\Join}{q_0} \stack 1 C {q_1}    & ,&\Join \pi_{\text{expr}} D\# \rangle \\
 \langle \stack 0\# {q_1} \stack 1 \cup {q_1} \stack{1}{\Join}{q_1}     & ,&\Join \pi_{\text{expr}} D\# \rangle \\
 \langle  \stack 0\# {q_1} \stack 1 \cup {q_1} \stack{1}{\Join}{q_1} \stack{1}{\Join}{q_0}     & ,&\pi_{\text{expr}} D\# \rangle \\
 \langle \stack 0\# {q_1} \stack 1 \cup {q_1} \stack{1}{\Join}{q_1}  \stack{1}{\Join}{q_0} \stack{1}{\pi_{\text{expr}}}{q_0}      & ,&D\# \rangle \\
 \langle  \stack 0\# {q_1} \stack 1 \cup {q_1} \stack{1}{\Join}{q_1} \stack{1}{\Join}{q_0} \stack{1}{\pi_{\text{expr}}}{q_0}\stack 1 D{q_1}   & ,&\# \rangle \\
 \langle \stack 0\# {q_1} \stack 1 \cup {q_1} \stack{1}{\Join}{q_1}  \stack{1}{\Join}{q_0}   \stack{1}{\pi_{\text{expr}}}{q_1}      & ,&\# \rangle \\
 \langle  \stack 0\# {q_1} \stack 1 \cup {q_1} \stack{1}{\Join}{q_1}  \stack{1}{\Join}{q_1}         & ,&\# \rangle \\
 \langle  \stack 0\# {q_1} \stack 1 \cup {q_1} \stack{1}{\Join}{q_1}           & ,&\# \rangle \\
 \langle \stack 0\# {q_1} \stack 1 \cup {q_1}  & ,&\# \rangle \\
 \langle \stack 0\# {q_1}   & ,&\# \rangle \\
\end{array}
$
\end{tabular}
\caption{Automaton, precedence matrix and example of computation for language of Example~\ref{ex:db}.}\label{fig:db}
\end{center}
\end{figure} 

\vspace{-0.7cm}
\noindent Let $(\Sigma, M)$ be a precedence alphabet.

\begin{definition}
A \emph{simple chain} is a word $a_0 a_1 a_2 \dots a_n a_{n+1}$,
written as
$
\chain {a_0} {a_1 a_2 \dots a_n} {a_{n+1}},
$
such that:
$a_0, a_{n+1} \in \Sigma \cup \{\#\}$, 
$a_i \in \Sigma$ for every $i: 1\leq i \leq n$, 
$M_{a_0 a_{n+1}} \neq \emptyset$,
and $a_0 \lessdot a_1 \doteq a_2 \dots a_{n-1} \doteq a_n \gtrdot a_{n+1}$.

A \emph{composed chain} is a word 
$a_0 x_0 a_1 x_1 a_2  \dots a_n x_n a_{n+1}$, 
where
$\chain {a_0}{a_1 a_2 \dots a_n}{a_{n+1}}$ is a simple chain, and
either $x_i = \varepsilon$ or $\chain {a_i} {x_i} {a_{i+1}}$ is a chain (simple or composed),
for every $i: 0\leq i \leq n$. 
Such a composed chain will be written as
$\chain {a_0} {x_0 a_1 x_1 a_2 \dots a_n x_n} {a_{n+1}}$.

A word $w$ over $(\Sigma, M)$ is \emph{compatible} with $M$ iff a)
for each pair of letters $c, d$, consecutive in $w$, $M_{cd} \neq \emptyset$, and b)
for each factor (substring) $x$ of  $\# w \#$
such that $x = a_0 x_0 a_1 x_1 a_2\dots$ $ a_n x_n a_{n+1}$
where
$a_0 \lessdot a_1 \doteq a_2 \dots a_{n-1} \doteq a_n \gtrdot a_{n+1}$ and,
for every $ 0\leq i \leq n$, either $x_i = \varepsilon$ or
 $\chain {a_i} {x_i} {a_{i+1}}$ is a chain (simple or composed), 
 $M_{a_0 a_{n+1}} \neq \emptyset$.
\end{definition}

\begin{definition}
Let $\mathcal A$ be an operator precedence automaton.
A \emph{support} for the simple chain
$\chain {a_0} {a_1 a_2 \dots a_n} {a_{n+1}}$
is any path in $\mathcal A$ of the form
\begin{equation}
\label{eq:simplechain}
\va{a_0}{q_0}
\va{a_1}{q_1}
\va{}{}
\dots
\va{}q_{n-1}
\va{a_{n}}{q_n}
\flush{q_0} {q_{n+1}}
\end{equation}
Notice that the label of the last (and only) flush is exactly $q_0$, i.e. the first state of the path; this flush is executed because of relations $a_0 \lessdot a_1$ and $a_n
\gtrdot a_{n+1}$.

\noindent A \emph{support for the composed chain} 
$\chain {a_0} {x_0 a_1 x_1 a_2 \dots a_n x_n} {a_{n+1}}$
is any path in $\mathcal A$ of the form
\begin{equation}
\label{eq:compchain}
\va{a_0}{q_0}
\ourpath{x_0}{q'_0}
\va{a_1}{q_1}
\ourpath{x_1}{q'_1}
\va{a_2}{}
\dots
\va{a_n} {q_n}
\ourpath{x_n}{q'_n}
\flush{q'_0}{q_{n+1}}
\end{equation}
where, for every $i: 0\leq i \leq n$: 
\begin{itemize}
\item if $x_i \neq \varepsilon$, then $\va{a_i}{q_i} \ourpath{x_i}{q'_i} $ 
is a support for the chain $\chain {a_i} {x_i} {a_{i+1}}$, i.e.,
it can be decomposed as $\va{a_i}{q_i}\ourpath{x_i}{q''_i} \flush{q_i}{q'_i}$.

\item if $x_i = \varepsilon$, then $q'_i = q_i$.
\end{itemize}

\noindent Notice that the label of the last flush is exactly $q'_0$.
\end{definition}

The chains fully determine the structure of the parsing of any
automaton on a word compatible with $M$, and hence the structure of the syntax tree of the word.
Indeed, if the automaton performs the computation
$ 
\config{\gamma \stack 0 a {q_0}} {x b y} \comp * 
\config{\gamma \stack 0 a {q}} {b y}
$ on a factor $a x b$ (with $\gamma \in \Gamma^*, y \in \Sigma^* \#$),
then $\chain axb$ 
is necessarily a chain over $(\Sigma, M)$ and there exists a support
like \eqref{eq:compchain} with $x = x_0 a_1 \dots a_n x_n$ and $q_{n+1} = q$.

\section{Operator precedence $\omega$-languages and automata}
\label{sec:omegaLan}

Let us now generalize operator precedence automata to deal with words of infinite length and to model nonterminating computations.

Traditionally, $\omega$-automata have been classified on the basis of the acceptance condition of infinite words they are equipped with. All acceptance conditions refer to the occurrence of states which are visited in a computation of the automaton, and they generally impose constraints on those states that are encountered infinitely (or also finitely) often during a run.
Classical notions of acceptance (introduced by B\"{u}chi~\cite{bib:Buchi1962}, Muller~\cite{bib:Muller1963a}, Rabin~\cite{bib:Rabin1972a}, Streett~\cite{bib:Streett1982a}) can be naturally adapted to $\omega$-automata for operator precedence languages and can be characterized according to a peculiar acceptance component of the automaton on $\omega$-words. We first introduce the model of nondeterministic B\"{u}chi-operator precedence $\omega$-automata with acceptance by final state; other models are presented in Section~\ref{sec:otherModels}.

As usual, we denote by $\Sigma^\omega$ the set of infinite-length words over $\Sigma$. Thus, the symbol $\#$ occurs only at the beginning of an $\omega$-word.
Given a precedence alphabet $(\Sigma, M)$, the definition of an $\omega$-word compatible with the OPM $M$ and the notion of syntax tree of an infinite-length word are the natural extension of these concepts for finite strings.


\begin{definition}
\label{def:wFA-buchi}
A \emph{nondeterministic  B\"{u}chi-operator precedence $\omega$-automaton} (\bfa) is given by a tuple $\mathcal A = \langle \Sigma, M, Q, I, F, \delta \rangle $, where $\Sigma, Q, I, F, \delta$ are defined as for OPAs; the operator precedence matrix $M$ is restricted to be a $|\Sigma \cup \{ \# \}| \times |\Sigma|$ array, since $\omega$-words are not terminated by the delimiter \#.
\end{definition}

\emph{Configurations} and \emph{(infinite) runs} are defined as for operator precedence automata on finite-length words.
Then, let ``$\exists^\omega i$" be a shorthand for ``there exist infinitely many i" and let $\mathcal{S}$ be a run of the automaton on a given word $x \in \Sigma^\omega$. Define $
In(\mathcal S) = \{ q \in Q \mid \exists^\omega i  \ \config {\beta_i} {x_i} \in \mathcal{S}$ $\textit{with} \state{{\beta}_i}=q \}
$ 
as the set of states that occur infinitely often at the top of the stack of configurations in $\mathcal{S}$. 
A  run $\mathcal S$ of an \bfa \ on an infinite word $x \in \Sigma^\omega$ is \emph{successful} 
iff there exists a state $q_f \in F$ such that $q_f \in In(\mathcal S)$.
$\mathcal A$ \emph{accepts} $x \in \Sigma^\omega$ iff there is a successful run of $\mathcal{A}$ on x.
Furthermore, let the $\omega$-language \emph{recognized} by $\mathcal{A}$ be 
$
L(\mathcal A) = \left\{ x \in \Sigma^\omega \mid \text{$\mathcal{A}$ accepts x} \right\}
$.

Operator precedence \emph{$\omega$-transducers} are defined in the natural way as for finite-length words.

\subsection{Some examples}\label{sec:examples}

\begin{example}
\label{ex:BFA}
Consider a software system which is supposed to work forever and may serve interrupt requests issued by different users. 
The system can manage three types of interrupts with different levels of priority, that affect the order by which they are served by the system: pending lower priority interrupts are postponed in favor of higher priority ones.

This policy can be naturally specified by defining an alphabet of letters for ordinary procedures and for interrupt symbols, and by formalizing the priority level among the interrupt requests as OP relationships in the precedence matrix of an operator precedence automaton on infinite-length words: an interrupt yields precedence ($\lessdot$) to higher priority ones, which will be handled first, and takes precedence ($\gtrdot$) on lower priority requests, whose processing is then suspended.
Figure~\ref{fig:BFA} shows an \bfa\ with acceptance condition by final state which models the behavior of a system which may execute two functions denoted $a$ and $b$, that may be suspended by interrupts of types $int_0, int_1$ and $int_2$ with increasing level of priority. Calls and returns of the procedures are denoted $call_a, call_b, ret_a, ret_b$.
A request is actually served as soon as the corresponding interrupt symbol is flushed from the top of the stack.
Figure~\ref{fig:BFA} also presents the precedence matrix and an example computation of the system for the infinite string $call_a call_b ret_b call_b int_1 int_2 int_0 ret_b \ldots$

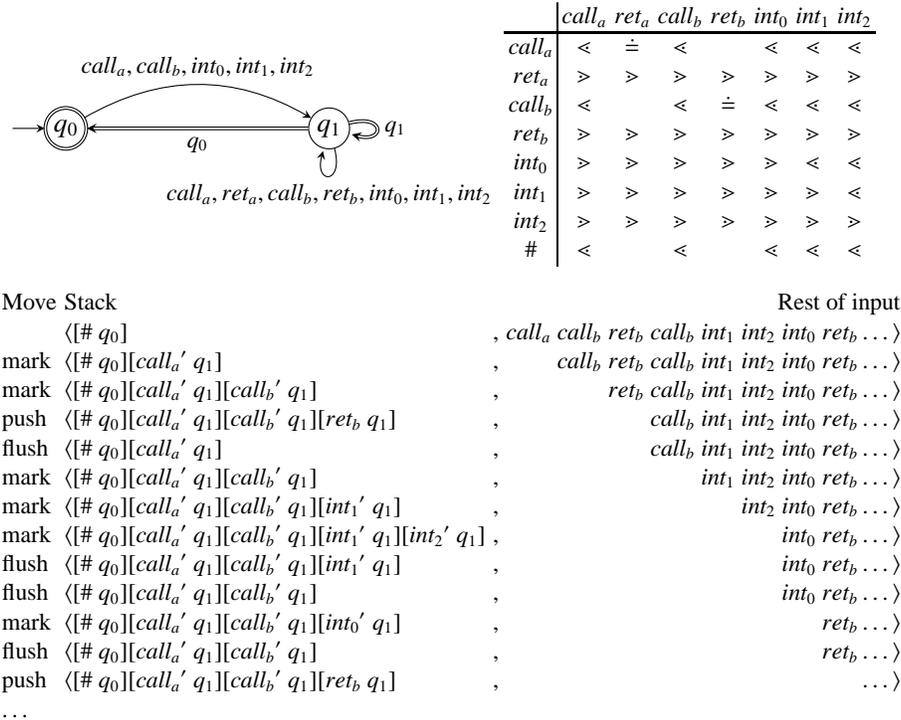
\begin{figure}[h!]
\begin{tabular}{m{.52\textwidth}m{.5\textwidth}} 
\small
\begin{tikzpicture}[every edge/.style={draw,solid}, node distance=4cm, auto, 
                    every state/.style={draw=black!100,scale=0.5}, >=stealth]

\node[initial by arrow, initial text=,state,accepting] (S) {{\huge $q_0$}};
\node[state] (E) [right of=S, xshift=3cm] {{\huge $q_1$}};

\path[->]
(S) edge [bend left]  node {$call_a, call_b, int_0, int_1, int_2$} (E)
(E) edge [loop right, double] node {$q_1$} (E)
(E) edge [loop below] node {$call_a, ret_a, call_b, ret_b, int_0, int_1, int_2$} (E)
(E) edge [below, double]  node {$q_0$} (S) ;
\end{tikzpicture}
&\quad
$
\begin{array}{c|ccccccc}
      & call_a & ret_a & call_b & ret_b & int_0 & int_1 & int_2   \\
\hline
call_a & \lessdot & \dot= & \lessdot &  & \lessdot & \lessdot & \lessdot \\
ret_a  & \gtrdot & \gtrdot & \gtrdot  & \gtrdot & \gtrdot & \gtrdot & \gtrdot  \\
call_b & \lessdot & & \lessdot & \dot=  & \lessdot & \lessdot & \lessdot  \\
ret_b  & \gtrdot & \gtrdot & \gtrdot  & \gtrdot & \gtrdot & \gtrdot & \gtrdot \\
int_0  & \gtrdot & \gtrdot & \gtrdot  & \gtrdot & \gtrdot & \lessdot & \lessdot  \\
int_1  & \gtrdot & \gtrdot & \gtrdot  & \gtrdot & \gtrdot & \gtrdot & \lessdot  \\
int_2  & \gtrdot & \gtrdot & \gtrdot  & \gtrdot & \gtrdot & \gtrdot & \gtrdot \\
\#       & \lessdot & & \lessdot & & \lessdot & \lessdot & \lessdot \\
\end{array}
$
\end{tabular}
\\
\begin{center}
$
\begin{array}{llcr}
\text{Move} & \text{Stack} & & \text{Rest of input}\\
 & \langle \stack 0\# {q_0}     & , &  
  call_a \ call_b \ ret_b \ call_b \ int_1 \  int_2 \ int_0 \ ret_b \dots \rangle \\ 
\text{mark} & \langle \stack 0\# {q_0} \stack 1 {call_a} {q_1}     & , &      call_b \ ret_b \ call_b \ int_1 \ int_2 \ int_0 \ ret_b\dots \rangle \\
\text{mark} & \langle \stack 0\# {q_0} \stack 1 {call_a} {q_1} \stack 1 {call_b} {q_1}     & , &     ret_b \ call_b \ int_1 \ int_2 \ int_0 \ ret_b \dots \rangle \\
\text{push} & \langle \stack 0\# {q_0} \stack 1 {call_a} {q_1} \stack 1 {call_b} {q_1} \stack 0 {ret_b} {q_1}     & , &       call_b \ int_1 \ int_2 \ int_0 \ ret_b \dots \rangle \\
\text{flush} & \langle \stack 0\# {q_0} \stack 1 {call_a} {q_1}      & , &      call_b \ int_1 \ int_2 \ int_0 \ ret_b  \dots \rangle \\
\text{mark} & \langle \stack 0\# {q_0} \stack 1 {call_a} {q_1} \stack 1 {call_b} {q_1}     & , &     \ int_1 \ int_2 \ int_0 \ ret_b \dots \rangle \\
\text{mark} & \langle \stack 0\# {q_0} \stack 1 {call_a} {q_1} \stack 1 {call_b} {q_1}  \stack{1}{int_1}{q_1}   & , &     int_2 \ int_0 \ ret_b  \dots \rangle \\
\text{mark} & \langle \stack 0\# {q_0} \stack 1 {call_a} {q_1} \stack 1 {call_b} {q_1}  \stack{1}{int_1}{q_1}  \stack{1}{int_2}{q_1}  & , &    int_0 \ ret_b \dots \rangle \\
\text{flush} & \langle \stack 0\# {q_0} \stack 1 {call_a} {q_1} \stack 1 {call_b} {q_1}  \stack{1}{int_1}{q_1}    & , &    int_0 \ ret_b \dots \rangle \\
\text{flush} & \langle \stack 0\# {q_0} \stack 1 {call_a} {q_1} \stack 1 {call_b} {q_1}   & , &    int_0 \ ret_b  \dots \rangle \\
\text{mark} & \langle \stack 0\# {q_0} \stack 1 {call_a} {q_1} \stack 1 {call_b} {q_1}  \stack{1}{int_0}{q_1}   & , &     ret_b \dots \rangle \\
\text{flush} & \langle \stack 0\# {q_0} \stack 1 {call_a} {q_1} \stack 1 {call_b} {q_1}     & , &   ret_b  \dots \rangle \\
\text{push} & \langle \stack 0\# {q_0} \stack 1 {call_a} {q_1} \stack 1 {call_b} {q_1}  \stack{0}{ret_b}{q_1}    & , &     \dots \rangle \\
\dots & & \\
\end{array}
$
\end{center}
\caption{Automaton, precedence matrix and example of computation for language of Example~\ref{ex:BFA}.}\label{fig:BFA}
\end{figure}
\end{example}

Several variations of the above policy can be specified as well by
similar \bfa s; e.g., we might wish to formalize that high
priority interrupts flush pending calls, whereas lower priority ones
let the system resume serving pending calls once the interrupt has
been served. We might also introduce an explicit symbol to formalize
the end of serving an interrupt and specify that some events are
disabled while serving interrupts with a given priority, etc.

\begin{example}\label{ex:version}
  Operator precedence automata on infinite-length words can also be
  used to model the run-time behavior of database systems, e.g., for
  modeling sequences of users' transactions with possible rollbacks.
  Other systems that exhibit an analogous behavior are revision
  control (or {\em versioning}) systems (such as subversion or git).
  As an example, consider a system for version management of files
  where a user can perform the following operations on documents: save
  them, access and modify them, undo one (or more) previous changes,
  restoring the previously saved version.

  The following alphabet represents the user's actions: $sv$ (for {\em
    save}), $wr$ (for {\em write}, i.e. the document is opened and
  modified), $ud$ (for a single {\em undo} operation), $rb$ (for a
  {\em rollback} operation, where all the changes occurred since the
  previously saved version are discarded.

An \bfa \ which models the traces of possible actions of the user on a given document is a single-state automaton $\langle \Sigma, M, \{q\}, \{q\}, \{q\}, \delta \rangle $, where $\Sigma = \{sv, rb, wr, ud\}$, $\delta_{\text{push}}(q, a) = q, \forall a \in \Sigma$ and $\delta_{\text{flush}}(q, q) = q$ and its OPM  is:
\[
M =
\begin{array}{c|cccc}
      & sv & rb & wr & ud   \\
\hline
sv & \lessdot & \dot= & \lessdot & \\
rb  & \gtrdot & \gtrdot & \gtrdot  & \gtrdot \\
wr & \lessdot & \gtrdot & \lessdot & \dot=    \\
ud  & \gtrdot & \gtrdot & \gtrdot  & \gtrdot \\
\#       & \lessdot & & \lessdot & \\
\end{array}
\]

Furthermore, one can even consider some specialized models of this
system, that represent various patterns of user behavior.  For
instance, one in which the user regularly backs her work up, so
that no more than $N$ changes which are not undone (denoted $wr$ as
before) can occur between any two consecutive checkpoints $sv$
(without any rollback $rb$ between
them). Figure~\ref{fig:wBFAversionRef} shows the corresponding \bfa \
with $N = 2$, with the same OPM $M$.

\begin{figure}[h!]
\centering
\begin{tikzpicture}[scale=1.2, transform shape,
                    every edge/.style={draw,solid}, node distance=4cm, auto, 
                    every state/.style={draw=black!100,scale=0.5}, >=stealth]

\node[initial by arrow, initial text=,state] (q0) {{\huge $q_0$}};
\node[state] (0) [right of=q0, xshift=0cm] {{\huge $0$}};
\node[state] (1) [right of=0, xshift=0cm] {{\huge $1$}};
\node[state] (2) [right of=1, xshift=0cm] {{\huge $2$}};
\node[state] (q1) [below of=0, xshift=0cm] {{\huge $q_1$}};
\node[state] (q2) [right of=q1, xshift=0cm] {{\huge $q_2$}};
\node[state] (q3) [right of=q2, xshift=0cm] {{\huge $q_3$}};
\node[state] (q4) [above of=1, xshift=0cm] {{\huge $q_4$}};

\path[->]
(q0) edge [ pos=0.5]  node[midway, fill=white]{$sv$} (0) 
(0) edge [ left]  node[midway, fill=white] {$wr$} (1) 
(0) edge [bend right, above]  node[ sloped, pos=0.3]{$rb$} (q1)
(0) edge [ bend left, above ]  node[sloped, midway]{$wr$} (q4) 
(0) edge [loop above] node {$sv$} (0)
(1) edge [bend right, above] node[midway, fill=white] {$wr$} (q4)
(1) edge [bend right, right] node[sloped,  fill=white] {$ud$} (q1)
(1) edge [bend right, right] node[sloped, fill=white] {$sv$} (0)
(1) edge [ left]  node[midway, fill=white] {$wr$} (2)
(1) edge [above, double]  node[pos=0.36, fill=white] {$0$} (q2)
(2) edge [bend right, above] node[sloped, pos=0.5] {$wr$} (q4)
(2) edge [ below, out=0, in=-90, controls=+(-50:3.5) and +(-50:3)] node[sloped, pos=0.6] {$ud$} (q1)
(2) edge [bend right, left] node[sloped, pos=0.53, fill=white] {$sv$} (0)
(2) edge [right, double]  node[midway] {$1$} (q3)
(q4) edge [loop left] node {$wr,ud$} (q4)
(q4) edge [loop above, double] node {$q_4$} (q4)
(q4) edge [ left, double]  node[pos=0.35] {$0$} (0)
(q4) edge [ left, double]  node[pos=0.35] {$1$} (1)
(q4) edge [ right, double ]  node[pos=0.35]{$2$} (2)
(q1) edge [ above, double]  node[midway, fill=white] {$0$} (0)
(q1) edge [ left, double]  node[midway] {$1$} (1)
(q1) edge [ below, double]  node[pos=0.3] {$2$} (2)
(q1) edge [ left, double]  node[midway] {$q_0$} (q0)
(q2) edge [ below]  node[midway] {$rb$} (q1)
(q3) edge [ below, double]  node[midway] {$0$} (q2);
\end{tikzpicture}\vspace{-0.5cm}
\caption{\bfa\ of Example~\ref{ex:version}, with $N=2$.}\label{fig:wBFAversionRef}
\end{figure}

\noindent States $0, 1$ and $2$ denote respectively the presence of zero, one and two unmatched changes between two symbols $sv$. All states of the \bfa\ final.

An example of computation on the string $sv\ wr\ ud\ rb\ sv\ wr\ wr\ ud\ sv\ wr\ rb\ wr\ sv\dots$ is shown in Figure~\ref{fig:wBFAversionRefComp}.
\begin{figure}[h!]
\begin{center}
$
\begin{array}{llcr}
\text{Move} & \text{Stack} & & \text{Rest of input}\\
 & \langle \stack 0\# {q_0}     & , &  
  sv\ wr\ ud\ rb\ sv\ wr\ wr\ ud\ sv\ wr\ rb\ wr\ sv \dots \rangle \\ 
\text{mark} & \langle \stack 0\# {q_0} \stack 1 {sv} {0}     & , &       wr\ ud\ rb\ sv\ wr\ wr\ ud\ sv\ wr\ rb\ wr\ sv \dots \rangle \\
\text{mark} & \langle \stack 0\# {q_0} \stack 1 {sv} {0} \stack 1 {wr} {1}     & , &      ud\ rb\ sv\ wr\ wr\ ud\ sv\ wr\ rb\ wr\ sv\dots\rangle \\
\text{push} & \langle \stack 0\# {q_0}  \stack 1 {sv} {0} \stack 1 {wr} {1}  \stack 0 {ud} {q_1}     & , &        rb\ sv\ wr\ wr\ ud\ sv\ wr\ rb\ wr\ sv\dots\rangle \\
\text{flush} & \langle \stack 0\# {q_0} \stack 1 {sv} {0}       & , &       rb\ sv\ wr\ wr\ ud\ sv\ wr\ rb\ wr\ sv\dots \rangle \\
\text{push} & \langle \stack 0\# {q_0} \stack 1 {sv} {0}  \stack 0 {rb} {q_1}     & , &    sv\ wr\ wr\ ud\ sv\ wr\ rb\ wr\ sv \dots \rangle \\
\text{flush} & \langle \stack 0\# {q_0}   & , &   sv\ wr\ wr\ ud\ sv\ wr\ rb\ wr\ sv\dots\rangle \\
\text{mark} & \langle \stack 0\# {q_0} \stack 1 {sv} {0}     & , &        wr\ wr\ ud\ sv\ wr\ rb\ wr\ sv \dots \rangle \\
\text{mark} & \langle \stack 0\# {q_0} \stack 1 {sv} {0} \stack 1 {wr} {1}     & , &       wr\ ud\ sv\ wr\ rb\ wr\ sv\dots\rangle \\
\text{mark} & \langle \stack 0\# {q_0} \stack 1 {sv} {0} \stack 1 {wr} {1}  \stack 1 {wr} {q_4}    & , &        ud\ sv\ wr\ rb\ wr\ sv\dots\rangle \\
\text{push} & \langle \stack 0\# {q_0} \stack 1 {sv} {0} \stack 1 {wr} {1}  \stack 1 {wr} {q_4}  \stack 0 {ud} {q_4}  & , &        sv\ wr\ rb\ wr\ sv\dots\rangle \\
\text{flush} & \langle \stack 0\# {q_0} \stack 1 {sv} {0}   \stack 1 {wr} {1}   & , &       sv\ wr\ rb\ wr\ sv \dots\rangle \\
\text{mark} & \langle \stack 0\# {q_0} \stack 1 {sv} {0}   \stack 1 {wr} {1} \stack 1 {sv} {0}    & , &        wr\ rb\ wr\ sv \dots\rangle \\
\text{mark} & \langle \stack 0\# {q_0} \stack 1 {sv} {0}   \stack 1 {wr} {1}  \stack 1 {sv} {0} \stack 1 {wr} {1}    & , &        rb\ wr\ sv \dots\rangle \\
\text{flush} & \langle \stack 0\# {q_0} \stack 1 {sv} {0}   \stack 1 {wr} {1}  \stack 1 {sv} {q_2}   & , &        rb\ wr\ sv \dots\rangle \\
\text{push} & \langle \stack 0\# {q_0} \stack 1 {sv} {0}   \stack 1 {wr} {1}  \stack 1 {sv} {q_2} \stack 0 {rb} {q_1}    & , &         wr\ sv\dots\rangle \\
\text{flush} & \langle \stack 0\# {q_0} \stack 1 {sv} {0}   \stack 1 {wr} {1}    & , &      wr\ sv\dots\rangle \\
\text{mark} & \langle \stack 0\# {q_0} \stack 1 {sv} {0}   \stack 1 {wr} {1}  \stack 1 {wr} {2}    & , &        sv\dots\rangle \\
\text{mark} & \langle \stack 0\# {q_0} \stack 1 {sv} {0}   \stack 1 {wr} {1}  \stack 1 {wr} {2} \stack 1 {sv} {0}    & , &      \dots\rangle \\
\dots & & \\
\end{array}
$
\end{center}
\caption{Example of computation for the specialized system of Example~\ref{ex:version}}\label{fig:wBFAversionRefComp}
\end{figure}
\end{example}

\subsection{Operator precedence $\omega$-languages and visibly pushdown $\omega$-languages}

Classical families of automata, like Visibly Pushdown
Automata~\cite{jacm/AlurM09}, imply several restrictions that hinder
them from being able to deal with the concept of precedence among
symbols. These restrictions make them unsuitable to define systems
like those of Section~\ref{sec:examples}, and in general all paradigms
based on a model of priorities.

 Noticeably, VPAs on infinite-length words are significantly
extended by the class of OPAs, since VPAs introduce a rigid
partitioning on the alphabet symbols which heavily constrains the
possible relationships among them: any letter cannot assume a role
dependent on the context (as an interrupt which can yield or take
precedence over another one depending on the mutual priority), and
this restriction has some consequences on their expressive power w.r.t
\ofl s.  Actually, as it happens for finite-word
languages~\cite{Crespi-ReghizziM12,LonatiMandrioliPradella2011a}, one
can prove the following result.

\begin{theorem}
The class of languages accepted by $\omega${\rm BVPA} (\emph{nondeterministic B\"uchi visibly pushdown $\omega$-automata}) 
is a proper subset of that accepted by \bfa.
\end{theorem}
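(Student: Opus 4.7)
My plan is to prove both inclusion and strictness.

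For the inclusion, I would lift the finite-word construction from \cite{Crespi-ReghizziM12} which shows $\vpl \subsetneq \opl$ to the Büchi $\omega$-setting. Given a nondeterministic $\bvpa$ $\mathcal V$ over a partitioned alphabet $\Sigma = \Sigma_c \cup \Sigma_r \cup \Sigma_i$ (calls, returns, internals), I would equip $\Sigma$ with the \emph{visibly-pushdown OPM} $M_{VP}$: every call yields precedence ($\lessdot$) to every call and every internal symbol; every call is $\doteq$-related to every return; every return or internal symbol takes precedence ($\gtrdot$) on every return or internal symbol; and $\# \lessdot \sigma$ for all $\sigma$. Reusing the VPA state set and the same final-state set, I would translate each VPA push on a call to an OPA mark move, each VPA internal move to an OPA push move, and each VPA pop on a return to an OPA flush move. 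The key invariant is that after each simulated VPA step, the current VPA state appears at the top of the simulating OPA stack; therefore the set of states visited infinitely often by $\mathcal V$ coincides with $In(\mathcal S)$ for the simulating \bfa\ run $\mathcal S$, so the Büchi acceptance conditions match exactly.

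For strictness, I would reuse the separating witness from the finite-word case: the language $L_0$ of prioritized arithmetic expressions without parentheses (the finite-word analogue of Example~\ref{ex:db}) is an \opl\ but provably not a \vpl, because its natural OPM has both $\lessdot$ and $\gtrdot$ between symbols that would be forced into the same VPA partition class. To lift this to $\omega$-words, introduce a fresh symbol $c$ with $a \gtrdot c$ for every $a \in \Sigma$, $c \doteq c$, $c \gtrdot \#$ (not needed since we work on $\omega$-words) and $\# \lessdot c$, and define $L_\omega := L_0 \cdot c^\omega$. Combining the finite-word OPA for $L_0$ with a single-state final loop on $c$ yields an \bfa\ accepting $L_\omega$, so $L_\omega$ is an \ofl. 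Assuming for contradiction that some $\bvpa$ $\mathcal V$ recognizes $L_\omega$, I would slice runs of $\mathcal V$ at the first occurrence of $c$ and use the VPA property that stack height is determined by the input's call/return structure to extract a finite-word VPA for $L_0$, contradicting $L_0 \notin \vpl$.

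The main obstacle I expect is the rigor of the slicing argument in the strictness part. A naive ``truncate at the first $c$'' construction is not immediate because the $\omega$-run may carry information in the stack that is essential for infinite acceptance, and because the classification of $c$ as call, return, or internal in the alleged VPA is not under our control. I would handle this by exploiting the fact that, in a \vpa, the set of reachable stack configurations after a fixed input prefix is completely fixed by that prefix's call/return skeleton; regardless of how the opponent classifies $c$, enumerating the finitely many stack-top states reachable just before the first $c$ and saturating them as final in a finite-word VPA produces an automaton equivalent to $L_0$, delivering the desired contradiction.
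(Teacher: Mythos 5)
Your overall architecture (inclusion by re-encoding the \bvpa\ with the ``VPA-shaped'' OPM, strictness by exhibiting a witness) is reasonable, and the inclusion half is essentially the standard lifting and should go through, though your OPM is incomplete as stated: you never specify the entries from returns/internals to calls (you need $r \gtrdot c$ and $i \gtrdot c$ for consecutive matched pairs and for internals preceding calls), and you should check that pending returns at the start of the word, which a VPA handles via its bottom stack symbol, are simulated by singleton chains $\# \lessdot r \gtrdot \cdot$. Note also that the paper does not prove this theorem in-text (it defers to the cited thesis); its intended strictness witnesses are \emph{native} $\omega$-languages with many-to-one matching, such as the interrupt language of Example~2 or the version-management language of Example~3, where a single symbol must close an unbounded number of open ones --- an OP structure incompatible with any call/return/internal partition. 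That route avoids entirely the quotient-by-$c^\omega$ machinery you introduce.

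Your strictness argument has two genuine problems. First, a small one: $c \doteq c$ is a $\doteq$-cycle of length one and violates the paper's standing $\doteq$-acyclicity assumption; use $c \gtrdot c$ (singleton chains) or $c \lessdot c$ instead. Second, and more seriously, the slicing step is wrong as written. The claim that ``the set of reachable stack configurations after a fixed input prefix is completely fixed by that prefix's call/return skeleton'' is false --- only the stack \emph{height} is determined; the stack \emph{contents} depend on the nondeterministic run. Consequently, ``enumerating the finitely many stack-top states reachable just before the first $c$ and saturating them as final'' does not yield a VPA for $L_0$: if the adversary classifies $c$ as a return, the run on $c^\omega$ from configuration $(q,\sigma)$ pops the entire stack $\sigma$ before settling into its eventual (Büchi-relevant) behaviour on the bottom symbol, so acceptance depends on the whole unbounded stack content, not on the top symbol or the control state alone. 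A correct reduction exists but needs a summary construction (maintain on the VPA stack the function mapping each state to the state reached after popping everything below with $c$'s); alternatively, and more in the spirit of the paper, pick a witness like the interrupt language and argue directly, by counting matched occurrences against the fixed partition of the alphabet, that no \bvpa\ accepts it.
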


The behavior of version management systems like those in Example~\ref{ex:version} too cannot be modeled by $\omega$VPAs since the shape of their matrix allows only one-to-one relationships between matching symbols (as do-undo actions on a single change, denoted $wr$ and $ud$), whereas the return to a previous version, undoing all the possible sequence of changes performed in the meanwhile, is represented by a many-to-one relationship (holding among symbols $wr$ and a single $rb$).

\subsection{Other automata models for operator precedence $\omega$-languages}
\label{sec:otherModels}

There are several possibilities to define other classes of $\omega$-languages. 
In order to do that we introduce the following general definition.

\begin{definition}
\label{def:wFA}
A \emph{nondeterministic operator precedence $\omega$-automaton} (\ofa) is given by a tuple $\mathcal A = \langle \Sigma, M, Q, I, \mathcal{F}, \delta \rangle $, where $\Sigma, Q, I, \delta$ are defined as for OPAs; the operator precedence matrix $M$ is restricted to be a $|\Sigma \cup \{ \# \}| \times |\Sigma|$ array, since $\omega$-words are not terminated by the delimiter \#; $\mathcal{F}$ is an acceptance component, distinctive of the class (B\"{u}chi, Muller,\ldots) the automaton belongs to.
\emph{Deterministic} \ofa\ are specified as for operator precedence automata on finite-length words.
\end{definition}

A run is {\em successful} if it satisfies an acceptance condition on $\mathcal{F}$ based on a specific recognizing mode.
\noindent $\mathcal A$ \emph{accepts} $x \in \Sigma^\omega$ iff there is a successful run of $\mathcal{A}$ on x.
Furthermore, let the $\omega$-language \emph{recognized} by $\mathcal{A}$ be 
$
L(\mathcal A) = \left\{ x \in \Sigma^\omega \mid \text{$\mathcal{A}$ accepts x} \right\}
$.

When $\mathcal F$ is a subset $F \subseteq Q$, Definition~\ref{def:wFA} leads to Definition~\ref{def:wFA-buchi} of B\"{u}chi-operator precedence $\omega$-automaton;
\bfae\ is a variant of \bfa\ obtained when using the following acceptance condition: a word is recognized if the automaton traverses final states with an \empty{empty stack} infinitely often. Formally, a run $\mathcal S$ of an \bfae\ is successful 
iff there exists a state $q_f \in F$ such that configurations with stack $\stack 0\#{q_f}$ occur infinitely often in $\mathcal S$.

\begin{proposition}
\label{prop:BFAvsBFAE}
 \lof{\bfae} $\subset$ \lof{\bfa}.
\end{proposition}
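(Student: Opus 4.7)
The plan is to establish the inclusion by a state-augmentation construction, and the strict separation by exhibiting a concrete $\omega$-language.

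For \lof{\bfae} $\subseteq$ \lof{\bfa}, given a \bfae\ $\mathcal{A} = \langle \Sigma, M, Q, I, F, \delta \rangle$, I would construct a \bfa\ $\mathcal{A}'$ over the same OPM with state set $Q \times \{0, 1\}$, where the Boolean flag records whether the corresponding stack element is the bottom marker $\#$. Push and mark moves produce new top elements with flag $0$ (the new top is never $\#$), while flush moves inherit the flag $c$ already attached to the element being revealed: $\delta'_{\text{push}}((q, b), a) = \delta_{\text{push}}(q, a) \times \{0\}$ and $\delta'_{\text{flush}}((q, b), (p, c)) = \delta_{\text{flush}}(q, p) \times \{c\}$, with initial set $I \times \{1\}$ and $F' = F \times \{1\}$. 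An easy induction on the length of runs shows that $(q_f, 1)$ appears at the top of $\mathcal{A}'$'s stack in a given configuration exactly when the corresponding configuration of $\mathcal{A}$ has stack $\stack 0\# {q_f}$, so the B\"uchi condition on $\mathcal{A}'$ captures exactly the empty-stack B\"uchi condition on $\mathcal{A}$, giving $L(\mathcal{A}') = L(\mathcal{A})$.

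For strictness I would take $L = \{a^\omega\} \cup L_0^\omega$ with $L_0 = \{a^n b^n \mid n \geq 1\}$, over $\Sigma = \{a, b\}$. Adopting the OPM $\# \lessdot a$, $\# \lessdot b$, $a \lessdot a$, $a \doteq b$, $b \gtrdot b$, $b \gtrdot a$, a nondeterministic \bfa\ can recognise $L$: one branch keeps marking $a$'s with a looping final state at the growing top of stack (for $a^\omega$), while another branch parses each $a^n b^n$ as a chain that returns the stack to $\stack 0\# {q_f}$ between blocks (for $L_0^\omega$). To show $L \notin$ \lof{\bfae}, I would case-split on $M_{aa}$: if $M_{aa} \in \{\lessdot, \doteq\}$, the $M$-parsing of $a^\omega$ never fires a flush, the stack grows monotonically, and the configuration $\stack 0\# {q_f}$ never recurs, contradicting the \bfae\ condition on $a^\omega$; if $M_{aa} = \gtrdot$, then no chain $\chain{\#}{a^n b^n}{c}$ exists in $(\Sigma, M)$, so by the chain-structure characterisation of OPA computations recalled in Section~\ref{sec:prelim} the automaton has no stack mechanism for coupling the number of $a$'s with the number of $b$'s inside a block, and a pumping argument on the induced single-symbol $a$-chains produces accepted $\omega$-words with mismatched block counts, outside $L_0^\omega$.

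The main obstacle is the strictness direction: since the OPM is a parameter of the automaton and not intrinsically determined by the alphabet, simpler candidates (such as $\{a^\omega\}$ alone, which becomes \bfae\ via $M_{aa} = \gtrdot$) fail, so the separator must bundle an ``unbounded-depth'' demand with a ``stack-counting'' demand whose only common OPMs force unbounded depth on at least one word. The inclusion, by contrast, is a routine state-product construction.
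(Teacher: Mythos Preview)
Your inclusion construction is correct and is essentially the one the paper has in mind (the paper writes ``trivial by definition'' here and elsewhere points to a flagged-state construction in the cited thesis); the Boolean flag tracking whether the current top element is the bottom marker $\#$ is exactly the right invariant.

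Your strictness argument is also correct, but it departs from the paper's. The paper uses the language $L_{\text{repbdd}}$ of call/return words with only finitely many unmatched calls: a nondeterministic \bfa\ guesses the position of the last pending call and then verifies Dyck structure, whereas any \bfae\ fails because the pending calls, once stacked, prevent the configuration $\stack 0\# {q_f}$ from ever recurring (with the natural OPM; other OPMs destroy the ability to match calls to returns). Your separator $\{a^\omega\} \cup L_0^\omega$ achieves the same effect by a more explicit dichotomy on $M_{aa}$: the $\lessdot/\doteq$ cases make the stack grow monotonically on $a^\omega$, while the $\gtrdot$ case reduces the $a$-segments to single-symbol chains and forces the $a$-count into the finite state, where your pumping argument applies. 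The paper's route is shorter because the ``pending call'' structure of $L_{\text{repbdd}}$ directly witnesses non-emptiness of the stack; your route makes the OPM case analysis fully explicit and self-contained, at the cost of having to sketch a pumping argument in the $\gtrdot$ case. Both are valid; the paper's separator also connects the result to the VPA literature, which is a minor expository advantage. If you keep your version, you should spell out the pumping step for $M_{aa} = \gtrdot$ a bit more (fix an accepting run on $(a^N b^N)^\omega$, pigeonhole the empty-stack states between consecutive $a$'s, and pump to obtain an accepted word with a mismatched first block), and note that $M_{aa}$ empty is handled trivially since then $a^\omega$ is not even compatible with $M$.
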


\begin{proof}
The inclusion is trivial by definition. 
To see why it is proper, one can consider for instance the language $L_{\text{repbdd}}$ (studied in \cite{jacm/AlurM09}) consisting of infinite words on the alphabet $\{a, \underline{a}\}$, which can be interpreted as a language of calls and returns of a procedure $a$, with the further constraint that there is always a finite number of pending calls. A nondeterministic \bfa\ with final state acceptance condition can nondeterministically guess which is the prefix of the word containing the last pending call, and then recognizes the language $(L_{\text{Dyck}}(a, \underline{a}))^\omega$ of correctly nested words.
An \bfae\ cannot recognize this language.
In fact, it may accept a word iff it reaches infinitely often a final configuration with empty stack during the parsing. However, the automaton is never able to remove all the input symbols piled on the stack since it cannot flush the pending calls interspersed among the correctly nested letters $a$, otherwise it would either introduce conflicts in the OPM or it would not be able to verify that they are in finite number.

\end{proof}

The classical notion of acceptance for Muller automata can be likewise defined for~\ofa s.

\begin{definition} 
A \emph{nondeterministic Muller-operator precedence automaton} (\mfa) is an \ofa\ $\langle \Sigma, M,$ $Q, I, \mathcal{F}, \delta \rangle $ whose acceptance component is a collection of subsets of $Q$, $\mathcal{F} = \mathcal{T}\subseteq 2^Q$, called the {\em table} of the automaton.

\noindent A run $\mathcal{S}$ of an \mfa\ on an infinite word $x \in \Sigma^\omega$ is $\emph{successful}$ iff $In(\mathcal S) \in \mathcal{T}$, i.e. the set of states occurring infinitely often on the stack is a set in the table $\mathcal{T}$.
\end{definition}


In the case of classical finite-state automata on infinite words, nondeterministic B\"{u}chi automata and nondeterministic Muller automata are equivalent and define the class of $\omega$-regular languages.
Traditionally, Muller automata have been introduced to provide an adequate acceptance mode for deterministic automata on $\omega$-words. In fact, 
deterministic B\"{u}chi automata cannot recognize all $\omega$-regular languages, whereas deterministic Muller automata are equivalent to nondeterministic B\"{u}chi ones~\cite{bib:Thomas1990a}.

For VPAs on infinite words, instead, the paper \cite{jacm/AlurM09} showed that the classical determinization algorithm of B\"{u}chi automata into deterministic Muller automata is no longer valid, and deterministic Muller $\omega$VPAs are strictly less powerful than nondeterministic B\"uchi $\omega$VPAs.
A similar relationship holds for \ofa s too. 
%

The relationships among languages recognized by the different classes of operator precedence $\omega$-automata 
and visibly pushdown $\omega$-languages are summarized in the structure of Figure~\ref{fig:hierarchy}, where \dbfae, \dbfa\ and \dmfa\  denote the classes of deterministic \bfae s, deterministic \bfa s and deterministic \mfa s respectively. 
The detailed proofs of the strict containment relations holding among the classes \lof{\bfa}, \lof{\bfae}, \lof{\dbfa}, \lof{\dmfa} and \lof{\bvpa} in Figure~\ref{fig:hierarchy} are presented in~\cite[Chapter 4]{PanellaMasterThesis2011} and we do not report them here again for space reasons.
In the following sections we provide the proofs regarding the relationships between the strict containment relations among the other classes in Figure~\ref{fig:hierarchy} and the relationships between those classes which are not comparable
(i.e., those linked with dashed lines in the figure), which are not included in~\cite{PanellaMasterThesis2011}.

\tikzset{
	path/.style={dotted},
	every edge/.style={draw,solid},
	normal/.style={solid},
}

\begin{figure}[h!]
\begin{center}
\begin{tikzpicture}[scale=0.85, sibling distance=3.5cm ]
\node (root) {$\mathcal{L}$(\bfa) $\equiv$ $\mathcal{L}$(\mfa)}
child {
			node (OPBEA) {$\mathcal{L}$(\bfae)}
			[level distance=3cm]
				child {
						node (DOPBEA) {$\mathcal{L}$(\dbfae)}
				}
}
child {
			node (DOPMA) {$\mathcal{L}$(\dmfa)}
				child {
					node (DOPBA) {$\mathcal{L}$(\dbfa)}
				}
}
child {
			node (BVPA) {$\mathcal{L}$(\bvpa)}
			[level distance=3cm]
				child {
					node (DBVPA) {$\mathcal{L}$(\dbvpa)}
				}
}
;
\draw[dashed] (OPBEA) -- (DOPBA);
\draw[dashed] (BVPA) to [out=+90, in=+90] (OPBEA);
\draw[dashed] (BVPA) -- (DOPMA);
\draw[dashed] (BVPA) -- (DOPBA);
\draw[dashed]  (OPBEA) --   node[black, fill=white] {$\nsupseteq$}  (DOPMA);
\draw[solid] (DBVPA) -- (DOPBA);
\draw[dashed] (DBVPA) to [out=+175, in=-80, fill=white]  (OPBEA);
\draw[dashed] (DBVPA) -- (DOPBEA);
\draw[dashed] (DOPBEA) to [out=+5, in=-100, fill=white]  (BVPA);
\draw[solid] (DOPBEA) -- (DOPBA);

\end{tikzpicture}
\end{center} 
\caption{Containment relations for \ofl s. Solid lines denote strict
inclusions; dashed lines link classes which are not comparable. It is still
open whether $\mathcal{L}$(\bfae) $\subseteq$ $\mathcal{L}$({\dmfa}) or
not.}\label{fig:hierarchy}
\end{figure}
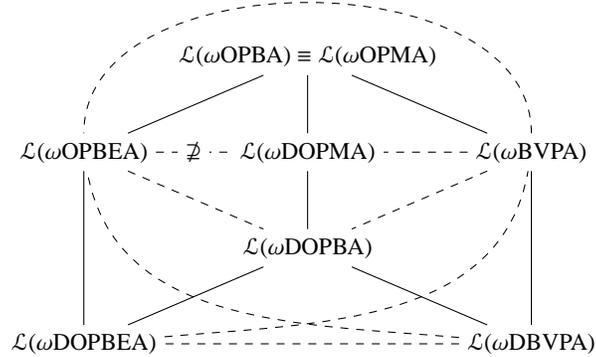

\subsection{Comparison between \lof{\bvpa} and \lof{\bfae}}
\label{sec:BVPAvsBFAE}

\lof{\bvpa} and \lof{\bfae} are not comparable.

\begin{itemize}
\item \lof{\bvpa} $\nsubseteq$ \lof{\bfae}\\
Consider the language $L_{\text{repbdd}}$ (studied in \cite{jacm/AlurM09}) consisting of infinite words on the alphabet $\{a, \underline{a}\}$, which can be interpreted as a language of calls and returns of a procedure $a$, with the further constraint that there is only a finite number of pending calls. 
An \bvpa \ can accept this language: it nondeterministically guesses which is the prefix of the string containing the last pending call, and it can subsequently recognize the language $(L_{\text{Dyck}}(a, \underline{a}))^\omega$ of correctly nested words.

An \bfae \ automaton cannot recognize this language, 
as seen in the proof of Proposition~\ref{prop:BFAvsBFAE}.

\item \lof{\bvpa} $\nsupseteq$ \lof{\bfae}\\
Consider the system introduced in Example 4 of~\cite{LonatiMandrioliPradella2011a} which describes the stack management of a programming language able to handle nested exceptions. 
No \bvpa  \ can express the language of the infinite computations of this system because of the shape of the precedence matrix, which is not compatible with the matrix of a VPA.

The automaton presented in the figure of this Example 4, which is able to recognize this language, instead, can be interpreted as an \bfae.
It is deterministic by construction, thus also \lof{\bvpa} $\nsupseteq$ \lof{\dbfae}.\\
Note also that the same automaton can be considered as an \bfa: since it is deterministic, there exists an \dbfa \ able to model this system, and \lof{\bvpa} $\nsupseteq$ \lof{\dbfa}.
Moreover, since \lof{\dbfa} $\subseteq$ \lof{\dmfa}, an automaton \dmfa \ can recognize it too; thus \lof{\bvpa} $\nsupseteq$ \lof{\dmfa}.

\end{itemize}

\subsection{Comparison between \lof{\bvpa} and \lof{\dmfa}}
\label{sec:BVPAvsDMFA}

\lof{\bvpa} and \lof{\dmfa} are not comparable.
\begin{itemize}
\item \lof{\bvpa} $\nsubseteq$ \lof{\dmfa}\\
No \dmfa \ can recognize the language $L_{\text{repbdd}}$ (the proof can be found in~\cite{PanellaMasterThesis2011}), whereas an \bvpa \ can accept it (see~\cite{jacm/AlurM09}).

\item \lof{\bvpa} $\nsupseteq$ \lof{\dmfa}\\
See Section~\ref{sec:BVPAvsBFAE}
\end{itemize}

\subsection{Comparison between \lof{\bvpa} and \lof{\dbfa}}
\label{sec:BVPAvsDBFA}

\lof{\bvpa} and \lof{\dbfa} are not comparable.

\begin{itemize}
\item \lof{\bvpa} $\nsubseteq$ \lof{\dbfa}\\
Consider the language on the alphabet $\Sigma = \{a,b\}$:
\begin{equation}
\label{eq:Lfinitely_a}
L_1 = \{  \alpha \in \Sigma^\omega : \alpha \text{ contains finitely many letters a } \}
\end{equation}
It can be recognized by an \bvpa, but no \dbfa \ can accept it.

In fact, an \bvpa \ can recognize words of $L_1$ finding nondeterministically the last letter $a$ in a word and then reading suffix $b^{\omega}$.

The proof that no \dbfa \ can recognize $L_1$ resembles the classical proof (see e.g.~\cite{bib:Thomas1990a}) that deterministic B\"{u}chi finite-state automata are strictly weaker than nondeterministic B\"{u}chi finite-state ones. We outline here the proof for the sake of completeness.\\
Assume that there exists an \dbfa \ $\mathcal B$ which recognizes $L_1$.

Notice that, in general, according to the definition of push/mark/flush moves of an operator precedence automaton (finite or $\omega$), given any configuration $C = \config {\beta} {w}$, the state piled up at the top of the stack with a transition $\config {\beta} {w} \vdash \config {\beta'} {w'}$, namely $\state{\beta'}$, is exactly the state reached by the automaton on its state-graph. Thus, during a run on a word $x \in \Sigma^\omega$, configurations with stack $\beta_i$ with $\state{{\beta}_i} \in F$ occur infinitely often iff the automaton visits infinitely often states in $F$ in its graph.

Now, the infinite word $x = b^\omega$ belongs to $L_1$, since it contains no (and then a finite number of) letters $a$. Then, there exists a unique run of $\mathcal B$ on this string which visits infinitely often final states. Let $b^{n_1}$ be the prefix read by $\mathcal B$ until the first visited final state.\\
But also $b^{n_1}ab^\omega$ belongs to $L_1$, hence there exists a final state reached reading the prefix $b^{n_1}a b^{n_2}$, for some $n_2 \in N$.\\
In general, one can find a sequence of  finite words $b^{n_1}a b^{n_2}\dots ab^{n_k}, (k\geq 1)$ such that the automaton has a unique run on them, and for each such runs it reaches a final state (placing it at the top of the stack) after reading every prefix $b^{n_1}a b^{n_2}\dots ab^{n_i}, \forall \text{ } i\leq k$. Therefore, there exists a (unique) run of $\mathcal A$ on the $\omega$-word $w = b^{n_1}ab^{n_2}\dots$ such that $\mathcal A$ visits infinitely often final states, and thus reaches infinitely often configurations $C = \config {\beta} {w}$ with $\state{{\beta}} \in F$.\\
However, $w$ cannot be accepted by $\mathcal B$ since it contains infinitely many letters $a$, and this is a contradiction.

\item \lof{\bvpa} $\nsupseteq$ \lof{\dbfa}\\
See Section~\ref{sec:BVPAvsBFAE}
\end{itemize}

\subsection{Comparison between \lof{\bvpa} and \lof{\dbfae}}
\label{sec:BVPAvsDBFAE}

\lof{\bvpa} and \lof{\dbfae} are not comparable.
\begin{itemize}
\item \lof{\bvpa} $\nsubseteq$ \lof{\dbfae}\\
If \lof{\bvpa} $\subseteq$ \lof{\dbfae}, then \lof{\bvpa} $\subseteq$ \lof{\bfae} since \lof{\dbfae} is a subclass of \lof{\bfae}. This, however, contradicts the fact that \lof{\bvpa} and \lof{\bfae} are not comparable.
\item \lof{\bvpa} $\nsupseteq$ \lof{\dbfae}\\
See Section~\ref{sec:BVPAvsBFAE}
\end{itemize}

\subsection{Comparison between \lof{\bfae} and \lof{\dbfa}}
\label{sec:BFAEvsDBFA}

\lof{\bfae} and \lof{\dbfa} are not comparable.
\begin{itemize}
\item \lof{\bfae} $\nsubseteq$ \lof{\dbfa}\\
Language $L_1 $ (Equation~\ref{eq:Lfinitely_a}) cannot be recognized by an \dbfa \ (see Section~\ref{sec:BVPAvsDBFA}), but there exists an \bfae\  accepting it, depicted in Figure~\ref{fig:BFAEfinitely_a} along with its precedence matrix (where $\circ \in \{\lessdot, \doteq, \gtrdot\}$ can be any precedence relation):

\begin{figure}[h!]
\begin{center}
\begin{tabular}{m{.2\textwidth}m{.2\textwidth}}
$
\begin{array}{c|cc}
      & a & b   \\
\hline
a & \circ & \gtrdot \\
b & \circ & \gtrdot \\
\#       & \lessdot & \lessdot \\
\end{array}
$
&
\begin{tikzpicture}[every edge/.style={draw,solid}, node distance=4cm, auto, 
                    every state/.style={draw=black!100,scale=0.5}, >=stealth]

\node[initial by arrow, initial text=,state] (q0) {{\huge $q_0$}};
\node[state] (q1) [accepting, right of=q0, xshift=0cm] {{\huge $q_1$}};

\path[->]
(q0) edge [above]  node {$b$} (q1) 
(q0) edge [loop above]  node {$a,b$} (q0) 
(q0) edge [loop below, double]  node {$q_0$} (q0) 
(q1) edge [loop above]  node {$b$} (q1);
\end{tikzpicture}
\end{tabular}
\end{center}
\caption{\bfae \ recognizing $L_1 = \{  \alpha \in \Sigma^\omega: \alpha \text{ contains finitely many letters $a$} \}$ and its OPM.}
\label{fig:BFAEfinitely_a}
\end{figure}
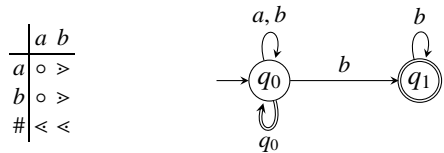

\item \lof{\bfae} $\nsupseteq$ \lof{\dbfa}\\
Let $L_2$ be the language $a^2 {L_3}^{\omega}$ with $L_3 = \{ a^k b^k \mid k\geq 1\}$ and where, in general, for a set of finite words $L \subseteq A^*$, one defines $ L^\omega = \{ \alpha \in A^\omega \mid \alpha = w_0 w_1\dots \text{ with } w_i \in  L \text{ for } i\geq 0 \}$.

No \bfae \  can recognize this language. Indeed, words in $L_3$ can be recognized only with the OPM $M$ depicted in Figure~\ref{fig:OPMa2anbnDBFA}, where $\circ \in \{\lessdot, \doteq, \gtrdot\}$ can be any precedence relation: clearly, using any other OPM there exist words in $L_3$ and $L_2 = a^2 {L_3}^{\omega}$ which could not be recognized. Thus, because of the OP relation $a \lessdot a$, an \bfae \ piles up on the stack the first sequence $a^2$ of a word and cannot remove it afterwards; hence it cannot empty the stack infinitely often to accept a string in $L_2$.

\begin{figure} [h!]
\begin{center}
$
\begin{array}{c|cc}
      & a & b  \\
\hline
a & \lessdot & \dot= \\
b  & \circ & \gtrdot \\
\#       & \lessdot & \\
\end{array}
$
\end{center}\caption{OPM for language $L_2$ of Section~\ref{sec:BFAEvsDBFA}.}\label{fig:OPMa2anbnDBFA}
\end{figure}

There is, however, an \dbfa \  that recognizes such a language (Figure~\ref{fig:a2anbnDBFA}). Incidentally notice that, since \lof{\dbfa} $\subseteq$ \lof{\dmfa}, an automaton \dmfa  \ can recognize it too; thus \lof{\bfae} $ \nsupseteq$ \lof{\dmfa}.

\begin{figure}[h!]
\centering
\begin{tikzpicture}[every edge/.style={draw,solid}, node distance=4cm, auto, 
                    every state/.style={draw=black!100,scale=0.5}, >=stealth]

\node[initial by arrow, initial text=,state] (q0) {{\huge $q_0$}};
\node[state] (q1) [right of=q0, xshift=0cm] {{\huge $q_1$}};
\node[state,accepting] (q2) [right of=q1, xshift=0cm] {{\huge $q_2$}};
\node[state] (q3) [right of=q2, xshift=0cm] {{\huge $q_3$}};

\path[->]
(q0) edge [above]  node {$a$} (q1) 
(q1) edge [above]  node {$a$} (q2) 
(q2) edge [above]  node {$a$} (q3) 
(q3) edge [loop below, double] node {$q_3$} (q3)
(q3) edge [loop right] node {$a,b$} (q3)
(q3) edge [bend left, double]  node {$q_2$} (q2) ;
\end{tikzpicture}
\caption{\dbfa \ recognizing language $L_2$ of Section~\ref{sec:BFAEvsDBFA}.}\label{fig:a2anbnDBFA}
\end{figure}
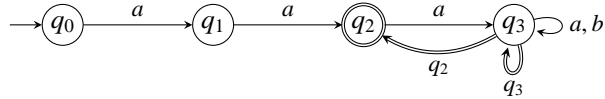
\end{itemize}

\subsection{Comparison between \lof{\bfae} and \lof{\dbvpa}}
\label{sec:BFAEvsDBVPA}

\lof{\bfae} and \lof{\dbvpa} are not comparable.
\begin{itemize}
\item \lof{\bfae} $\nsubseteq$ \lof{\dbvpa}\\
If \lof{\bfae} $\subseteq$ \lof{\dbvpa}, then \lof{\bfae} $\subseteq$ \lof{\bvpa} since \lof{\dbvpa} is a subclass of \lof{\bvpa}. This, however, contradicts the fact that \lof{\bfae} and \lof{\bvpa} are not comparable.
\item \lof{\bfae} $\nsupseteq$ \lof{\dbvpa}\\
Let $L =\Sigma^{\omega}$ with $\Sigma = \{a,b\}$ where the precedence relations between the symbols of the alphabet are represented by the OPM $M$ in Figure~\ref{fig:OPMBFAEvsDBVPA}, i.e. $\Sigma$ coincides with the call alphabet $\Sigma_c$ of a \vpa. $L$ can be recognized by an \dbvpa\ that has both input letters $a$ and $b$ as call symbols, but it cannot be recognized by any (nondeterministic or deterministic) \bfae\ with OPM $M$. Thus \lof{\bfae} $\nsupseteq$ \lof{\dbvpa} and \lof{\dbfae} $\nsupseteq$ \lof{\dbvpa}.

\begin{figure} [h!]
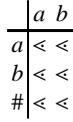

\begin{center}
$
\begin{array}{c|cc}
      & a & b  \\
\hline
a & \lessdot & \lessdot \\
b  & \lessdot & \lessdot \\
\#  &\lessdot   & \lessdot  \\
\end{array}
$
\end{center}\caption{OPM for language $L$ of Section~\ref{sec:BFAEvsDBVPA}.}\label{fig:OPMBFAEvsDBVPA}
\end{figure}

\end{itemize}

\subsection{Comparison between \lof{\bfae} and \lof{\dbfae}}
\label{sec:BFAEvsDBFAE}

\lof{\dbfae} $\subset$ \lof{\bfae}\\
The inclusion between the two classes is strict. Consider, in fact, language $L_1$ of Equation~\ref{eq:Lfinitely_a}: $L_1$ can be recognized by an \bfae, but no \dbfae\ can recognize it (the proof is analogous to that presented for \dbfa s in Section~\ref{sec:BVPAvsDBFA}).

\subsection{Comparison between \lof{\dbfae} and \lof{\dbfa}}
\label{sec:DBFAEvsDBFA}

\lof{\dbfae} $\subset$ \lof{\dbfa}\\
The inclusion holds since for any \dbfae \ there exists an \dbfa\ which recognizes the same language: the \dbfa\ simply keeps in the states information on the evolution of the stack marking those states which are reached with empty stack in the \dbfae \ (in particular, the proof that \lof{\bfae} $\subseteq$ \lof{\bfa} in~\cite{PanellaMasterThesis2011} describes how to define an \bfa\ $\mathcal{\tilde{A}}$ equivalent to a given \bfae \ $\mathcal{A}$, and $\mathcal{\tilde{A}}$ is deterministic if $\mathcal{A}$ is deterministic).

The inclusion is strict: language $L_2$ in Section~\ref{sec:BFAEvsDBFA}, for instance, belongs to \lof{\dbfa} but it cannot be recognized by any \dbfae.
\subsection{Comparison between \lof{\dbfae} and \lof{\dbvpa}}
\label{sec:DBFAEvsDBVPA}

\lof{\dbfae} and \lof{\dbvpa} are not comparable.
\begin{itemize}
\item \lof{\dbfae} $\nsubseteq$ \lof{\dbvpa}\\
If \lof{\dbfae} $\subseteq$ \lof{\dbvpa}, then \lof{\dbfae} $\subseteq$ \lof{\bvpa} since \lof{\dbvpa} is a subclass of \lof{\bvpa}. This, however, contradicts the fact that \lof{\dbfae} and \lof{\bvpa} are not comparable.
\item \lof{\dbfae} $\nsupseteq$ \lof{\dbvpa}\\
See Section~\ref{sec:BFAEvsDBVPA}.

\end{itemize}

\subsection{Comparison between \lof{\bvpa} and \lof{\dbvpa}}
\label{sec:BVPAvsDBVPA}

\lof{\dbvpa} $\subset$ \lof{\bvpa} \\
The inclusion is strict: no \dbvpa\ can recognize language $L_1$ of Equation~\ref{eq:Lfinitely_a}, whereas an \bvpa\ can accept it.

\subsection{Comparison between \lof{\dbfa} and \lof{\dbvpa}}
\label{sec:DBFAvsDBVPA}

\lof{\dbvpa} $\subset$ \lof{\dbfa}\\
Between \lof{\dbvpa} and \lof{\dbfa} the same relationship holds as for their corresponding nondeterministic counterparts; in particular the inclusion is strict, as for \bvpa s and \bfa s, as Section~\ref{sec:BVPAvsBFAE} presented a system that can be modeled by an \dbfa \ and by no \bvpa.

\section{Closure properties and emptiness problem}
\label{sec:clProp}


\lof{\bfa} enjoys all closure and decidability properties necessary to perform model checking; thus thanks to their greater expressive power, we believe that they represent a truly promising formalism for infinite-state model-checking.

In the first part of this section we focus on the most interesting closure properties of \ofa s, which are summarized in Table~\ref{tab:closureProp}, where they are compared with the properties enjoyed by VPAs on infinite-length words. Binary operations are considered between languages with compatible OPMs.

\vspace{-0.3cm}
\begin{table}
\centering
\begin{tabular}{|c|c|c|c|c|c|c|}
\hline
& $\mathcal{L}$(\dbfae)  & $\mathcal{L}$(\bfae) & $\mathcal{L}$(\dbfa)  & $\mathcal{L}$(\dmfa) &$\mathcal{L}$(\bfa) & $\mathcal{L}$(\bvpa)\\
 \hline 
Intersection &Yes & Yes  &Yes & Yes & Yes & Yes\\
 \hline
Union &Yes & Yes  &Yes & Yes & Yes & Yes\\
\hline 
Complement & No & No & No & Yes &Yes  & Yes\\
\hline 
$L_1 \cdot L_2$ & No & No & No & No &Yes  & Yes\\
\hline 
\end{tabular}\medskip
\caption{Closure properties of families of $\omega$-languages. ($L_1 \cdot L_2$ denotes the concatenation of a language of finite-length words $L_1$ and an $\omega$-language $L_2$).\label{tab:closureProp}}
\end{table}
\vspace{-0.7cm}

Closure properties for \dbfa s (under complement and concatenation with an OPL) and \dmfa s are not discussed here because of space reasons, but they resemble proofs for classical families of $\omega$-automata and can anyhow be found in~\cite{PanellaMasterThesis2011}.
Closure properties for \dbfa s under intersection and union are presented in Section~\ref{sec:wDBFA}; closure properties for \bfae s and \dbfae s are presented in Section~\ref{sec:wBFAE} and Section~\ref{sec:wDBFAE}.

We consider in detail the main family \bfa. This class is closed under Boolean operations between languages with compatible precedence matrices and under concatenation with a language of finite words accepted by an OPA.
The emptiness problem is decidable for \ofa s in polynomial time because they can be interpreted as pushdown automata on infinite-length words:  e.g.~\cite{BurkartSteffen1992a} shows an algorithm that decides the alternation-free modal $\mu$-calculus for context-free processes, with linear complexity in the size of the system's representation; thus the emptiness problem for the intersection of the language recognized by a pushdown process and the language of a given property in this logic is decidable.
Closures under intersection and union hold for \bfa s as for classical $\omega$-regular languages and can be proved in a similar way~\cite{PanellaMasterThesis2011}. Closures under complementation and concatenation required novel investigation techniques. 

\subsection*{Closure under concatenation}
\label{sec:concatBFAF}

For classical families of automata (on finite or infinite-length words) the closure of the class of languages they recognize with respect to the operation of concatenation is traditionally proved resorting to a Thompson-like construction: given two automata that recognize languages of a given class, an automaton which accepts the concatenation of these languages is generally defined so that it may simulate the moves of the first automaton while reading the first word of the concatenation and, once it reaches some final state, it switches to the initial states of the second automaton to begin the recognition of words of the second language.

 This construction, however, is not adequate for the concatenation of a language of finite words recognized by a classical OPA and an \ofl\ (recognized by an \bfa).
In fact, a classical OPA accepts a finite word by reaching a final state and by emptying its stack thanks to the ending delimiter $\#$.
As regards the concatenation of a language recognized by an OPA and an $\omega$-language (accepted by an \bfa) whose words are not ended by $\#$, this condition is not necessarily guaranteed and it might be not possible to complete the recognition of a word of the first language simulating the behavior of its OPA according to the acceptance condition by final state and empty stack. As an example, for a language $L_1 \subseteq \Sigma^*$ and an $\omega$-language $L_2 = \{a^\omega\}$ with compatible precedence matrices such that all letters of the alphabet yield precedence to symbol $a$ (i.e. $b \lessdot a, \forall b \in \Sigma$), the symbols still on the stack after reading words in $L_1$ cannot be removed with flush moves before or during the parsing of the second word in the concatenation, since the precedence relation $\lessdot$ implies that the letters read are only pushed on the stack. Thus, the stack cannot be emptied after the reading of the first word, and this prevents to check if it actually belongs to the first language of the concatenation.

After reading the first finite word in the concatenation, it is not even possible to determine whether this word is accepted by checking if in its OPA there exists an ongoing run on it that could lead to a final state by flush moves induced by a potential delimiter~$\#$, since this control would require to know the states already reached and piled on the stack, which are not visible without emptying the stack itself.

Closure under concatenation for the class of languages accepted by \bfa s with a language of finite words accepted by an OPA could be proved similarly as for classical automata if it were possible to recognize finite words by an OPA without emptying the stack and without even performing any flush move induced by symbol $\#$ immediately after reading the word; in this way the acceptance could be completed even when the words of the second language prevent emptying the stack.

To this aim, a possible solution is to introduce a variant of the semantics of the transition relation and of the acceptance condition for OPAs on finite-length words: a string is accepted if the automaton reaches a final state right at the end of the parsing of the whole word, and does not perform any flush move determined by the ending delimiter $\#$ to empty the stack; thus it stops just after having put the last symbol of $x$ on the stack.
Precisely, the semantics of the transition relation differs from the definition of
classical OPAs in that, once a configuration with the endmarker as lookahead is reached, the
computation cannot evolve in any subsequent configuration, i.e., a flush move 
$C \ \widetilde \vdash \ C_1$ with $C = \config { B_1 B_2 \dots B_n} {x\#}$ and $\symb {B_n} \gtrdot y\#$ is performed only if $y \neq \varepsilon$ (where symbol $\widetilde{\vdash}$ denotes a move according to this variant of the semantics of the transition relation). 
The language accepted by this variant of the automaton (denoted as $\widetilde{L}$) is the set of words:
\[
\widetilde L (\mathcal A) = \{ x \mid  \config {\stack 0\#{q_I}} {x\#}   \stackrel * {\widetilde{\vdash}}  
\config {\gamma \stack 0 a {q_F}} \#, q_I \in I, q_F \in F , \gamma \in \Gamma^*, a \in \Sigma \cup \{\#\} \}
\]

\noindent We emphasize that, unlike normal acceptance by final state of a pushdown automaton, which can perform a number of $\varepsilon$-moves after reaching the end of a string and accept if just one of the visited states is final, this type of automaton cannot perform any (flush) move after reaching the endmarker through the last look-ahead.

 Nevertheless, the variant and the classical definition of OPA are equivalent, as the following statements (Lemma~\ref{th:FAFW} and Statement 1) prove.

\begin{lemma}
\label{th:FAFW}
Let $\mathcal{A}_1$ be a nondeterministic OPA defined on an OP alphabet $(\Sigma, M)$ with $s$ states. Then there exists a nondeterministic OPA $\mathcal{A}_2$ with the same precedence matrix as $\mathcal{A}_1$ and $O(|\Sigma| s^2)$ states such that 
$L(\mathcal {A}_1) = \widetilde L(\mathcal {A}_2)$.

\end{lemma}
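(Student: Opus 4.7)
The plan is to have $\mathcal{A}_2$ simulate $\mathcal{A}_1$ while pre-computing, along the run, the state that $\mathcal{A}_1$'s cascade of $\#$-triggered flushes would reach at the endmarker. Because $\widetilde{\vdash}$ forbids the flushes that $\vdash$ would perform at the final $\#$, the only way to preserve $L(\mathcal{A}_1)$ is to anticipate them. I would take the states of $\mathcal{A}_2$ to be triples $(a,q,r)\in(\Sigma\cup\{\#\})\times Q\times Q$ with the intended meaning: $q$ is the actual $\mathcal{A}_1$-state at this stack position, $r$ is the state that $\mathcal{A}_1$ would reach if the closing $\#$-flush cascade were carried out from the current stack with top state $q$, and $a$ records the symbol of the stack element just below, which enters the state because a flush makes the new top inherit the symbol from $B_{i-1}$ (see the flush rule in Section~\ref{sec:prelim}). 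Initial states of $\mathcal{A}_2$ are $(\#,q_I,q_I)$ for $q_I\in I$, and final states are $\{(a,q,r)\mid r\in F\}$. This gives the $O(|\Sigma|\,s^2)$ bound on the size of the state set.

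The transitions are designed so that the first coordinate faithfully simulates $\mathcal{A}_1$ while the third coordinate is guessed nondeterministically and validated through subsequent moves. A push or mark move reading $b$ from $(a,q,r)$ yields $(b,q',r')$ with $q'\in\delta_{push}(q,b)$ and $r'$ guessed; when the move is a mark, consistency demands that $r'$ equal the closure obtained by first flushing the newly marked element (replacing the top state by $\delta_{flush}(q',q)$) and then closing what was already there, a quantity that is implicitly encoded by the $r$-component of the state sitting just below on the stack. A flush move $\delta_{flush}^{(2)}((a,q,r),(a',q'',r''))$ simulates $\delta_{flush}^{(1)}(q,q'')$ on the middle coordinate, updates the symbol coordinate to $a'$ in accordance with the flush rule, and forces the new $r$-coordinate to be coherent with the $r''$ stored below. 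Under $\widetilde{\vdash}$, acceptance then amounts to the top state after the last input symbol satisfying $r\in F$, which is exactly what the definition of $F_2$ imposes.

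Correctness splits cleanly. For the forward direction, I take an accepting run of $\mathcal{A}_1$ on $x\#$ and annotate each of its configurations with the $r$-value obtained by actually running the $\#$-flush cascade to the end; this annotation is realisable by $\mathcal{A}_2$ under $\widetilde{\vdash}$ and ends with $r=q_F\in F$. For the backward direction, a successful $\widetilde{\vdash}$-run of $\mathcal{A}_2$ yields, via projection on the $q$-coordinate, an ordinary run of $\mathcal{A}_1$ on $x$ whose final stack admits, by the guessed and verified $r$-values, a concrete sequence of $\#$-triggered flushes that $\mathcal{A}_1$ can perform to land in $\stack 0{\#}{q_F}$ with $q_F\in F$. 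The main obstacle is showing that this guess-and-verify mechanism really captures every closure: the closure of a stack depends on the entire mark structure below, not just on a single summary state, so one must argue that the local compositionality of the flush cascade — each iteration processes one marked chain using only the state immediately below that chain together with a summary of what lies further down — is enough for the $r$-values to be computed inductively on the stack height. Making this inductive argument precise, in particular its behaviour across mark moves (which begin a new chain) and at the flush moves that verify a previously guessed $r$ (and remove one chain), is the delicate technical step.
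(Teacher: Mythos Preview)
Your high-level plan---carry an extra $Q$-component to anticipate the $\#$-triggered flush cascade and use nondeterminism to guess and verify it---is precisely what the paper does. The gap is in the invariant you chose for $r$. If $r$ is ``the result of the $\#$-cascade from the current stack with top state $q$'', then $r$ changes at \emph{every} push, and the new value depends on information not available locally. Take a single open chain whose base state (the state of the stack element just below the marked start) is $q_0$: after a non-mark push from $q$ to $q'$, the cascade result changes from a value in $\delta_{\text{flush}}(q,q_0)$ to one in $\delta_{\text{flush}}(q',q_0)$. Updating $r$ to $r'$ requires $q_0$, but the push transition sees only the current state $(a,q,r)$ and the input symbol; $q_0$ cannot be recovered from $r$ and $q$ since $\delta_{\text{flush}}(q,\cdot)$ need not be injective. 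For pending positions there is no later flush to check the guess, so your automaton could guess any $r\in F$ at the final push and accept spuriously. Your phrase ``implicitly encoded by the $r$-component of the state sitting just below'' hides exactly this: the $r$ below records the cascade result for the \emph{old} top state, not for the state $\delta_{\text{flush}}(q',q)$ produced by the first step of the new cascade.

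The paper takes the dual invariant: the extra component is not the cascade \emph{result} but a backward \emph{commitment}---the state that must arrive at this level from above for the remaining cascade to land in $F$. This commitment is propagated unchanged through pushes and through marks that begin maximal (later-flushed) chains, and is updated only at marks that begin a chain destined to remain open, via the local constraint $\delta_{\text{flush}}(s,q)\ni p$ (new commitment $s$, old commitment $p$, current $\mathcal A_1$-state $q$), which uses only data present in the current state. Acceptance becomes ``current $\mathcal A_1$-state equals commitment''. To know when to update versus propagate, the paper carries a three-valued flag $\{B,Z,U\}$ guessing whether the next symbol lies in a maximal chain, starts an open chain, or continues one; together with the look-back letter (needed so that $\delta_{\text{push}}$ can distinguish push from mark), this gives $3\,|\hat\Sigma|\,s^2$ states. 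Your construction is repairable by switching to this backward semantics and adding the structural guess; as stated, the inductive step at non-mark pushes does not go through.
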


\noindent To build such a variant $\mathcal{A}_2$ we need some further notation.
Consider a word of finite length $w$ compatible with $M$: $\# w$ (without the closing \#).
Define a chain in a word $w$ as \textit{maximal} if it does not belong to a larger composed chain. In a word of finite length preceded and ended by $\#$ only the outmost chain $\chain{\#}{w}{\#}$ is maximal.

 An \textit{open chain} is a sequence of symbols $b_0 \lessdot a_1 \doteq a_2 \doteq \ldots \doteq a_n$, for $n\geq 1$.

\noindent The \textit{body} of a chain $\chain axb$, simple or composed, is the word $x$.
 A letter $a \in \Sigma$ in a word $\# w \#$ with $w \in \Sigma^*$ or $\# w$ with $w \in \Sigma^\omega$, where $w$ is compatible with $M$,  is \textit{pending} if it does not belong to the body of a chain, i.e., once pushed on the stack when it is read, it will never be flushed afterwards.

A word $w$ which is preceded but not ended by a delimiter $\#$ can be factored in a unique way as a sequence of bodies of maximal chains $w_i$ and pending letters $a_i$ as
$
\# \ w  = \# \ w_1 a_1 w_2 a_2 \ldots w_n a_n
$
where $\chain{a_{i-1}}{w_i}{a_{i}}$ are maximal chains and each $w_i$ can be possibly missing, with $a_0 = \#$ and $\forall i: 1\leq i \leq n-1$ $a_i\lessdot a_{i+1}$ or $a_i \doteq a_{i+1}$.

In general, during the parsing of word $w$, the symbols of the string are put on the stack and, whenever a chain is recognized, the letters of its body are flushed away. 

Hence, after the parsing of the whole word the stack contains only the symbols $\# \  a_1  \ a_2 \ldots \ a_n$ and is structured as a sequence of open chains.
Let $k$ be the number of open chains and denote by $a_1 = a_{i_1}, a_{i_2}, \dots a_{i_k}$ their starting symbols, then the stack contains:
\vspace{-0.1cm}
$$
\# \lessdot a_{i_1} = a_1 \doteq a_{2} \doteq \ldots 
\lessdot a_{i_2} \doteq a_{i_2 +1} \ldots 
\lessdot a_{i_3} \doteq a_{i_3 +1} \ldots  
\lessdot a_{i_k} \doteq a_{i_k +1} \ldots  \doteq a_n
$$

When a word $w$ is parsed by a classical OPA, the automaton performs a series of flush moves at the end of the string due to the presence of the final symbol $\#$. These moves progressively empty the stack, removing one by one the open chains and, for each such flush, they update the state of the automaton on the basis of the symbols which delimit the portion of the stack to be removed, which correspond to the state symbols at the end of the current open chain and at the end of the preceding open chain. The run is accepting if it leads to a final state after the flush moves.

 As an example, the transition sequence below shows the flush moves of a classical OPA when it reaches the position of $a_n$:\medskip\\
 $ \langle \stack{0}{\#}{q_1} \stack{1}{a_{i_1}}{q_2}	\stack{0}{a_2}{q_3} \ldots \stack{0}{a_{i_2 -1}}{q_{i_2}} \stack{1}{a_{i_2}}{q_{i_2+1}} \ldots \stack{0}{a_{i_3 -1}}{q_{i_3}} \ldots\stack{0}{a_{i_k -1}}{q_{i_k}}
 \stack{1}{a_{i_k}}{q_{i_k+1}}\ldots \stack{0}{a_n}{q_{n+1}}, \# \rangle\medskip\\$
 $\comp{} \langle \stack{0}{\#}{q_1} \stack{1}{a_{i_1}}{q_2}	\stack{0}{a_2}{q_3} \ldots \stack{0}{a_{i_2 -1}}{q_{i_2}} \stack{1}{a_{i_2}}{q_{i_2+1}} \ldots \stack{0}{a_{i_3 -1}}{q_{i_3}} \ldots
 \stack{0}{a_{i_k -1}}{\hat q_{i_k} = \delta_{\text{flush}}(q_{n +1}, q_{i_k})}, \# \rangle\medskip\\$
 $\comp{*} \langle \stack{0}{\#}{q_1} \stack{1}{a_{i_1}}{q_2}	\stack{0}{a_2}{q_3} \ldots \stack{0}{a_{i_2 -1}}{q_{i_2}} \stack{1}{a_{i_2}}{q_{i_2+1}} \ldots 
 \stack{0}{a_{i_3 -1}}{\hat q_{i_3} = \delta_{\text{flush}}(\hat q_{i_4}, q_{i_3})} , \# \rangle\medskip\\$
 $\comp{} \langle \stack{0}{\#}{q_1} \stack{1}{a_{i_1}}{q_2}	\stack{0}{a_2}{q_3} \ldots \stack{0}{a_{i_2 -1}}{\hat q_{i_2}= \delta_{\text{flush}}(\hat q_{i_3}, q_{i_2})} , \# \rangle\medskip\\$
 $\comp{} \langle \stack{0}{\#}{\hat q_1= \delta_{\text{flush}}(\hat q_{2}, q_{1})} , \# \rangle\\$

A nondeterministic automaton that, unlike classical OPAs, does not resort to the delimiter $\#$ for the recognition of a string may guess nondeterministically the ending point of each open chain on the stack and may guess how, in an accepting run, the states in these points of the stack would be updated if the final flush moves were progressively performed. The automaton must behave as if, at the same time, it simulates two snapshots of the accepting run of a classical OPA: a move during the parsing of the string and a step during the final flush transitions which will later on empty the stack, leading to a final state.
To this aim, the states of a classical OPA are augmented with an additional component to store the necessary information.

In the initial configuration, the symbol at the bottom of the stack comprises, along with an initial state $q$ of the original OPA $\mathcal{A}_1$, an additional state, say $q_F$, which represents a final state of $\mathcal{A}_1$. The additional component is propagated until the automaton nondeterministically identifies the first pending letter, which represents the beginning of the first open chain; at this time the component is updated with a new state chosen so that there exists a move from it in $\mathcal{A}_1$ that can flush and replace the state at the bottom of the stack with the final one $q_F$ (notice that if the beginning letter of the word is not a pending letter -- i.e., the prefix of the word is a maximal chain -- after completing the parsing of the chain, the initial state $q$ will be flushed and replaced on the bottom of the stack by a new state, say $r$, like in a classical OPA; in this case the last component added after reading the pending letter is chosen so that there exists a move in the graph of $\mathcal{A}_1$ that can flush and replace the state $r$ with $q_F$). 
Then, similarly, the additional component is propagated until the ending point of each open chain, until the conclusion of the parsing; while reading the pending letter that represents the beginning of the successive open chain
the automaton augments the new state on the stack with a placeholder chosen so that there is a flush move in $\mathcal{A}_1$ from it that can replace the state at the end of the previous open chain with the additional component previously stacked, thus allowing a backward path of flush moves from each ending point of an open chain to the previous one, up to the final state initially stacked.
If the forward path consisting of moves during the parsing of the string and this backward path of flush moves can consistently meet and be rejoined when the parsing of the input string stops, then they constitute an entire accepting run of the classical OPA.

A variant OPA $\mathcal{A}_2$ equivalent to a given OPA $\mathcal{A}_1$ thus may be defined so that, after reading each prefix of a word, it reaches a final state whenever, if the word were completed in that point with $\#$, $\mathcal{A}_1$ could reach an accepting state with a sequence of flush moves. In this way, $\mathcal{A}_2$ can guess in advance which words may eventually lead to an accepting state of $\mathcal{A}_1$, without having to wait until reading the delimiter $\#$ and to perform final flush~moves.

\begin{example}
Consider the computation of the OPA in Example~\ref{ex:db}. If we consider the input word of this computation without the ending marker $\#$, then the sequence of pending letters on the stack, after the automaton puts on the stack the last symbol $D$, is
$\# \lessdot \cup \ \lessdot \Join \ \lessdot \ \Join \lessdot \ \pi_{expr} \lessdot D$. 
There are five open chains with starting symbols $ \cup, \ \Join, \ \Join, \ \pi_{expr}, D  $,
hence the computation ends with five consecutive flush moves determined by the delimiter $\#$. 
The following figure shows the configuration  just before looking ahead at the symbol \#. The states (depicted within a box) at the end of the open chains are those placeholders that an equivalent variant OPA should guess in order to find in advance the last flush moves $q_1 = \fcolorbox{black}{white}{$q_1$}\cflush{q_0}{gray}
\fcolorbox{black}{white}{$q_1$} \cflush{q_0}{gray}
\fcolorbox{black}{white}{$q_1$} \cflush{q_1}{gray}
\fcolorbox{black}{white}{$q_1$} \cflush{q_1}{gray}
\fcolorbox{black}{white}{$q_1$} \cflush{q_1}{gray}
\fcolorbox{black}{white}{$q_1 \in F_1$}$ of the accepting run.
\vspace{-0.5cm}
\begin{figure}[h!]
\centering
\begin{tikzpicture}[flush/.style={double, >=stealth, thin, rounded corners}]
\matrix (m) [matrix of nodes]
 {
    $\langle$\stack{0}{\#}{$\color{gray} q_1$} & \stack{1}{$\cup$}{$\color{gray} q_1$}	& \stack{1}{$\Join$}{$\color{gray} q_1$}	& {\stack{1}{$\Join$}{$\color{gray} q_0$}} & {\stack{1}{$\pi_{\text{expr}}$}{$\color{gray}q_0$}} & {\stack{1}{$D$}{$q_1$}} &, &[0.5cm] $\# \rangle$ \\ 
    & \ & \ & \ & \ & \ & \ \\
    || $\fcolorbox{black}{white}{$q_1 \in F_1$}$ & |[]| $\fcolorbox{black}{white}{$q_1$}$	& |[]| $\fcolorbox{black}{white}{$q_1$}$ &|[]| $\fcolorbox{black}{white}{$q_1$}$ &|[]| $\fcolorbox{black}{white}{$q_1$}$ & |[]|$\fcolorbox{black}{white}{$q_1$}$ \\
 };

\draw[flush, ->] (m-3-6)  to [out=130, in=80] (m-3-5);
\draw[flush, ->] (m-3-5)  to [out=130, in=80] (m-3-4);
\draw[flush, ->] (m-3-4)  to [out=130, in=80] (m-3-3);
\draw[flush, ->] (m-3-3)  to [out=130, in=80] (m-3-2);
\draw[flush, ->] (m-3-2)  to [out=130, in=60] (m-3-1);
\end{tikzpicture}\label{fig:newForm}
\end{figure}

\vspace{-0.5cm}
\noindent The corresponding configuration of the variant OPA, with the augmented states, would~be:
\begin{figure}[h!] 
\begin{tikzpicture}[flush/.style={double, >=stealth, thin, rounded corners}]

\matrix (m) [matrix of nodes]
 {
    $\langle$\stack{0}{\#}{$\textcolor{gray} {q_1}, \fcolorbox{black}{white}{$q_1$}$} & \stack{1}{$\cup$}{$\textcolor{gray} {q_1}, \fcolorbox{black}{white}{$q_1$}$}	& \stack{1}{$\Join$}{$\textcolor{gray} {q_1}, \fcolorbox{black}{white}{$q_1$}$}	& {\stack{1}{$\Join$}{$\textcolor{gray} {q_0}, \fcolorbox{black}{white}{$q_1$}$}} & {\stack{1}{$\pi_{\text{expr}}$}{$\textcolor{gray}{q_0}, \fcolorbox{black}{white}{$q_1$}$}} & {\stack{1}{$D$}{$q_1, \fcolorbox{black}{white}{$q_1$}$}} &, & $\# \rangle$ \\ 
 };

\end{tikzpicture}
\end{figure}
\vspace{-0.5cm}

\end{example}

\noindent We are now ready to formally prove Lemma~\ref{th:FAFW}.
\begin{proof}
\noindent 
 Let $\mathcal{A}_1 = \langle \Sigma, M, Q_1, I_1, F_1, \delta_1  \rangle$ and define $\mathcal{A}_2  = \langle \Sigma, M, Q_2, I_2, F_2, \delta_2  \rangle$ as follows.
\begin{itemize}
\item $Q_2 = \{B, Z, U \} \times \hat{\Sigma} \times Q_1 \times Q_1 $, where $\hat{\Sigma} = \Sigma \cup \{\#\}$.\\
Hence, a state $\langle x, a, q, p \rangle $ of $\mathcal{A}_2$ is a tuple whose first component 
denotes a nondeterministic guess for the symbol following the one currently read, i.e., whether it is
a pending letter which is the initial symbol of an open chain ($Z$), or a pending letter within an open chain ($U$),
or a symbol within a maximal chain ($B$). 
The second and third components of a state represent, respectively, the lookback letter $a$ read to reach the state, 
and the current state $q$ in $\mathcal{A}_1$.
To illustrate the meaning of the last component, 
consider an accepting run of $\mathcal A_1$ 
and let $q$ be the current state just before a mark move is going to be performed 
at the beginning of an open chain; 
also let $r$ be the state reached by the mark move and $s$ be the state on top of the stack 
when this open chain is to be flushed replacing $q$ with a new state $p$. 
Then, in the same position of the corresponding run of $\mathcal A_2$, 
the current state would be $\langle Z, a, q, p \rangle \in Q_2$ and state 
$\langle x, a, r, s \rangle \in Q_2$ will be reached 
by $\mathcal A_2$ ($x$ being nondeterministically anyone of $B$, $Z$, $U$), i.e., the last component $p$ represents a guess about the state that will replace $q$ 
in $\mathcal A_1$ when the starting open chain will be flushed.
 Hence we can consider only states  $\langle Z, a, q, p \rangle \in Q_2$ such that 
$s \flush q p$ in $\mathcal A_1$ for some $s \in Q_1$.
In all other positions the last component of the states in $Q_2$ is simply propagated.

\item $I_2 = \{\langle  x, \#, q, q_F \rangle \mid x \in \{Z, B\}, q \in I_1, q_F \in F_1\}$
\item $F_2 = \{\langle  Z, a , q, q \rangle \mid  q \in Q_1, a \in \hat \Sigma\}$\smallskip
\item The transition function is defined as the union of two disjoint functions.

The push transition function $\delta_{\text{2push}} : Q_2 \times \Sigma \rightarrow 2^{Q_2} $ is defined as follows,
where $p,q,r,s \in Q_1$, $a \in \hat\Sigma$, and $b,c \in \Sigma$.

\begin{itemize}

\item \emph{Mark of a pending letter at the beginning of an open chain.} 
If $a \lessdot b$ then:
\[
\delta_{\text{2push}} \left( \langle Z, a, q, p \rangle, b \right) =
\left\{ 
\langle x, b, r, s \rangle  \mid 
x \in \{ B, Z, U \},
q \va b r, s \flush q p \text{ in } \mathcal A_1
\right\}
\]

\item \emph{Push of a pending letter within an open chain.} 
If $a \doteq b$ then:
\[
\delta_{\text{2push}} \left( \langle U, a, q, p \rangle, b \right) =
\left\{ 
\langle x, b, r, p \rangle  \mid 
x \in \{ B, Z, U \},
q \va b r  \text{ in } \mathcal A_1
\right\}
\]

\item \emph{Push/mark of a symbol of a maximal chain.} 
\[
\delta_{\text{2push}} \left( \langle B, a, q, p \rangle, b \right) =
\left\{ 
\langle B, b, r, p \rangle  \mid q \va b r \text{ in } \mathcal A_1 
\right\}
\]

\end{itemize}
Notice that the second and third components of the states computed by $\delta_{\text{2push}}$ 
are independent of the first component of the starting state.

The flush transition function $\delta_{\text{2flush}} : Q_2 \times Q_2 \rightarrow 2^{Q_2} $ 
can be executed only within a maximal chain since there are no flush determined by the ending delimiter:

\[
\delta_{\text{2flush}} \left( \langle B, b, q, s \rangle, \langle B, c, p, s \rangle \right) =
\left\{ 
\langle x, c, r, s \rangle  \mid 
x \in \{ B, Z, U \},
q \flush p r \text{ in } \mathcal A_1
\right\}
\]

All other moves lead to an error state.
\end{itemize}

\noindent The automata $\mathcal{A}_1$ and $\mathcal{A}_2$ recognize the same language, $L(\mathcal{A}_1) = \widetilde L(\mathcal{A}_2)$.

Let us prove first $L(\mathcal{A}_1) \subseteq \widetilde L(\mathcal{A}_2)$.
Let $w \in L(\mathcal{A}_1) $ be a finite-length word. 
Then there exist a support $q \ourpath{w}{q'}$ in $\mathcal {A}_1$ with $q \in I_1$ and $q' \in F_1$.
If $ w = w_1 a_1 w_2 a_2 \dots w_n a_n \in L(\mathcal{A}_1) $ where $a_i$ are pending letters and $w_i$
are maximal chains, let $k$ be the number of open chains that remain on the stack after the parsing of the last symbol in $\Sigma$ of $w$, and let $a_{i_1}=a_1, a_{i_2}, \dots, a_{i_k}$ be their starting symbols. 
Also, for every $i=2, \dots, n$, let $t(i)$ be the greatest index $t$ such that $i_t < i$, i.e., 
$a_i$ is within the $t(i)$-th open chain starting with $a_{i_{t(i)}}$.
In particular, for $i = n$, if $a_{n-1} \lessdot a_n$ then $i_k = n$, otherwise $t(n) = k$.

Then the above support for $w$ can be decomposed as
\begin{equation}
\label{eq:path-concat}
q = \widetilde q_0 
\ourpath{w_1}{q_1} \va{a_1}{\widetilde q_1}
\ourpath{w_2}{q_2} \va{a_2}{\dots} \ourpath{w_n}{q_n}  \va{a_n}{\widetilde{q_n}} = p_k
\end{equation}
\[ 
\widetilde q_n = p_k \flush{q_{i_k}}{p_{k-1}} 
\flush{q_{i_{k-1}}}{p_{k-2}} 
\flush{}{\dots} \flush {}{p_2} 
\flush{q_{i_2}}{p_1} 
\flush{q_{i_1}=q_1}{p_0} = q' 
\]
where $q_i = \widetilde q_{i-1}$ if $w_i = \varepsilon$ for $i=1,2,\dots, n$.
Notice that, for every $t$, $q_{i_t}$ is the state reached in this path before the mark move that pushes symbol $a_{i_t}$ on the stack;
moreover, when the open chain starting with $a_{i_t}$ is to be flushed, the current state is $p_t$ 
and then state $q_{i_t}$ is replaced with $p_{t-1}$ on top of the stack.

Starting with state $\langle Z, \#, q_1, p_0 \rangle $ if $w_1 = \varepsilon$ or with
$\langle B, \#, \widetilde q_0, p_0 \rangle \ourpath {w_1}{\langle Z, \#, q_1, p_0 \rangle} $
if $w_1 \neq \varepsilon$, an accepting computation of $\mathcal A_2$ can be built on the basis of the following facts:
\begin{itemize}
\item Since 
$q_1 \va{a_1}{ \widetilde q_1}$ and $p_1 \flush {q_1}{p_0}$
in $\mathcal A_1$, then
$
\delta_{2\text{push}}(\langle Z, \#, q_1, p_0 \rangle, a_1) \ni \langle x, a_1, \widetilde q_1, p_1 \rangle 
$
in  $\mathcal A_2$ for $x \in \{U,Z\}$.
This is a mark move that can be applied at the beginning of the first open chain starting with $a_1$, where
 $p_1$ is the guess about the state that will be reached before such open chain will be flushed.

\item In general, for every $t$, since 
$q_{i_t} \va{a_{i_t}}{ \widetilde q_{i_t}}$ and $p_t \flush {q_{i_t}}{p_{t-1}}$
in $\mathcal A_1$, then\\
$
\delta_2(\langle Z, a_{i_{t}-1}, q_{i_t}, p_{t-1} \rangle, a_{i_t}) \ni \langle x, a_{i_t}, \widetilde q_{i_t}, p_t \rangle 
$
for $x \in \{U,Z\}$.
This is a mark move that can be applied at the beginning of the $t$-th open chain starting with $a_{i_t}$, where
$p_t$ is the guess about the state that will be reached before such open chain will be flushed.
In particular, if $i_k = n$, we can reach state
$\langle Z, a_n, \widetilde q_n, p_k \rangle$ which is final in $\mathcal A_2$ since $q_n = p_k$.

\item For every maximal chain $w_i$ of $w$ (with $i \geq 2$) consider its support $\va{a_{i-1}}{\widetilde q_{i-1}} \ourpath{w_i}{q_i}$ in~\eqref{eq:path-concat}.
Then in $\mathcal A_2$ we have the sequence of moves ``summarized'' (with a natural overloading of the notation) by
$
\delta_2 \left( \langle B, a_{i-1}, \widetilde q_{i-1}, p_{t(i)} \rangle, w_i \right)
\ni \langle x, a_{i-1}, q_i, p_{t(i)} \rangle
$,
where $x \in \{U,Z\}$.
Notice that the last component of the states does not change because we are within a maximal chain. In particular, 
during the parsing of $w_i$
the last component is equal to $p_{t(i)}$, as guessed by the mark move 
at the beginning of the current open chain.

\item 
For every $i\not\in\{i_1, i_2, \dots, i_k\}$, since $\delta_{1\text{push}} (q_i, a_i) \ni \widetilde q_i$, then
$\delta_{2\text{push}} ( \langle U, a_{i-1}, q_i, p_{t(i)} \rangle, a_i ) $ 
contains $\langle x, a_i, \widetilde q_i, p_{t(i)} \rangle$,
for $x \in \{B,Z,U\}$.
In particular, if $n \neq i_k$, then $t(n) = k$ and for $i = n$ we can reach state
$\langle Z, a_n, \widetilde q_n, p_k \rangle$ which is final in $\mathcal A_2$ since $q_n = p_k$.

\end{itemize}
Thus, 
by composing in the right order the previous moves,
one can obtain an accepting computation for $w$ in $\mathcal A_2$.

Conversely, to prove that $\widetilde L(\mathcal{A}_2) \subseteq L(\mathcal{A}_1)$, consider a finite word $w \in \widetilde L(\mathcal{A}_2)$. Then there exists a successful run of $\mathcal{A}_2$ on $w$.
Let $w$ be factorized as above;
then the accepting run for $w$ can be decomposed as
\[
\pi_0 \ourpath{w_1}{\rho_1} \va{a_1}{\pi_1} \ourpath{w_2}{\rho_2} 
\dots
\rho_i \va{a_i}{\pi_i} \ourpath{w_{i+1}}{}
\dots 
\ourpath{w_{n}}{\rho_n} \va{a_n}{\pi_n} 
\]
where $\pi_i, \rho_i \in Q_2$, 
$\rho_i = \pi_{i-1}$ if $w_i = \varepsilon$, 
$\pi_0 \in I_2$ and $\pi_n \in F_2$.
%
By projecting this path on the third component of states $\pi_i$ and $\rho_i$ (given by, say, $p_i$ and $r_i \in Q_1$),
 we obtain a path in $\mathcal A_1$
labelled by $w$. This path is not accepting because there are open chains left on the stack that need flushing,
but we can complete this path arguing by induction on the structure of maximal chains 
according to the definition of $\delta_2$.
More formally,  
one can verify that $Q_1$ contains suitable states
$p_i$ (for $0 \leq i \leq n)$, 
$r_i$ (for $1 \leq i \leq n)$, 
$s_t$ (for $1 \leq t \leq k)$, 
with $r_i = p_{i-1}$ whenever $w_i = \varepsilon$,
such that the following facts hold.
\begin{itemize}

\item 
$\pi_0 \in I_2$, hence $\pi_0 = \langle x_0, \#, p_0, s_0 \rangle$,
with $p_0 \in I_1$ and $s_0 \in F_1$;  
$x_0$ is $B$ if $w_1 \neq \varepsilon$, otherwise $x_0 = Z$. 

\item 
$\pi_0 \ourpath{w_1}{\rho_1}$ in $\mathcal A_2$ implies that the last component of state $\pi_0$
is propagated through chain $w_1$ without change; hence
$\rho_1 = \langle Z, \#, r_1, s_0 \rangle$ 
with $p_0 \ourpath{w_1}{r_1}$ in $\mathcal A_1$. 

\item 
$\rho_1 \va{a_1}{\pi_1}$ is a mark move of $\mathcal A_2$ at the beginning of an open chain,
and this implies that the last component of $\pi_1$ is new; hence we have
$\pi_1 = \langle x_1, a_1, p_1, s_1 \rangle$ 
with $r_1 \va{a_1}{p_1}$ and $s_1 \flush{r_1}{s_0}$ in $\mathcal A_1$;
the first component is $x_1 = B$ if $w_2 \neq \varepsilon$ otherwise $x_1$ equals $Z$ or $U$
according to whether $a_2$ starts an open chains or not, respectively,

\item 
The flush moves within $\pi_i \ourpath{w_{i+1}}{\rho_{i+1}}$ for $1 \leq i < i_2$,
and the push moves within an open chain $\rho_i \va{a_i}{\pi_i}$ for $1 < i < i_2$
propagate with no change the last component of states. Hence  
$\rho_i =  \langle U, a_{i-1}, r_i, s_1 \rangle$ 
and $\pi_i = \langle x_i, a_i, p_i, s_1 \rangle$ 
with $p_{i-1} \ourpath{w_i}{r_i}\va{a_i}{p_i}$ in $\mathcal A_1$.
The first component is 
$x_i = B$ if $w_i \neq \varepsilon$ otherwise $x_i = Z$ for $i = i_2-1$ and $x_i = U$ in the other cases.

\item 
$\rho_{i_2} \va{a_{i_2}}{\pi_{i_2}}$ is a mark move of $\mathcal A_2$ at the beginning of an open chain,
and this implies that the last component of $\pi_1$ is new; hence we have
$\pi_{i_2}  = \langle x_{i_2} a_{i_2}, p_{i_2}, s_2 \rangle$ 
with $r_{i_2} \va{a_{i_2}}{p_{i_2}}$ and $s_2 \flush{r_{i_2}}{s_1}$ in $\mathcal A_1$. 
The first component is $x_{i_2} = B$ if $w_{i_2} \neq \varepsilon$ otherwise $x_1$ equals $Z$ or $U$
according to whether $a_{i_2}+1$ starts an open chains or not, respectively.

\item Similarly for the following moves in the run.
\end{itemize}
In general, we get 
\begin{eqnarray*}
\rho_i = \langle y_i, a_{i-1}, r_i, s_{t(i)} \rangle &\qquad& \text{for every } i = 1, 2, \dots, n, \\
%
\pi_i  =\langle x_i, a_i, p_i, s_{t(i)} \rangle &\qquad& \text{for every } i \not \in \{i_1, i_2, \dots, i_k\}, \\
\pi_{i_t} =  \langle x_{i_t}, a_{i_t}, p_{i_t}, s_t \rangle &\qquad& \text{for every } t =1, 2, \dots, k,\\
\text{with } r_i \va{a_i}{p_i}, \  s_t \flush{r_{i_t}}{s_{t-1}}, \ p_{i-1} \ourpath{w_i}{r_i}
\text{ in } \mathcal A_1\\
\text{and }y_i \in \{Z,U\}, x_i \in \{B, Z, U\}
&\qquad& \text{for every } i \text{ and } t.
\end{eqnarray*}
%
By convention, $a_0 = \#$.
For $i = n$ we have $n = i_k$ or $ t(n) = k$, hence 
$\pi_n =  \langle x_n, a_n, p_n, s_k \rangle$, and  $p_n = s_k$ and $x_n = Z$ since $\pi_n \in F_2$.
\noindent Thus, in $\mathcal A_1$ there is an accepting run
\[
I_1 \ni p_0 \ourpath{w_1}{r_1} \va{a_1}{p_1} \ourpath{w_2}{r_2} 
\dots
r_i \va{a_i}{p_i} \ourpath{w_{i+1}}{}
\dots 
\ourpath{w_{n}}{r_n} \va{a_n}{p_n} = s_k
\]
\[ 
p_n = s_k \flush{r_{i_k}}{s_{k-1}} 
\flush{r_{i_{k-1}}}{s_{k-2}} 
\flush{}{\dots} \flush {}{s_2} 
\flush{r_{i_2}}{s_1} 
\flush{r_{i_1}=r_1}{s_0} \in F_1
\] 
and this concludes the proof of the lemma.
\hfill\qed

\end{proof}

The next Statement, although not necessary to prove closure under concatenation of \lof{\bfa}, completes the proof of equivalence between traditional and variant OPAs, showing how to define, for any variant OPA, a classical OPA which recognizes the same language.
\begin{paragraph}{\textnormal{\textbf{Statement 1}}}
\label{th:FWFA}
Let $\mathcal{A}_2$ be a nondeterministic OPA defined on an OP alphabet $(\Sigma, M)$ with $s$ states.
Then there exists a nondeterministic OPA $\mathcal{A}_1$ with the same precedence matrix as $\mathcal{A}_2$ and $O(|\Sigma|^2 s)$ states such that $L(\mathcal {A}_1) = \widetilde L (\mathcal A_2)$.
\end{paragraph}
\begin{proof}
Let $\mathcal{A}_2 = \langle \Sigma, M, Q, I, F, \delta  \rangle$ and consider, first, an equivalent form for the automaton $\mathcal{A}_2$, where all the states are simply enriched with a lookahead and lookback symbol:
$\mathcal{\tilde{A}}_2 = \langle \Sigma, M, Q_2, I_2, F_2, \delta_2  \rangle$ where

\begin{itemize}
\item $Q_2 = \hat \Sigma \times Q \times \hat \Sigma$, where $\hat{\Sigma} = (\Sigma \cup \left\{ \# \right\} )$, i.e. the first component of a state is the lookback symbol, the second component of the triple is a state of $\mathcal{A}_2$ and the third component of the state is the lookahead symbol,
\item $I_2 = \{\#\} \times I \times \{a \in \hat \Sigma \mid M_{\# a} \neq \emptyset\}$ is the set of initial states of $\mathcal {\tilde{A}}_2$,
\item $F_2 = (\{\#\} \cup \{b \in \Sigma: b \gtrdot \#\} ) \times F \times \{\#\}$
\item and the transition function $\delta_2 : Q_2 \times ( \Sigma \cup Q_2) \rightarrow 2^{Q_2}$ is defined in the following natural way
\begin{itemize}
\item $\delta_{\text{2push}}(\langle  a, q, b \rangle, b) =\{ \langle b, p, c \rangle \mid p \in \delta_{\text{push}}(q,b) \wedge M_{a b} \in \{ \lessdot, \doteq \}  \wedge M_{b c} \neq \emptyset \},$\smallskip\\
$\forall a \in \hat \Sigma, b \in \Sigma, q \in Q$
\item $\delta_{\text{2flush}}( \langle a_1, q_1, a_2 \rangle, \langle b_1, q_2, b_2 \rangle) =\{\langle b_1, q_3, a_2 \rangle \mid q_3 \in \delta_{\text{flush}}(q_1,q_2) \text{ } \wedge \ M_{a_1 a_2} = \gtrdot$\smallskip\\ $ \wedge \ M_{b_1 a_2} \neq \emptyset\},$\smallskip\\
$\forall a_1, a_2, b_2 \in \Sigma, \forall b_1 \in \hat \Sigma, \forall q_1, q_2 \in Q$.
\end{itemize}
\end{itemize}

It is clear that $\widetilde L(\mathcal{A}_2) = \widetilde L(\mathcal{\tilde{A}}_2)$. Furthermore, the final states of $\mathcal{\tilde{A}}_2$ cannot be reached by flush edges: in fact, if there exists a transition $\triple{a_1}{q_1}{a_2} \lflush{\triple{b_1}{q_2}{b_2}} \triple{a_1}{q_3}{\#}$ towards a final state $\triple{a_1}{q_3}{\#}$, then the third component of the flushed and of the reached final state must be equal by definition of the transition function, i.e $ \triple{a_1}{q_1}{a_2} = \triple{a_1}{q_1}{\#}$. But this flush transition cannot be performed by a variant OPA, which stops a computation right before reading the delimiter $\#$, when the parsing of the word ends.

Hence, one may always refer to a variant OPA assuming that in its graph there are no flush moves towards final states.

It is then possible to describe an automaton OPA $\mathcal{A}_1$ equivalent to the variant OPA $\mathcal{A}_2$ (or $\mathcal{\tilde{A}}_2$). 

$\mathcal{A}_1 = \langle \Sigma, M, Q_1, I_1, F_1, \delta_1  \rangle$ is defined as $\mathcal{\tilde{A}}_2$ but it is enriched with an additional state, which is the only final state of $\mathcal{A}_1$ and which is reachable through a flush edge by all final states of $\mathcal{\tilde{A}}_2$. Basically, its role is to let $\mathcal{A}_1$ empty the stack after parsing a word that is accepted by $\mathcal{\tilde{A}}_2$. 
\begin{itemize}
\item $Q_1 = Q_2 \cup \{q_\text{accept}\}$
\item $I_1 = I_2  \cup \{q_\text{accept}\} \text{ if } I_2 \cap F_2 \neq \emptyset$ or $I_1 = I_2$ otherwise
\item $F_1 = \{q_\text{accept}\}$
\item The transition function $\delta_1$ equals $\delta_2$ on all states in $Q_2$; in addition $\mathcal{A}_1$ has departing flush edges from the final states in $F_2$ to $q_\text{accept}$ and $q_\text{accept}$ has no outgoing push/mark edge but only self-loops flush edges.\\
The push transition function $\delta_{\text{1push}}: Q_1 \times \Sigma \rightarrow 2^{Q_1}$ is defined as $\delta_{\text{1push}}(q,c) = \delta_{\text{2push}}(q,c), \forall q \in Q_2, c \in \hat{\Sigma}$, whereas $\delta_{\text{1push}}(q_\text{accept},c)$ leads to an error state for any $c$.\\
The flush transition $\delta_{\text{1flush}}: Q_1 \times Q_1 \rightarrow 2^{Q_1}$ is defined by:\smallskip\\
$\delta_{\text{1flush}}(q,p) = \delta_{\text{2flush}}(q,p), \forall q, p \in Q_2$\smallskip\\
$\delta_{\text{1flush}}(q, p) = q_\text{accept}, \forall q \in (F_2 \cup \{q_\text{accept}\}) , p \in Q_2$
\end{itemize}
The two automata recognize the same language, $L(\mathcal{A}_1) =\widetilde L(\mathcal{\tilde{A}}_2)$.

First of all, $L(\mathcal{A}_1) \subseteq \widetilde L(\mathcal{\tilde{A}}_2)$: in fact, if the OPA $\mathcal{A}_1$ recognizes a word, then it is either the empty word and thus $q_\text{accept} \in I_1$ and also $\mathcal{\tilde{A}}_2$ has a successful run on it, or $\mathcal{A}_1$ recognizes a word $w \neq \varepsilon$ and there exists a run $S$ of $\mathcal{A}_1$ which ends in the final state $q_\text{accept}$, emptying the stack. Notice that $q_\text{accept}$ is reached by a flush move from a state in $F_2$, say $q_f \in F_2$:
$$S: q_0 \in I_2 \ourpath{w} q_f \flush{ } q_\text{accept} (\lflush{p \in Q_1} q_\text{accept})^*$$
and $q_f$ itself is reached exactly when the parsing of the word $w$ is finished, since, as said before, a state in $F_2$ cannot be reached by flush moves. This condition is necessary to avoid the presence of sequences of flush moves from non accepting states towards final states.
Then the path from $q_0$ to $q_f$, which follows the same state and edges as $S$, represents a run of $\mathcal{\tilde{A}}_2$ which ends in a final state $q_f$ right after the parsing of the whole word, thus  accepting $w$.
The direction from right to left $L(\mathcal{A}_1) \supseteq \widetilde L(\mathcal{\tilde{A}}_2)$ derives easily from the fact that, if $\mathcal{\tilde{A}}_2$ accepts a word along a successful run, then $\mathcal{A}_1$ recognizes the word along the same run, possibly emptying the stack in the final state $q_\text{accept}$.
$\hfill\qed$
\end{proof}

Given the variant for OPAs on finite words, it is possible to prove the closure under concatenation of the class of languages accepted by \bfa s with a language of finite words accepted by an OPA, as the following theorem (Theorem~\ref{th:clConcatBFAF}) states. Notice that its proof differs from the non-trivial proof of closure under concatenation of OPLs of finite-length words~\cite{Crespi-ReghizziM12}, which, instead, can be recognized deterministically.

\begin{theorem} 
\label{th:clConcatBFAF}
Let $L_1 \subseteq \Sigma^*$ be a language of finite words recognized by an OPA with OPM $M_1$ and $s_1$ states.
Let $L_2 \subseteq \Sigma^\omega$ be an $\omega$-language recognized by a nondeterministic \bfa\ with OPM $M_2$ compatible with $M_1$ and $s_2$ states.\\
Then the concatenation $L_1 \cdot L_2$ is also recognized by an \bfa\ with OPM $M_3\supseteq M_1 \cup M_2$ and $O(|\Sigma|( s_{1}^2 +  s_{2}^2))$ states.
\end{theorem}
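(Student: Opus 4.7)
The plan is to build $\mathcal{A}_3$ by running a variant-style OPA for $L_1$ and, at a nondeterministically chosen point of acceptance, ``grafting'' the $\omega$OPBA for $L_2$ on top of the resulting stack. The first step is to invoke Lemma~\ref{th:FAFW} to replace the classical OPA for $L_1$ with an equivalent variant OPA $\mathcal{A}_1' = \langle \Sigma, M_1, Q_1', I_1', F_1', \delta_1' \rangle$ with $O(|\Sigma|s_1^2)$ states, so that $\widetilde{L}(\mathcal{A}_1') = L_1$. The variant semantics is essential here because a classical OPA acknowledges acceptance of $w_1 \in L_1$ only after the cascade of final flushes triggered by $\#$ empties its stack, flushes which in general cannot be carried out when $w_1$ is followed by an infinite continuation $w_2$ whose precedences may force the stack to keep growing (exactly the $L_2 = \{a^\omega\}$ scenario discussed just before the theorem).

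Next, define $\mathcal{A}_3 = \langle \Sigma, M_3, Q_1' \uplus Q_2, I_3, F_2, \delta_3 \rangle$, where $M_3$ is any conflict-free OPM extending $M_1 \cup M_2$ (which exists by the compatibility hypothesis), $I_3 = I_1'$ augmented by $I_2$ if $\varepsilon \in L_1$, and the Büchi final set is $F_2$. The transition function $\delta_3$ contains, unchanged, all transitions of $\mathcal{A}_1'$ and of $\mathcal{A}_2$; it additionally contains \emph{switch} push/mark transitions $\delta_{3,\text{push}}(q_f, a) \supseteq \delta_{2,\text{push}}(q_I, a)$ for every $q_f \in F_1'$, $q_I \in I_2$, $a \in \Sigma$, and \emph{boundary-flush} transitions $\delta_{3,\text{flush}}(r, p) \supseteq \bigcup_{q_I \in I_2} \delta_{2,\text{flush}}(r, q_I)$ for every $r \in Q_2$, $p \in Q_1'$. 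The state count is $|Q_1'| + |Q_2| = O(|\Sigma|(s_1^2 + s_2)) \subseteq O(|\Sigma|(s_1^2 + s_2^2))$.

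For $L_1 \cdot L_2 \subseteq L(\mathcal{A}_3)$, given $w_1 \in L_1$ and $w_2 \in L_2$, simulate $\mathcal{A}_1'$'s variant-accepting run on $w_1$ to reach some $q_f \in F_1'$ with a stack of open chains; then fire a switch transition to read the first letter of $w_2$, landing in $Q_2$, and thereafter mirror $\mathcal{A}_2$'s accepting run on $w_2$, resorting to a boundary flush exactly when $\mathcal{A}_2$'s run would update the state of a cell that lies in the $Q_1'$-portion of the stack, so that the $Q_1'$-cell plays, for that one move, the role of an initial $\mathcal{A}_2$-state. Conversely, any accepting run of $\mathcal{A}_3$ must eventually cross into $Q_2$ via a switch transition (since $F_2 \subseteq Q_2$ has to be visited infinitely often at the top of the stack); the prefix read up to that moment is entirely processed by $\mathcal{A}_1'$-transitions ending in $F_1'$, so the prefix lies in $\widetilde{L}(\mathcal{A}_1') = L_1$, and projecting the suffix-run onto $\mathcal{A}_2$ (replacing each boundary flush by an $\mathcal{A}_2$-flush updating a virtual initial state) produces a bona fide successful $\mathcal{A}_2$-run on the $\omega$-suffix, placing it in $L_2$.

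The main obstacle is showing, by induction on the chain structure of $w_2$, that the grafted computation of $\mathcal{A}_2$ on top of a $Q_1'$-bottomed stack faithfully simulates $\mathcal{A}_2$'s standalone computation above a $\#$-cell: the successively exposed $Q_1'$-cells each absorb one state-update that in $\mathcal{A}_2$'s original run would have landed on the $\#$-cell, and the correspondence remains consistent thanks to the nondeterministic choice of $q_I \in I_2$ inside each boundary flush. A secondary subtlety is the treatment of the boundary entries of $M_3$: wherever $M_1 \cup M_2$ leaves $M_3(c,b)$ unconstrained for $c$ a possible top-of-$\mathcal{A}_1'$-stack symbol and $b$ a possible initial letter of $L_2$, one sets it to $\lessdot$ or $\doteq$ so that the switch is performed by a push/mark; the remaining entries, if forced to $\gtrdot$, are absorbed by allowing $\mathcal{A}_1'$ to carry out the resulting flushes \emph{before} the switch, at the price of at most a polynomial enlargement of $F_1'$ that still fits within the claimed state bound.
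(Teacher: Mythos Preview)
Your overall strategy --- replace $\mathcal{A}_1$ by the variant OPA of Lemma~\ref{th:FAFW} and then graft $\mathcal{A}_2$ on top of the leftover stack --- is exactly the paper's. The gap is in the grafting: your state set $Q_1'\uplus Q_2$ is too small to simulate $\mathcal{A}_2$ correctly once the parse structure of $w_2$ \emph{as a suffix of $w_1w_2$ under $M_3$} diverges from its structure \emph{as a standalone input with $\#$ on the left}. Concretely, after the first boundary flush your top cell carries some $s\in Q_2$ but its \emph{symbol} is still $a_k$ (the last pending letter of $w_1$), not $\#$. If $M_3(a_k,c)\in\{\doteq,\gtrdot\}$ for the next input letter $c$, then $\mathcal{A}_3$ is forced to push unmarked or to flush again, whereas $\mathcal{A}_2$ would mark-push (since $\#\lessdot c$); from that point on the mark positions, and hence all subsequent flush computations, disagree with $\mathcal{A}_2$'s run. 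Your boundary-flush rule $\delta_{3,\text{flush}}(r,p)\supseteq\bigcup_{q_I\in I_2}\delta_{2,\text{flush}}(r,q_I)$ does not repair this: it always pretends the cell below the mark holds an \emph{initial} $\mathcal{A}_2$-state, but after the first flush $\mathcal{A}_2$'s $\#$-cell already holds an evolved state, and moreover there are only finitely many $Q_1'$-cells to ``absorb'' what in $\mathcal{A}_2$'s run may be infinitely many updates to a single $\#$-cell. Your ``secondary subtlety'' paragraph handles only the one-off switch move, not this recurring mismatch, and you are not free to set the relevant $M_3$-entries since they are typically fixed by $M_1\cup M_2$.

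The paper's fix is precisely what accounts for the $s_2^2$ term in the bound that your construction lacks: each $\mathcal{A}_2$-state is enriched to a triple $\langle a,p,r\rangle\in\hat\Sigma\times Q_2\times(Q_2\cup\{-\})$, where $a$ is $\mathcal{A}_2$'s \emph{virtual} lookback (so that one can tell whether $\mathcal{A}_2$ would mark or merely push, independently of the actual top symbol inherited from $w_1$) and $r$ records the $\mathcal{A}_2$-state at the last mark of $\mathcal{A}_2$'s virtual stack. A flush reaching a $Q_1'$-cell then uses $\delta_{2,\text{flush}}(p,r)$ rather than a guessed $q_I$, and the ``spurious'' extra flushes into $Q_1'$-territory are absorbed by the idempotent rule $\delta_{3,\text{flush}}(\langle\#,p,-\rangle,q)=\langle\#,p,-\rangle$. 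Without this extra tracking your simulation is not sound in either direction.
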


\begin{proof}
Let $\mathcal{A}_1 = \langle \Sigma, M_1, Q_1, I_1, F_1, \delta_1  \rangle$  be a nondeterministic OPA which recognizes language $L_1$ and let  $\mathcal{A}_2 = \langle \Sigma, M_2, Q_2, I_2, F_2, \delta_2  \rangle$ be a nondeterministic \bfa\ with OPM $M_2$ compatible with $M_1$ which accepts $L_2$. Suppose, without loss of generality, that $Q_1$ and $Q_2$ are disjoint.

To define an automaton \bfa\ $\mathcal{A}_3$ which accepts the language $L_1 \cdot L_2$, we first build an automaton OPA in the variant form  $\mathcal{A'}_1 = \langle \Sigma, M_1, Q'_1, I'_1, F'_1, \delta'_1  \rangle$ such that $\widetilde L(\mathcal{A'}_1) = L(\mathcal{A}_1)$.

The automaton $\mathcal{A}_3$ may recognize the first finite words in the concatenation $L_1 \cdot 
L_2$ simulating $\mathcal{A'}_1$: during the parsing of the input string, if $\mathcal{A'}_1$ reaches a final state at the end of a finite-length prefix, then it belongs to $L_1$ and  $\mathcal{A}_3$ may immediately start the recognition of the second infinite string without the need to perform any flush move to empty the stack.
From this point onwards, then, $\mathcal{A}_3$ may check that the remaining infinite portion of the input belongs to $L_2$, behaving as the \bfa\ $\mathcal{A}_2$. Notice, however, that as it happens for operator precedence languages of finite-length words~\cite{Crespi-ReghizziM12}, the strings of the concatenation of two OPLs may have syntax trees that significantly differ from the concatenation of the trees of the single words: the trees of the strings of the two languages may be merged, according to the precedence relations between the symbols of the words, in a completely new structure. From the point of view of the parsing of a string in $L_1 \cdot L_2$ by an automaton, the joining of the trees of two words in $L_1$ and $L_2$ may imply that the recognition and reduction by flush moves of a subtree with branches in a word in $L_1$ have to be postponed until the parsing of the other branches in the word in $L_2$ has been completed.
Therefore, $\mathcal{A}_3$ cannot merely read the second infinite word performing the same transitions as $\mathcal{A}_2$, but it is still possible to simulate this \bfa\ keeping in the states some summary information about its runs. In this way, while reading the second word in the concatenation, whenever $\mathcal{A}_3$ has to reduce a subtree which extends to the previous word in $L_1$ and thus it has to perform a flush move that involves the portion of the stack piled up during the parsing of the first word, it can still restore on the stack the state that $\mathcal{A}_2$ would instead have reached, resuming the parsing of the second word thereon as in a run of $\mathcal{A}_2$.

In particular, the automaton $\mathcal{A}_3$ is defined as follows. Let $\hat{\Sigma}$ be $ \Sigma \cup \{\#\}$ and $\mathcal{A}_3 = \langle \Sigma, M_3, Q_3, I_3, F_3, \delta_3  \rangle$ where:
\begin{itemize}
\item $M_3 \supseteq M_1 \cup M_2$ and may be supposed to be a total matrix, for instance assigning arbitrary precedence relations to the empty entries, so that the strings in the concatenation of languages $L_1$ and $L_2$ are compatible with $M_3$.
\item $Q_3 = Q'_1 \ \cup \ \hat{\Sigma}\times Q_2 \times (Q_2 \cup \{-\})$, i.e. the set of states of $\mathcal{A}_3$ includes the states of $\mathcal{A'}_1$, while the states of $\mathcal{A}_2$ are extended with two components. The first component is a lookback symbol, the second component is the state of $Q_2$ that would be reached by $\mathcal{A}_2$ during its corresponding computation, and the third represents, as in the construction for deterministic OPAs~\cite{LonatiMandrioliPradella2011b}), the state with the marked symbol that, when the current input letter is read in a run performed by $\mathcal{A}_2$ on the infinite substring, is the last marked symbol on the stack. Storing this component is necessary to guarantee that, whenever the automaton $\mathcal{A}_3$ has to perform a flush move towards states piled in the stack during the recognition of the first word in the concatenation, it is still possible to compute the state that $\mathcal{A}_2$ would have reached instead.\\ This third component is denoted $'-'$ if all the preceding symbols in the stack have been piled during the parsing of the first word of the concatenation (thus the stack of $\mathcal{A}_2$ is empty).
\item $I_3 = I'_1 \cup \{\langle  \#, p_0, - \rangle \mid p_0 \in I_2\}$ if $\varepsilon \in L_1$ or $I_3 = I'_1$ otherwise
\item $F_3 = \hat{\Sigma} \times F_2 \times Q_2$
\item The transition function $\delta_3: Q_3 \times ( \Sigma \cup Q_3) \rightarrow 2^{Q_3}$ is defined as follows.
The push transition $\delta_{\text{3push}}: Q_3 \times \Sigma \rightarrow 2^{Q_3}$ is defined by:
\begin{itemize}
\item $\delta_{\text{3push}}(q_1, c) = \delta'_{\text{1push}}(q_1, c) \cup \{\langle  \#, p_0, -\rangle \mid p_0 \in I_2, \text{ if }\exists q_f \in F'_1 \text{ s.t. } \delta'_{\text{1push}}(q_1,c) \ni q_f\}$, $ \forall q_1 \in Q'_1, c \in \Sigma$, 

i.e., it simulates $\mathcal{A'}_1$ on $Q'_1$ or nondeterministically enters the initial states of $\mathcal{A}_2$ after the recognition of a word in $L_1$
\item $\delta_{\text{3push}}(\langle a, p, r \rangle, c) = \left\{  
\begin{array}{ll}
\{\langle c, q, p \rangle \mid q \in \delta_{\text{2push}}(p, c)\} & \text{if } a \lessdot c \\
\{\langle c, q, r \rangle \mid q \in \delta_{\text{2push}}(p, c)\} & \text{if } a\doteq c
\end{array}
 \right. $\smallskip\\
 for $a \in \hat{\Sigma}, c \in \Sigma, p \in Q_2, r \in (Q_2 \cup \{-\})$
 \end{itemize}
 
The flush transition $\delta_{\text{3flush}}: Q_3 \times Q_3 \rightarrow 2^{Q_3}$ is defined by:
\begin{itemize}
\item $\delta_{\text{3flush}}(q_1, p_1) = \delta'_{\text{1flush}}(q_1, p_1), \forall q_1, p_1 \in Q'_1$, i.e. it simulates $\mathcal{A'}_1$ on $Q'_1$
\item $\delta_{\text{3flush}}(\langle \#, p, - \rangle, q) = \langle  \#, p, - \rangle$, with $p \in Q_2, q \in Q'_1$
\item $\delta_{\text{3flush}}(\langle  a_1, p_1, r_1 = p_2 \rangle, \langle a_2, p_2, r_2 \rangle) = \{\langle  a_2, q, r_2  \rangle \mid q \in \delta_{\text{2flush}}(p_1, p_2)\}$,\smallskip\\ where $a_1 \in \Sigma, a_2 \in \hat{\Sigma}$
\item $\delta_{\text{3flush}}(\langle a, p, r \rangle, q) = \{\langle  \#, s, - \rangle  \mid s \in \delta_{\text{2flush}}(p, r) \}$, for $a \in \Sigma, p, r \in Q_2, q \in Q'_1$\smallskip\\ i.e. whenever the precedence relations induce a merging of the subtrees of the words of the concatenation, $\mathcal{A}_3$ restores the state $s$ at the bottom of the stack of $\mathcal{A}_2$ from which a run of $\mathcal{A}_2$ will continue.
\end{itemize}
\end{itemize}

 It is clear that the \bfa\ $\mathcal{A}_3$ recognizes $L_1 \cdot L_2$, thus the class of languages accepted by \bfa\ is closed under concatenation on the left with languages recognized by OPAs.$\hfill\qed$
\end{proof}

\subsection*{Closure under complementation}
\label{sec:BooleanBFAF}

\begin{theorem} 
\label{th:clComplementBFAF}
Let $M$ be a conflict-free precedence matrix on an alphabet $\Sigma$.
Denote by $L_M\subseteq \Sigma^\omega$ the $\omega$-language comprising all infinite words $x\in\Sigma^\omega$ compatible with $M$.\\
Let $L$ be an $\omega$-language on $\Sigma$ that can be recognized by a nondeterministic \bfa\ with precedence matrix $M$ and $s$ states. Then the complement of $L$ w.r.t $L_M$ is recognized by an \bfa\ with the same precedence matrix $M$ and $2^{O(s^2)}$ states.
\end{theorem}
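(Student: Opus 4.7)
The plan is to complement $\mathcal{A}$ by a Kupferman--Vardi / Miyano--Hayashi-style subset construction, suitably lifted to the pushdown discipline of operator precedence automata. The complement $\overline{\mathcal{A}}$ will share the OPM $M$ with $\mathcal{A}$, because $M$ alone dictates the parsing structure (chains, flushes and their nesting) of every word compatible with $M$; only the nondeterministic choices inside the state set must be tracked. The natural ``summary'' object attached to a chain body in an OPA is a binary relation on states, namely the set of pairs $(p,q)$ such that $\mathcal{A}$ has a support for that chain taking the state just below the chain from $p$ to $q$. For the Büchi setting we enrich each pair with a flag indicating whether a state in $F$ appears on top of the stack somewhere inside the body. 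There are at most $2^{2s^{2}}$ such enriched relations, which accounts for the announced $2^{O(s^{2})}$ bound.

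First I would take the macro-state of $\overline{\mathcal{A}}$ to encode, at each position in the parse of an input $x\in L_{M}$, the current \emph{forward profile}: the set of all triples $(p,q,b)$ such that, starting $\mathcal{A}$ in state $p$ on top of its stack at the beginning of the currently open chain, after consuming the symbols read so far $\mathcal{A}$ could be in state $q$ on top of its stack, with $b=1$ if some state of $F$ has appeared on top along the way. Push and mark moves of $\overline{\mathcal{A}}$ extend the current profile by composing with $\delta_{\text{push}}$ of $\mathcal{A}$ and updating the $b$-flags, while flush moves of $\overline{\mathcal{A}}$ combine the profile of the just-completed inner chain with that of the enclosing chain by composing via $\delta_{\text{flush}}$, mirroring the way supports of composed chains are built from supports of their sub-chains as in~\eqref{eq:compchain}. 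Because $\overline{\mathcal{A}}$ uses the same $M$, both automata perform exactly the same sequence of push/mark/flush moves on any $x\in L_{M}$.

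On top of the profile I would layer a Miyano--Hayashi-style ``promise'' component: a subset $B$ of the currently-tracked pairs that have been owed a visit to $F$ since the last reset; whenever $B$ becomes empty it is reseeded from the full profile and a Büchi checkpoint is declared. Then $\overline{\mathcal{A}}$ accepts (in the $\omega$OPBA sense, by having a dedicated set of ``reset'' macro-states appear at the top of its stack infinitely often) precisely when \emph{no} single infinite run of $\mathcal{A}$ manages to visit $F$ on top of its stack infinitely often, i.e.\ when $x\notin L(\mathcal{A})$. I expect the hard part to be verifying correctness of the flush composition for the $b$-flags and for the promise component: a single flush can collapse an arbitrarily long $\doteq$-block of the stack in one step, and promises may have been opened inside nested sub-chains, so one must check by induction on the chain-nesting depth (using the structural decomposition of supports for simple and composed chains recalled in Section~\ref{sec:prelim}) that the composed profile still carries, for every possible underlying run of $\mathcal{A}$, the correct ``visited $F$'' information and the right promise status. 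Since both profiles and promise sets live in $2^{Q\times Q}$-sized spaces, the total state count is $2^{O(s^{2})}$, as claimed.
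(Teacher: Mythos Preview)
Your profile component---the set of triples $(p,q,b)$ summarising how $\mathcal A$ can traverse a chain body---is the right object and coincides with what the paper calls \emph{semisupport} summaries. The gap is the acceptance layer. The mechanism you describe (an ``owing'' set $B$, reseeded and checkpointed whenever it empties) is the Miyano--Hayashi breakpoint, and together with the deterministic profile update it produces a \emph{deterministic} \bfa. But the paper itself establishes that \lof{\dbfa} is strictly contained in \lof{\bfa}, so no deterministic construction can work in general. Concretely, take the OPM on $\{a,b\}$ in which every letter yields precedence to every letter: the stack then only grows, flushes never occur, and an \bfa\ is just an ordinary nondeterministic B\"uchi automaton. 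Your construction, applied to the two-state \bfa\ for ``infinitely many $a$'s'', would have to output a \dbfa\ for ``finitely many $a$'s'', i.e.\ a deterministic B\"uchi automaton for that language---which does not exist. The underlying reason is that ``$B=\emptyset$ infinitely often'' asserts that \emph{every} tracked run passes through $F$ in each epoch, whereas the complement requires that \emph{no} run does; a single breakpoint set cannot separate these, and invoking Kupferman--Vardi is not enough unless you actually carry level rankings in the macro-state.

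The paper avoids complementing at the pushdown level altogether. It factors each $w\in L_M$ uniquely into pending letters and bodies of maximal chains, replaces each body $x$ by its summary $T(x)\in 2^{Q\times Q\times\{0,1\}}$, and observes that on the resulting \emph{pseudorun} over the finite alphabet $\Sigma\cup\mathcal T$ the acceptance of $\mathcal A$ is decided by an ordinary finite-state B\"uchi automaton $\mathcal A_R$ with $O(s)$ states. It then complements $\mathcal A_R$ at the regular level (via Safra, obtaining a deterministic Streett automaton $\mathcal B_R$), builds a nondeterministic \bfa\ transducer $\mathcal B$ with $2^{O(s^{2})}$ states that parses $w$ under $M$ and emits its pseudorun on the fly, and takes the product of $\mathcal B$ with the B\"uchi conversion of $\mathcal B_R$. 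Your profile machinery is essentially what this transducer $\mathcal B$ maintains; what is missing from your proposal is the reduction to a regular $\omega$-automaton, which is where the actual B\"uchi complementation is carried out.
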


\begin{proof}
The proof follows to some extent the structure of the corresponding proof for B\"{u}chi VPAs~\cite{jacm/AlurM09}, but it exhibits some relevant technical aspects which distinctly characterize it; in particular, we need to introduce an ad-hoc factorization of $\omega$-words due to the more complex management of the stack performed by \ofa s.

Let $\mathcal A = \langle \Sigma, M, Q, I, F, \delta \rangle $ be a nondeterministic \bfa\ with $|Q| =s$. 
Without loss of generality $\mathcal A$ can be considered complete with respect to the transition function $\delta$, i.e. there is a run of $\mathcal{A}$ on every $\omega$-word on $\Sigma$ 
compatible with $M$.

 In general, 
 a sentence on $\Sigma^\omega$ compatible with $M$ can be factored in a unique way so as to distinguish the subfactors of the string that can be recognized without resorting to the stack of the automaton and those subwords for which the use of the stack is necessary.\\
More precisely, an $\omega$-word $w \in \Sigma^\omega$ compatible with $M$ can be factored as a sequence of chains and pending letters $w = w_1 w_2 w_3 \ldots$ where either $w_i = a_i \in \Sigma$ is a pending letter or $w_i = a_{i1} a_{i2}\ldots a_{in}$ is a finite sequence of letters such that $\chain{l_i}{w_i}{first_{i+1}}$ is a chain,
where $l_i$ denotes the last pending letter preceding $w_i$ in the word and $first_{i+1}$ denotes the first letter of word $w_{i+1}$. Let also, by convention, $a_0 = \#$ be the first pending letter.

Notice that such factorization is not unique, since a string $w_i$ can be nested into a larger chain
having the same preceding pending letter. The factorization is unique, however, if we additionally require that $w_i$ has no prefix which is a chain.

As an example, for the word $w = \underbrace{\lessdot a \lessdot c \  \gtrdot} b \underbrace{\lessdot a \gtrdot} \underbrace{d \gtrdot} b \ldots$, with precedence relations in the OPM $a\gtrdot b$ and $b \lessdot d$, the unique factorization is $w = w_1 b w_3 w_4 b\ldots$, where $b$ is a pending letter and $\chain{\#}{a c}{b}, \chain{b}{a}{d}, \chain{b}{d}{b}$ are chains.

\noindent Define a \emph{semisupport for the simple chain}
$\chain {a_0} {a_1 a_2 \dots a_n} {a_{n+1}}$
as any path in $\mathcal A$ of the form
\begin{equation}
\label{eq:pseudosimplechain}
q_0
\va{a_1}{q_1}
\va{}{}
\dots
\va{}q_{n-1}
\va{a_{n}}{q_n}
\flush{q_0} {q_{n+1}}
\end{equation}

\noindent A \emph{semisupport for the composed chain}, with no prefix that is a chain, 
$\chain {a_0} {a_1 x_1 a_2 \dots a_n x_n} {a_{n+1}}$
is any path in $\mathcal A$ of the form
\begin{equation}
\label{eq:pseudocompchain}
q_0
\va{a_1}{q_1}
\ourpath{x_1}{q'_1}
\va{a_2}{}
\dots
\va{a_n} {q_n}
\ourpath{x_n}{q'_n}
\flush{q_0}{q_{n+1}}
\end{equation}
where, for every $i: 1\leq i \leq n$: 
\begin{itemize}
\item if $x_i \neq \varepsilon$, then $\va{a_i}{q_i} \ourpath{x_i}{q'_i} $ 
is a support for the chain $\chain {a_i} {x_i} {a_{i+1}}$, i.e.,
it can be decomposed as $\va{a_i}{q_i} \ourpath{x_i}{q''_i} \flush{q_i}{q'_i}$.

\item if $x_i = \varepsilon$, then $q'_i = q_i$.
\end{itemize}



 Unlike the definition of the support for a simple (Equation~\ref{eq:simplechain}) and a composed chain (Equation~\ref{eq:compchain}), in a semisupport for a chain the initial state $q_0$ is not restricted to be the state reached after reading symbol $a_0$.

Let $x\in \Sigma^*$ be such that $\chain a x b$ is a chain for some $a,b$ and let
$T(x)$ be the set of all triples $(q,p,f) \in Q \times Q \times \{0,1\}$ 
such that there exists a semisupport $q \ourpath{x} p$ in $\mathcal A$, and 
$f = 1$ iff the semisupport contains a state in $F$.
Also let $\mathcal T$ be the set of all such $T(x)$, i.e., $\mathcal T$ contains set of triples identifying all semisupports for some chain,  
and set  $PR = \Sigma \cup \mathcal{T}$. 
$\mathcal{A}$'s \textit{pseudorun} for the word $w$, uniquely factorized as $ w_1 w_2 w_3 \ldots$ as stated above, is the $\omega$-word $w' = y_1 y_2 y_3\ldots \in PR^\omega$ 
where $y_i = a_i$ if $w_i = a_i$, otherwise $y_i = T(w_i)$.

 For the example above, then, 
$w' = T(a c)\ b\ T(a )\ T(d)\ b\ldots$.

We now define a nondeterministic B\"{u}chi finite-state automaton $\mathcal{A}_R$ over alphabet $PR$ whose language includes the pseudorun $w'$ of any word $w \in L(\mathcal A)$. 
%
$\mathcal{A}_R$ has all states of $\mathcal{A}$ and transitions corresponding to $\mathcal{A}$'s push transitions but it is devoid of flush edges (indeed they cannot be taken by a regular automaton without a stack). In addition,
for every $S \in \mathcal T$  it is endowed with arcs labeled $S$ which link, for each triple $(q,p,f)$ in $S$, 
either the pair of states $q, p$ or $q, p'$ if $f=1$, where $p'$ is a new final state which summarizes the states in $F$ met along the semisupport $ q\ourpath{} p $ and which has the same outgoing edges as $p$.

Notice that, given a set $S \in \mathcal T$, the existence of an edge $S$ between the pairs of states $q, p$ in the triples in $S$ can be decided in an effective way. 

 The automaton $\mathcal{A}_R$ built so far is able to parse all pseudoruns and recognizes all pseudoruns of $\omega$-words recognized by $\mathcal A$. 
However, since its moves are no longer determined by the OPM $M$, it can also accept input words along the edges of the graph of $\mathcal{A}$ which are not pseudorun since they do not correspond to a correct factorization on $PR$. This is irrelevant, however, since the aim of the proof is to devise an automaton recognizing the complement of $L(\mathcal{A})$, and all the words in $L_M \backslash L(\mathcal{A})$ are parsed along pseudoruns, which are not accepted by $\mathcal{A}_R$. If one gives as input words only pseudoruns (and not generic words on $PR$), then they will be accepted by $\mathcal{A}_R$ if the corresponding words on $\Sigma$ belong to $L(\mathcal{A})$, and they will be rejected if the corresponding words do not belong to $L(\mathcal{A})$.
Given the B\"{u}chi finite-state automaton $\mathcal{A}_R$ (which has $O(s)$ states), one can now construct a deterministic Streett automaton $\mathcal{B}_R$ that accepts the complement of $L(\mathcal{A}_R)$, on the alphabet $PR$. If $\mathcal{B}_R$ receives as input words on $PR$ only pseudoruns, then it will accept only words in $L_M \backslash L(\mathcal{A})$. The automaton $\mathcal{B}_R$ has $2^{O(\text{s log s})}$ states and $O(s)$ accepting constraints~\cite{bib:Thomas1990a}.

Consider then a nondeterministic transducer \bfa\ $\mathcal{B}$ that on reading $w$  generates online the aforementioned pseudorun $w'$, which will be given as input to $\mathcal{B}_R$. The transducer $\mathcal{B}$ nondeterministically guesses whether 
the next input symbol is a pending letter, the beginning of a chain appearing in the factorization of $w$, 
or a symbol within such a chain, and uses stack symbols $Z$, $\bot$, or elements in $\mathcal T$, respectively,
to distinguish these three cases.


In order to produce $w'$, whenever the automaton reads a pending letter it outputs the letter itself, whereas when it ends to recognize a chain of the factorization, performing a flush move towards a state with $\bot$ as first component, it outputs the set of all the pairs of states which define a semisupport for the chain.
Thus, the output $w'$ produced by $B$ is unique, despite the nondeterminism of the translator.

Formally, the transducer \bfa\ $\mathcal{B} = \langle \Sigma, M, Q_B, I_B, F_B, PR, \delta_B,  \eta_B\rangle$ is defined as follows:
\begin{itemize}
\item $Q_B = 
\hat{\Sigma} \times \left( \{Z, \bot \} \cup \mathcal T \right)$ where $\hat{\Sigma} = \Sigma \cup \{\#\}$. 
The first component of a state in $Q_B$ denotes the lookback symbol read to reach the state, the second component 
represents the guess whether the next symbol to be read is a pending letter ($Z$), the beginning of a chain ($\bot$),
or a letter within such a chain $w_i$ ($T \in \mathcal T$). 
In the third case, $T$ contains all information necessary to correctly simulate the moves of $\mathcal A$ 
during the parsing of the chain $w_i$ of $w$,
and compute the corresponding symbol $y_i$ of $w'$.
In particular, $T$ is a set comprising all triples $(r,q,\nu)$ where 
$r$ represents the state reached before the last mark move, $q$ represents the current state reached by $\mathcal A$, 
and $\nu$ is a bit that reminds whether, while reading the chain, a state in $F$ has been encountered
(as in the construction of a deterministic OPA on words of finite length~\cite{LonatiMandrioliPradella2011b}, 
it is necessary to keep track of the state from which the parsing of a chain started, to avoid erroneous merges of runs on flush moves).


\item $I_B = \left\{  \langle  \#, \bot \rangle ,  \langle  \#, Z \rangle \right\}$.
\item $F_B = \left\{  \langle  a, \bot \rangle ,  \langle  a, Z \rangle \mid a \in \hat\Sigma \right\}$.


\item The transition function and the output function are defined as the union of two disjoint pairs of functions.
Let $a \in \hat\Sigma$, $b,c \in \Sigma$, $T,S \in \mathcal T$. 
The push pair $\langle \delta_{\text{Bpush}}, \eta_{\text{Bpush}} \rangle : Q_B \times \Sigma \rightarrow \mathcal{P}_F(Q_B \times PR^*) $ is defined as follows, where the symbols after $\downarrow$ denotes the output of the move of the automaton.

\begin{itemize}

\item \emph{Push of a pending letter.} 

\[
\langle \delta_{\text{Bpush}}, \eta_{\text{Bpush}} \rangle \left( \langle a, Z \rangle, b \right) =
\left\{ 
\langle b, \bot \rangle \downarrow b, \  
\langle b, Z \rangle \downarrow b 
\right\}
\]

\item \emph{Mark at the beginning of a chain of the factorization.} 
If  $a \lessdot b$ then:
\[
\langle \delta_{\text{Bpush}}, \eta_{\text{Bpush}} \rangle \left( \langle a, \bot \rangle, b \right) = 
\left\{ \langle b, T \rangle \downarrow \varepsilon \right\}
\]
\[
\text{where } T = \left\{ 
\langle q,p,\nu \rangle \mid q \in Q, p \in \delta_{\text{push}} (q, b), \nu = 1 \text{ iff } p \in F  
\right\}
\]

\item \emph{Push within a chain of the factorization.} 
\[
\langle \delta_{\text{Bpush}}, \eta_{\text{Bpush}} \rangle \left( \langle a, T \rangle, b \right) = 
\left\{\langle b, S \rangle \downarrow \varepsilon \right\}
\quad \text{where} 
\]
\[
\!\!\!\!\!\!\!\!
S = \left\{ 
\langle t,p,\nu \rangle \mid
\exists \langle r,q,\xi \rangle \in T \text{ s.t. } 
t  =   
\left[  
\begin{array}{ll}
q & \text{if } a \lessdot b \\
r & \text{if } a\doteq b
\end{array}
 \right. , \
\nu = \left[
\begin{array}{ll}
\xi & \text{if } p \notin  F\\
1 & \text{if } p \in F
\end{array}
\right. , \
p \in \delta_{\text{push}} (q, b) 
\right\}
\]

\end{itemize}

The flush pair $\langle \delta_{\text{Bflush}}, \eta_{\text{Bflush}} \rangle : Q_B \times Q_B \rightarrow \mathcal{P}_F(Q_B \times PR^*) $ is defined as follows.

\begin{itemize}

\item \emph{Flush at the end of a chain of the factorization.}
\[
\langle \delta_{\text{Bflush}}, \eta_{\text{Bflush}} \rangle ( \langle b, T \rangle, \langle a, \bot \rangle) = 
\left\{ 
\langle a, \bot \rangle \downarrow  R , \  
\langle a, Z \rangle \downarrow  R 
\right\}
\quad \text{where}
\]
\[
R = \left\{
\langle r,p, \nu \rangle \mid
\exists \langle r,q, \xi \rangle \in T,
\text{ s.t. } p \in \delta_{\text{flush}} (q, r), 
\nu = 
\left[ 
\begin{array}{ll}
\xi & \text{if } p \not\in F \\
1 & \text{if } p \in F
\end{array}
 \right. 
\right\}
\]

\item \emph{Flush within a chain of the factorization.}
\[
\langle \delta_{\text{Bflush}}, \eta_{\text{Bflush}} \rangle ( \langle b, T \rangle, \langle c, S \rangle) = 
\{ \langle c, R \rangle \downarrow \varepsilon \} 
\quad \text{where}
\]
\[
R = \left\{
\langle t,p, \nu \rangle \mid
\exists \langle r,q, \xi \rangle \in T,
\exists \langle t,r, \zeta \rangle \in S
\text{ s.t. } p \in \delta_{\text{flush}} (q, r), 
\nu = 
\left[ 
\begin{array}{ll}
\xi & \text{if } p \not\in F \\
1 & \text{if } p \in F
\end{array}
 \right. 
\right\}
\]

\end{itemize}

\noindent  An error state is reached for any other case. 
In particular, no flush move is defined when the second state has $Z$ as second component, 
nor when the first state has $Z$ or $\bot$ as second component, as consistent
with the meaning of stack symbol $Z$ and $\bot$.
\end{itemize}

In the end, the final automaton to be built, which recognizes the complement of $L = L(\mathcal{A})$ w.r.t $L_M$, is the \bfa \ representing the product of $\mathcal{B}_R $ (converted to a B\"{u}chi automaton), which has $2^{O(\text{s log s})}$ states, and $\mathcal{B}$, which has $|Q_B| = 2^{O(s^2)}$ states: while reading $w$, $\mathcal{B}$ outputs the pseudorun $w'$ of $w$ online, and the states of $\mathcal{B}_R$ are updated accordingly. The automaton accepts if both $\mathcal{B}$ and $\mathcal{B}_R$ reach infinitely often final states. Furthermore, it has $2^{O(s^2)}$ states.~$\qed$
\end{proof}
\vspace{-0.1cm}

\subsection{Closure properties of \lof{\dbfa} under intersection and union}
\label{sec:wDBFA}

The class of languages accepted by \dbfa s is closed under intersection and union.

\subsection*{Closure under intersection}
\label{sec:intersectionDBFAF}

\begin{theorem}\label{th:clIntersectionDBFAF}
Let $L_1$ and $L_2$ be $\omega$-languages that can be recognized by two \dbfa s defined over the same alphabet $\Sigma$, with compatible precedence matrices $M_1$ and $M_2$ and $s_1$ and $s_2$ states respectively. Then $L = L_1 \cap L_2$ is recognizable by a \dbfa \  with OPM $M = M_1 \cap M_2$ and $O(s_1 s_2)$ states.
\end{theorem}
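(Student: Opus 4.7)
The plan is to mimic the classical product construction for deterministic Büchi automata, adapted to the operator precedence setting. Since the two matrices $M_1$ and $M_2$ are compatible and we take $M = M_1 \cap M_2$, any input word compatible with $M$ is compatible with both $M_1$ and $M_2$, and moreover the OP relation determining the type of move (push, mark, or flush) is the same in $M$, $M_1$ and $M_2$ at every step. This allows the two automata to be run synchronously on the stack, pairing states componentwise.

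Concretely, I would build a \dbfa\ with state set $Q = Q_1 \times Q_2 \times \{1,2\}$, where the third component is the standard Büchi \emph{phase} bit used to alternate between witnessing final visits of the two components. The unique initial state is $(q_1^0, q_2^0, 1)$. Both push/mark and flush transitions act componentwise on the first two coordinates:
\[
\delta_{\text{push}}((p_1,p_2,i), a) = (\delta^1_{\text{push}}(p_1,a), \delta^2_{\text{push}}(p_2,a), j),
\]
\[
\delta_{\text{flush}}((p_1,p_2,i),(r_1,r_2,k)) = (\delta^1_{\text{flush}}(p_1,r_1), \delta^2_{\text{flush}}(p_2,r_2), j),
\]
where the new phase $j$ is computed from $i$ and the newly produced pair $(p_1',p_2')$: if $i=1$ and $p_1' \in F_1$ set $j=2$; if $i=2$ and $p_2' \in F_2$ set $j=1$; otherwise $j=i$. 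The final states are $F = F_1 \times Q_2 \times \{1\}$, so acceptance requires seeing a state of $F_1$ with phase $1$ on top of the stack infinitely often, which in turn forces the phase to flip to $2$ and back infinitely often, hence also visiting $F_2$ on top of the stack infinitely often.

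Correctness will follow from showing that the runs of the product automaton $\mathcal{A}$ on an input $x\in\Sigma^\omega$ compatible with $M$ are in bijection with pairs of runs of the component automata, and that the sequence of states on top of the stack projects onto the corresponding sequences of top-of-stack states of the individual runs. Because the precedence relations determining each move are common to $M_1$ and $M_2$, push, mark and flush moves trigger in lock-step, so the projections are genuine runs of $\mathcal{A}_1$ and $\mathcal{A}_2$. Then $In(\mathcal{S})$ contains a state in $F$ iff each projected run meets its own Büchi condition, which is the definition of acceptance by $\mathcal{A}_1 \cap \mathcal{A}_2$. The state count is clearly $2 s_1 s_2 = O(s_1 s_2)$, and determinism is preserved because both components and the phase update are deterministic.

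I expect no essential obstacle: the only point that needs care is the interaction of flush moves with the $In(\mathcal{S})$ condition, since after a flush the top of stack returns to a pre-existing entry updated by $\delta_{\text{flush}}$. The phase bit, however, is refreshed on every move from the current top-of-stack phase, so it behaves exactly as in the finite-state construction regardless of whether the new top is installed by a push, a mark or a flush. Once this invariant is formalised, the equivalence of accepted languages is a direct componentwise argument.
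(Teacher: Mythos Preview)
Your proposal is correct and takes essentially the same approach as the paper: the paper's proof simply invokes the standard product-with-phase-bit construction already given for nondeterministic \bfa s and observes that it preserves determinism, which is exactly the construction you spell out. Your treatment of the phase bit across flush moves (threading $j$ from the current top-of-stack phase $i$ rather than from the buried phase $k$) is the right way to keep the Büchi flag consistent with the sequence of visited top-of-stack states.
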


\begin{proof}
The proof derives from the analogous proof of closure with respect to intersection of languages recognized by \bfa s described in~\cite{PanellaMasterThesis2011}. In fact the \bfa \  which accepts the intersection of two languages $L_1$ and $L_2$ recognized by two \bfa s $\mathcal{A}_1$ and  $\mathcal{A}_2$ with compatible OPMs described in that proof is deterministic if both the automata $\mathcal{A}_1$ and $\mathcal{A}_2$ are deterministic.$\hfill\qed$
\end{proof}

\subsection*{Closure under union}
\label{sec:unionDBFAF}

\begin{theorem}\label{th:clUnionDBFAF}
Let $L_1$ and $L_2$ be $\omega$-languages that can be recognized by two \dbfa s defined over the same alphabet $\Sigma$, with compatible precedence matrices $M_1$ and $M_2$ and $s_1$ and $s_2$ states respectively. Then $L = L_1 \cup L_2$ is recognizable by an \dbfa \  with OPM $M = M_1 \cup M_2$ and $O(s_1 s_2)$ states.
\end{theorem}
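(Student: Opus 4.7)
The plan is to adapt the classical synchronous product construction for deterministic B\"uchi automata to the operator-precedence setting, mirroring the intersection construction of Theorem~\ref{th:clIntersectionDBFAF} but equipped with a ``$\vee$''-style final-state set. First I would \emph{complete} each $\mathcal{A}_i$ to a deterministic \bfa\ $\widetilde{\mathcal{A}}_i$ over the OPM $M = M_1 \cup M_2$ by adding a non-accepting sink state $\bot_i$: whenever $M$ prescribes a precedence relation not covered by $M_i$, the corresponding $\widetilde{\delta}_{i,\text{push}}$ or $\widetilde{\delta}_{i,\text{flush}}$ move is redirected to $\bot_i$, and all push/flush transitions out of $\bot_i$ loop back. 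Since $M$ is conflict-free by compatibility and $\mathcal{A}_i$ is deterministic, $\widetilde{\mathcal{A}}_i$ is still deterministic with $s_i + 1$ states; as $\bot_i \notin F_i$ and once entered it is never left, $L(\widetilde{\mathcal{A}}_i) = L_i$.

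Next I would form the product $\mathcal{A} = \langle \Sigma, M, \widetilde{Q}_1 \times \widetilde{Q}_2, \{(\widetilde{q}_1^0, \widetilde{q}_2^0)\}, F, \delta \rangle$ with transition functions acting componentwise, i.e., $\delta_{\text{push}}((p_1,p_2),a) = (\widetilde{\delta}_{1,\text{push}}(p_1,a), \widetilde{\delta}_{2,\text{push}}(p_2,a))$ and $\delta_{\text{flush}}((p_1,p_2),(r_1,r_2)) = (\widetilde{\delta}_{1,\text{flush}}(p_1,r_1), \widetilde{\delta}_{2,\text{flush}}(p_2,r_2))$, and with acceptance set $F = (F_1 \times \widetilde{Q}_2) \cup (\widetilde{Q}_1 \times F_2)$. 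Because both components share the OPM $M$, the type of each move (push, mark, flush) is dictated by $M$ alone applied to the common lookback symbol at the stack top and the current lookahead, so the two simulations proceed in lockstep on every input $\omega$-word compatible with $M$; the product is therefore well-defined and deterministic, with $(s_1+1)(s_2+1) \in O(s_1 s_2)$ states.

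Correctness is then routine: projecting the unique run of $\mathcal{A}$ on $x \in \Sigma^\omega$ onto its two coordinates yields the unique runs of $\widetilde{\mathcal{A}}_1$ and $\widetilde{\mathcal{A}}_2$ on $x$, and since a product state lies in $F$ iff at least one of its coordinates is final in the respective component, the set $In(\mathcal{S})$ of the product run meets $F$ iff $In(\mathcal{S}_i) \cap F_i \neq \emptyset$ for some $i \in \{1,2\}$, i.e., iff $x \in L_1 \cup L_2$. The only genuinely delicate point is the completion step: unlike for finite-state automata, adding a sink state in an OPA requires respecting the fact that the move type is dictated by the precedence relation between the stack-top symbol and the lookahead, so the sink transitions must be placed within the correct $\delta_\text{push}$ or $\delta_\text{flush}$ component according to the relation imposed by $M$ rather than being merely ``missing transitions''. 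Once this is properly handled, the remainder reduces to the standard B\"uchi product argument already exploited for intersection.
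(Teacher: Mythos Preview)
Your proposal is correct and follows essentially the same route as the paper: normalize/complete both automata to the common OPM $M=M_1\cup M_2$, take the componentwise product, and set $F=(F_1\times \widetilde Q_2)\cup(\widetilde Q_1\times F_2)$, exploiting that the shared OPM forces the two simulations to perform the same move type at every step. The only small refinement is that your completion must also redirect to $\bot_i$ any transition that was already undefined in $\mathcal A_i$ (not just those arising from new entries of $M\setminus M_i$), so that each $\widetilde{\mathcal A}_i$ truly has a run on every $\omega$-word compatible with $M$; the paper makes this a separate step but the idea is the same.
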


\begin{proof}
Let $\tilde {\mathcal  A_{1}} = \langle \Sigma, M_1, \tilde Q_1, \tilde q_{01},\tilde F_1, \tilde \delta_1 \rangle $ and $\tilde {\mathcal  A_{2}} = \langle \Sigma, M_2, \tilde Q_2, \tilde q_{02}, \tilde F_2, \tilde \delta_2 \rangle $ be \dbfa s accepting the languages $L(\tilde {\mathcal A_1}) = L_1$ and $L(\tilde {\mathcal A_2}) = L_2$ and with compatible precedence matrices $M_1$ and $M_2$. Suppose without loss of generality that $\tilde Q_1$ and $\tilde Q_2$ are disjoint. Let $|\tilde Q_1| = s_1$ and $|\tilde Q_2| = s_2$.

Since $M_1$ and $M_2$ are compatible, then $M = M_1 \cup M_2$ is conflict-free and the two \dbfa s may be normalized completing their precedence matrix to $M = M_1 \cup M_2$ (see e.g. the normalization described in~\cite{PanellaMasterThesis2011}). The normalization preserves the determinism of the automata and keeps their sets of states disjoint.

The automata may be, then, completed as regards their transition function, so that there is a run on their graph for every $\omega$-word in $L_M$~\cite{PanellaMasterThesis2011}. The completed automata $\mathcal A_{1} = \langle \Sigma, M = M_1 \cup M_2, Q_1, q_{01}, F_1, \delta_1 \rangle $ and $\mathcal A_{2} = \langle \Sigma, M = M_1 \cup M_2, Q_2, q_{02}, F_2, \delta_2 \rangle $ are still deterministic with disjoint state sets and recognize the same languages as $\tilde {\mathcal  A_{1}}$ and $\tilde {\mathcal  A_{2}}$, i.e. $L(\mathcal A_1) = L_1$ and $L( \mathcal A_2) = L_2$. Furthermore, $|Q_1| = O(s_1)$ and $|Q_2| = O(s_2)$.

 An \dbfa \  $\mathcal{A}_3$ which recognizes $L_1 \cup L_2$ may then be defined adopting the usual product construction for regular automata: $\mathcal A_{3} = \langle \Sigma, M = M_1 \cup M_2, Q_3, q_{03}, F_3, \delta_3 \rangle $ where:
\begin{itemize}
\item $ Q_3 = Q_1 \times Q_2$,
\item $q_{03} = (q_{01}, q_{02})$,
\item $F_3 = F_1 \times Q_2  \cup Q_1 \times F_2$
\item and the transition function $\delta_3 : Q_3 \times ( \Sigma \cup  Q_3) \rightarrow  Q_3$ is defined as follows.
The push transition ${\delta}_{\text{3push}}:  Q_3 \times \Sigma \rightarrow  Q_3$ is expressed as:
\smallskip\\
$
{ \delta}_{\text{3push}}  ((q_1, q_2), a) = (\delta_{\text{1push}}(q_1,a), \delta_{\text{2push}}(q_2,a)) 
$\smallskip\\
$\forall q_1 \in Q_1, q_2 \in Q_2, a \in \Sigma$.

The flush transition ${\delta}_{\text{3flush}}:  Q_3 \times  Q_3 \rightarrow  Q_3$ is defined as:
\smallskip\\
$
{ \delta}_{\text{3flush}}  ((q_1, q_2), (p_1, p_2)) = (\delta_{\text{1flush}}(q_1, p_1), \delta_{\text{2flush}}(q_2,p_2))
$
\smallskip\\
$\forall q_1, p_1 \in Q_1, q_2, p_2 \in Q_2$
\end{itemize} 

The \dbfa \  $\mathcal{A}_3$ simulates $\mathcal{A}_1$ and $\mathcal{A}_2$ respectively on the two components of the states, and accepts an $\omega$-word iff there is an accepting run on it for at least one of the two automata.

The definition of the transition function is sound because the automata $\mathcal{A}_1$ and $\mathcal{A}_2$ have the same precedence matrix, thus they perform the same type of move (mark/push/\\flush) while reading the input word; furthermore, they are both complete w.r.t their transition function and none of them may stop a computation while reading a string.$\hfill\qed$


\end{proof}

%
%

\subsection{Closure properties of \lof{\bfae}}
\label{sec:wBFAE}

The class of languages accepted by \bfae s is closed under intersection and union, but not under complementation and concatenation on the left with an \opl.

\subsection*{Closure under intersection}
\label{sec:intersectionBFAE}

\begin{theorem}\label{th:clIntersectionBFAE}
Let $L_1$ and $L_2$ be $\omega$-languages that can be recognized by two \bfae s defined over the same alphabet $\Sigma$, with compatible precedence matrices $M_1$ and $M_2$ and $s_1$ and $s_2$ states respectively. Then $L = L_1 \cap L_2$ is recognizable by an \bfae \  with OPM $M = M_1 \cap M_2$ and $O(s_1 s_2)$ states.
\end{theorem}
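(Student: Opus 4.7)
The plan is to adapt the classical B\"uchi intersection construction to the \bfae\ setting. Let $\mathcal A_1 = \langle \Sigma, M_1, Q_1, I_1, F_1, \delta_1 \rangle$ and $\mathcal A_2 = \langle \Sigma, M_2, Q_2, I_2, F_2, \delta_2 \rangle$ be the two given automata. Since $M_1$ and $M_2$ are compatible, every word in $L_1 \cap L_2$ is compatible with their common matrix $M = M_1 \cap M_2$, and at every pair of adjacent symbols the unique relation chosen is shared between $M_1$ and $M_2$. Consequently, if I build $\mathcal A_3 = \langle \Sigma, M, Q_3, I_3, F_3, \delta_3 \rangle$ with $Q_3 \subseteq Q_1 \times Q_2 \times \{1,2\} \times \{0,1\}$ and let the first two components simulate $\mathcal A_1$ and $\mathcal A_2$ in lockstep, then the two automata perform the same type of move (mark, push or flush) at every step, their stacks stay synchronised in depth, and both reach empty-stack configurations at precisely the same word positions.

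The third component $f\in\{1,2\}$ is the standard B\"uchi flag, indicating which of the two acceptance events is still pending, while the fourth component is an auxiliary bit recording whether the underlying $\Gamma$-element is the bottom $\#$-marker: it is set to $1$ initially, it is inherited by flushes from the state of $B_{i-1}$, and it is set to $0$ for every state produced by a mark or push move. The need for this bit is the main obstacle of the proof: the \bfae\ acceptance condition refers to the whole stack shape, whereas $\delta_{\text{push}}$ and $\delta_{\text{flush}}$ see only their argument states, so without it one cannot locally detect that a move is being fired from an empty-stack configuration. With the bit available, I would let $\delta_{3,\text{push}}$ toggle the flag from $f$ to $3-f$ whenever the previous top carries bit $1$ and $q_f \in F_f$, and otherwise propagate $f$ unchanged; $\delta_{3,\text{flush}}$ simply inherits the old top's flag and the target $B_{i-1}$'s bottom bit. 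The final set is $F_3 = \{(q_1,q_2,1,1) : q_1 \in F_1\}$, so the \bfae\ acceptance of $\mathcal A_3$ becomes ``infinitely many visits to an empty stack with flag $1$ and $q_1\in F_1$''.

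Correctness reduces to the usual flag-alternation argument. In the forward direction, given $w \in L_1 \cap L_2$, accepting runs of $\mathcal A_1$ and $\mathcal A_2$ on $w$ can be combined coordinate-wise into a run of $\mathcal A_3$ along which empty-stack visits with $q_1 \in F_1$ and with $q_2 \in F_2$ both occur infinitely often; the toggling rule then swaps the flag infinitely often between $1$ and $2$, producing infinitely many configurations in $F_3$, hence $w \in L(\mathcal A_3)$. Conversely, an accepting run of $\mathcal A_3$ exhibits infinitely many empty-stack visits with flag $1$ and $q_1 \in F_1$, each of which can occur only after the flag has been toggled from $2$ back to $1$, i.e.\ after an empty-stack visit with $q_2 \in F_2$; projecting such a run onto the first and second components yields accepting runs of $\mathcal A_1$ and $\mathcal A_2$ respectively, so $w \in L_1 \cap L_2$. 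The size bound is $|Q_3| = 4\,s_1 s_2 = O(s_1 s_2)$, as claimed.
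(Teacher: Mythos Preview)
Your proposal is correct and follows essentially the same approach as the paper: the paper first converts each \bfae\ into an equivalent one whose states are tagged according to whether they are reached with empty stack (this is exactly your fourth ``bottom bit'' component), and then applies the classical B\"uchi flag product, whereas you fold the two steps into a single product construction. The only organisational difference is that the paper keeps a separate emptiness tag in each factor and then products, while you share one bit across the product; both yield $O(s_1 s_2)$ states.
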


\begin{proof}
Let $\mathcal{A}_1 = \langle \Sigma, M_1, Q_1, I_1, F_1, \delta_1 \rangle $ and $\mathcal{A}_2 = \langle \Sigma, M_2, Q_2, I_2, F_2, \delta_2 \rangle $ be \bfae s recognizing $L_1$ and $L_2$ respectively.

We can define for each \bfae\ an equivalent automaton \bfae\ whose set of states is partitioned into tagged states that are visited with empty stack and untagged states that are those visited with nonempty stack. This simple construction is described in~\cite{PanellaMasterThesis2011} to prove that \lof{\bfae}$\subseteq$ \lof{\bfa}, defining for each \bfae \ $\mathcal{A}$ an equivalent \bfa\ $\mathcal{\tilde{A}}$, but the resulting automaton $\mathcal{\tilde{A}}$ is still equivalent to $\mathcal{A}$  if it is interpreted as an \bfae. In particular the final states of the so built automaton are the tagged counterpart of the final states of the original \bfae.

Let $\tilde {\mathcal  A_{1}}$ and $\tilde {\mathcal  A_{2}}$ be \bfae \ equivalent to $\mathcal{A}_1$ and $\mathcal{A}_2$, respectively, defined following this construction. An \bfae\ $\mathcal{A}$ which recognizes $L_1 \cap L_2$ can be defined from $\tilde {\mathcal  A_{1}}$ and $\tilde {\mathcal  A_{2}}$ by resorting to the traditional approach to prove closure of regular B\"{u}chi automata under intersection, also adopted to prove closure under intersection for \bfa s.
The transformation of $\mathcal{A}_1$ and $\mathcal{A}_2$ into $\tilde {\mathcal  A_{1}}$ and $\tilde {\mathcal  A_{2}}$ guarantees that a run of $\mathcal{A}$ on an $\omega$-word reaches infinitely often a final state with empty stack iff both $\mathcal{A}_1$ and $\mathcal{A}_2$ have a run for the word which traverses infinitely often a final state with empty stack.$\hfill\qed$

\end{proof}

\subsection*{Closure under union}
\label{sec:unionBFAE}

\begin{theorem}\label{th:clUnionBFAE}
Let $L_1$ and $L_2$ be $\omega$-languages that can be recognized by two \bfae s defined over the same alphabet $\Sigma$, with compatible precedence matrices $M_1$ and $M_2$ and $s_1$ and $s_2$ states respectively. Then $L = L_1 \cup L_2$ is recognizable by an \bfae \  with OPM $M = M_1 \cup M_2$ and $O(|\Sigma|^2(s_1 + s_2))$ states.
\end{theorem}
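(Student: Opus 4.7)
The plan is to combine two constructions already available in the paper: a normalization of each \bfae\ with respect to the larger compatible OPM $M=M_1\cup M_2$, followed by a standard disjoint-union construction with a nondeterministic choice of branch at the start.

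First I would normalize $\mathcal{A}_1$ and $\mathcal{A}_2$ to equivalent \bfae s $\tilde{\mathcal{A}}_1$ and $\tilde{\mathcal{A}}_2$ whose precedence matrix is exactly $M$. The given automata are only specified on entries of $M_i$, so they cannot directly parse $\omega$-words compatible with $M$ but not with $M_i$; we extend them by tagging each state with a lookback and a lookahead symbol in $\hat{\Sigma}=\Sigma\cup\{\#\}$, exactly as in the auxiliary construction of $\mathcal{\tilde A}_2$ at the start of the proof of Statement~1. The new state space is $\hat{\Sigma}\times Q_i\times \hat{\Sigma}$ and the push/mark and flush transitions are re-indexed through these tags so that every move is consistent with $M$ (transitions on OPM entries not in $M_i$ are directed to a non-final sink state). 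The transformation preserves both the accepted language and the empty-stack flavor of acceptance, because the bottom-of-stack symbol $\stack 0\#q$ always carries $\#$ as its lookback tag. Each normalized automaton has $O(|\Sigma|^2 s_i)$ states.

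Second I would form the disjoint union $\mathcal{A} = \langle \Sigma, M, Q, I, F, \delta\rangle$ with $Q = \tilde{Q}_1\uplus \tilde{Q}_2$, $I = \tilde{I}_1\cup \tilde{I}_2$, $F = \tilde{F}_1\cup \tilde{F}_2$, and $\delta$ equal to the union of the two (disjointly supported) transition functions. Since the state sets are disjoint and every flush move only inspects states already on the stack, once an initial state is chosen from $\tilde{I}_i$ every subsequent configuration has all its state components in $\tilde{Q}_i$; no run can ever cross between components. Hence $\mathcal{A}$ has a successful run on $x$, i.e.\ a run reaching a configuration $\stack 0\#{q_f}$ with $q_f\in F$ infinitely often, if and only if $\tilde{\mathcal{A}}_1$ or $\tilde{\mathcal{A}}_2$ does, which by the first step is equivalent to $x\in L_1\cup L_2$. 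The state count is $|Q| = O(|\Sigma|^2(s_1+s_2))$, matching the announced bound.

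The main obstacle, and the only step that is not purely mechanical, is verifying that the normalization step preserves the \emph{empty-stack} acceptance condition (not just the final-state one). One must check that the lookback/lookahead tags live entirely inside the state component of a stack symbol and never alter which configurations have just $\stack 0\#q$ on the stack: this follows from the observation that the alphabetic component of a stack element is never touched by the tag, that the initial configuration has lookback $\#$, and that the flush rules correctly propagate lookback tags when an open chain is removed. Once this is in hand, the disjoint-union step is routine and the size bound is immediate.
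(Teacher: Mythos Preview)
Your proposal is correct and follows essentially the same route as the paper: normalize both automata to the common OPM $M=M_1\cup M_2$ and then take the disjoint union, with acceptance following because runs cannot cross between components. The only difference is presentational: the paper simply cites an external reference for the normalization step and its $O(|\Sigma|^2 s_i)$ blow-up, whereas you spell it out via the lookback/lookahead tagging and explicitly isolate preservation of the empty-stack acceptance condition as the point that needs checking---a detail the paper leaves implicit in the citation.
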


\begin{proof}
The proof is analogous to the proof of closure under union for \bfa s.
More precisely, let $\tilde {\mathcal  A_{1}} = \langle \Sigma, M_1, \tilde Q_1, \tilde I_1,\tilde F_1, \tilde \delta_1 \rangle $ and $\tilde {\mathcal  A_{2}} = \langle \Sigma, M_2, \tilde Q_2, \tilde I_2, \tilde F_2, \tilde \delta_2 \rangle $ be \bfae s accepting the languages $L(\tilde {\mathcal A_1}) = L_1$ and $L(\tilde {\mathcal A_2}) = L_2$ and assume, without loss of generality, that $\tilde Q_1$ and $\tilde Q_2$ are disjoint. Let $|\tilde{Q}_1| = s_1$ and  $|\tilde{Q}_2| = s_2$.

Since $M_1$ and $M_2$ are compatible, then $M = M_1 \cup M_2$ is conflict-free and the two \bfae s may be normalized completing their OPM to $M = M_1 \cup M_2$ (see e.g. the normalization described in~\cite{PanellaMasterThesis2011}), obtaining two \bfae s $\mathcal  A_{1} = \langle \Sigma, M, Q_1, I_1, F_1, \delta_1 \rangle $ and $\mathcal A_{2} = \langle \Sigma, M, Q_2, I_2, F_2, \delta_2 \rangle $
which still recognize languages $L_1$ and $L_2$ respectively. The normalization keeps their sets of states disjoint.

The $\omega$-language $L = L_1 \cup L_2$ is recognized by the \bfae\ $\mathcal A = \langle \Sigma, M, Q = Q_1 \cup Q_2, I = I_1 \cup I_2, F = F_1 \cup F_2, \delta \rangle $ whose transition function
$\delta :  Q \times ( \Sigma \cup  Q) \rightarrow  2^Q$ is defined so as its restriction to $Q_1$ and $Q_2$ equals respectively $\delta_1 :  Q_1 \times ( \Sigma \cup  Q_1) \rightarrow  2^{Q_1}$ and $\delta_2 :  Q_2 \times ( \Sigma \cup  Q_2) \rightarrow  2^{Q_2}$, i.e for all $p, q \in Q, a \in \Sigma$:
\[\begin{array}{ll}
{ \delta}_{\text{push}}  (q, a) = \left\{  
\begin{array}{ll}
 { \delta_{\text{1push}}(q,a)} & \text{if } q \in Q_1\\
 { \delta_{\text{2push}}(q,a)} & \text{if } q \in Q_2\\
\end{array}
\right.
\smallskip\\
{ \delta}_{\text{flush}}  (p,q) = \left\{  
\begin{array}{ll}
 { \delta_{\text{1flush}}(p,q)} & \text{if } p, q \in Q_1\\
 { \delta_{\text{2flush}}(p,q)} & \text{if } p, q \in Q_2\\
\end{array}
\right.\\
\end{array}
\].

Hence, there exists a successful run in $\mathcal{A}$ on a word $x \in \Sigma^\omega$ iff there exists a successful run of $\mathcal A_1$ on $x$ or a successful run of $\mathcal A_2$ on $x$.
$\hfill\qed$
\end{proof}

\subsection*{Complementation and concatenation}
\label{sec:complConcatBFAE}

\begin{theorem}\label{th:clComplementBFAE}
Let $L$ be an $\omega$-language accepted by an \bfae\ with OPM $M$ on alphabet $\Sigma$. There does not necessarily exist an \bfae\ recognizing the complement of $L$ w.r.t $L_M$.
\end{theorem}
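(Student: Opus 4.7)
The plan is to exhibit an OP alphabet $(\Sigma, M)$ and an \bfae-recognisable language $L$ such that $L_M \setminus L$ contains an $\omega$-word $w$ that no \bfae\ over $M$ can accept; this immediately prevents $L_M \setminus L$ from lying in \lof{\bfae}. The non-acceptability of $w$ will come from the structure of $M$ alone: $M$ will force every \bfae\ run on $w$ to grow the stack monotonically, so the empty-stack configuration required for acceptance can occur at most once.

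For the witness I would take $\Sigma = \{a, \underline a\}$ with the VPL-shaped OPM $M$ given by $a \lessdot a$, $a \doteq \underline a$, $\underline a \gtrdot a$, $\underline a \gtrdot \underline a$, $\# \lessdot a$, $\# \lessdot \underline a$, and pick $w = a^\omega$. Compatibility of $w$ with $M$ is immediate: each pair of consecutive symbols of $\#\, a^\omega$ has a relation in $M$, and no factor ever matches the closing $\gtrdot$-pattern of a chain, so the chain condition of compatibility is vacuous and $w \in L_M$. For $L$ I would use the single-state \bfae\ $\mathcal A = \langle\Sigma, M, \{q\}, \{q\}, \{q\}, \delta\rangle$ with $\delta_{push}(q,\cdot) = q$ and $\delta_{flush}(q,q) = q$; its language contains, e.g., $(a\underline a)^\omega$ (the stack returns to $\stack 0\#q$ after every matched pair, visiting the final state with empty stack infinitely often) but, by the argument below, it does not contain $a^\omega$. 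Hence $a^\omega \in L_M \setminus L$. A more trivial variant would shrink to $\Sigma = \{a\}$ with $L = \emptyset$, yielding $L_M \setminus L = \{a^\omega\}$; the same core argument still applies.

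The heart of the proof is the observation that on input $a^\omega$ every move of any \bfae\ over $M$ must be a mark move: initially the relation $\# \lessdot a$ forces one, and once the top of the stack holds $a$, the relation $a \lessdot a$ again forces a mark. Hence the stack height strictly increases at each step, and after the very first move the stack never again consists of a single $\stack 0\#{q'}$ element. So no configuration of the form $\stack 0\#{q_f}$ with $q_f \in F$ can occur infinitely often in any run on $a^\omega$, regardless of the state set or the choice of $F$; thus $a^\omega$ is accepted by no \bfae\ over $M$, and $L_M \setminus L \notin \lof{\bfae}$.

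The main subtlety is compatibility for $\omega$-words: the paper defines compatibility for finite strings and merely states that the $\omega$-extension is ``natural''. One therefore has to pin down this $\omega$-version of the compatibility definition carefully enough to conclude $a^\omega \in L_M$. Once that is done, the monotonic-stack observation follows directly from the push/mark/flush semantics and closes the proof.
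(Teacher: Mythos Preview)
Your proposal is correct and follows essentially the same approach as the paper: exhibit an OPM and an \bfae-recognisable $L$ whose complement contains a word on which every run over $M$ is forced to keep a non-empty stack forever, so the empty-stack acceptance condition can never be met. The paper's concrete witness is $L=\{b^\omega\}$ over $\Sigma=\{a,b\}$ (with $a\lessdot a$, $a\lessdot b$, $b\lessdot a$, $b\gtrdot b$) and words $a^n b^\omega$ trapped in the complement, whereas you use $a^\omega$ directly; the underlying idea is identical, and your single-letter variant with $L=\emptyset$ is arguably the cleanest instantiation.
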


\begin{proof}
Let $M$ be a conflict-free OPM on alphabet $\Sigma = \{a,b\}$ given by:

$$
\begin{array}{c|cc}
      & a  & b\\
\hline
a & \lessdot & \lessdot  \\
b & \lessdot & \gtrdot  \\
\#       & \lessdot  &\lessdot\\
\end{array}
$$

Language $L = \{b^{\omega}\} \subseteq \Sigma^{\omega}$ is recognized by the \bfae\ with precedence matrix $M$ whose graph is represented in Figure~\ref{fig:ComplementBFAE}.
\begin{figure}[h!]
\begin{center}
\begin{tikzpicture}[every edge/.style={draw,solid}, node distance=4cm, auto, 
                    every state/.style={draw=black!100,scale=0.5}, >=stealth]

\node[initial by arrow, initial text=,state, accepting] (S) {{\huge $q_0$}};

\path[->]
(S) edge [loop right] node {$b$} (S)
(S) edge [loop below, double] node {$q_0$} (S);
\end{tikzpicture}
\caption{\bfae\ recognizing language $L$ of Theorem~\ref{th:clComplementBFAE}.}\label{fig:ComplementBFAE}
\end{center}
\end{figure}
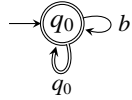
The complement of $L$ w.r.t $L_M$ includes words (with precedence relations between symbols defined by $M$) belonging to the set $\{a^n b^{\omega} \mid n \geq 1\}$ for which no \bfae\ can have an accepting run which reaches final states with empty stack infinitely often. 
$\hfill\qed$
\end{proof}

\begin{theorem}\label{th:clConcatBFAE}
Let $L_2$ be an $\omega$-language accepted by an \bfae\ with OPM $M$ on alphabet $\Sigma$ and let $L_1 \subseteq \Sigma^*$ be a language (of finite words) recognized by an \opa\ with a compatible precedence matrix. The $\omega$-language defined by the product $L_1 \cdot L_2$ is not necessarily recognizable by an \bfae.
\end{theorem}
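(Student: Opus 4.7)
The plan is to reuse, almost verbatim, the counterexample constructed in Theorem~\ref{th:clComplementBFAE}. Take the same alphabet $\Sigma = \{a,b\}$ and the same OPM $M$ defined there, namely $a \lessdot a,\ a \lessdot b,\ b \lessdot a,\ b \gtrdot b,\ \# \lessdot a,\ \# \lessdot b$. Crucially, $M$ is already complete on $\{a,b\}$, so any OPM $M_3$ extending $M$ coincides with $M$ on all entries relevant to words in $\Sigma^{\omega}$. This is what forces the BFAE for the concatenation to use exactly the ``sticky'' matrix $M$.

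Next I would pick $L_1 = \{a\}$ and $L_2 = \{b^{\omega}\}$. The singleton $L_1$ is trivially accepted by a small OPA on a precedence matrix compatible with $M$ (the extra column entries $a \gtrdot \#$ and $b \gtrdot \#$ needed by a finite-word OPA are free to add, since they cannot conflict with anything in $M$). The language $L_2$ is accepted by the single-state \bfae\ of Figure~\ref{fig:ComplementBFAE} with OPM $M$. Hence $M_1$ and $M_2$ are compatible, and any \bfae\ recognizing the concatenation $L_1 \cdot L_2 = \{a b^{\omega}\}$ must have an OPM $M_3 \supseteq M_1 \cup M_2 = M$, i.e., $M_3 = M$ on $\{a,b\}$.

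The key step is to show that no \bfae\ with OPM $M$ accepts $a b^{\omega}$. I would trace the forced stack evolution on any run over this word: by $\# \lessdot a$ the initial $a$ must be pushed with a mark, giving a configuration whose stack ends in $\stack 0 \# {q_0} \stack 1 a {q}$; afterwards, each time the lookahead $b$ meets a marked $a$ on top, the relation $a \lessdot b$ forces a further mark move (pushing $\stack 1 b {q'}$), and each time the lookahead $b$ meets a marked $b$, the relation $b \gtrdot b$ forces a flush that removes \emph{only} the topmost marked $b$ and never the $a$ beneath it. So from that point on, the stack oscillates between the shapes $\stack 0 \# {q_0}\stack 1 a q$ and $\stack 0 \# {q_0}\stack 1 a q \stack 1 b {q'}$ and never returns to a configuration $\stack 0 \# {q}$ with empty stack. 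Since an \bfae\ accepts only if it visits configurations $\stack 0 \# {q_f}$ with $q_f \in F$ infinitely often, no \bfae\ with OPM $M$ can accept $a b^{\omega}$, and therefore $L_1 \cdot L_2 \notin \mathcal{L}(\bfae)$.

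The only delicate point, and the one I would state carefully, is the completeness-of-$M$ observation: without it, a cleverly chosen OPM extension $M_3$ (for instance one setting $a \doteq a$ or $a \gtrdot b$) could in principle allow the stack to drain, and the argument would collapse. Beyond this, the proof is essentially the stack-evolution bookkeeping already used for complementation; no new machinery is required because the combinatorial core -- the letter $a$ is ``sticky'' under $M$, so any word starting with $a$ produces a stack that can never again be emptied -- is precisely the same as in Theorem~\ref{th:clComplementBFAE}.
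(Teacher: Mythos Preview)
Your proof is correct and the core mechanism---an initial $a$ that, under the fixed OPM, can never be flushed, so the stack is never empty again---is the same as the paper's. The counterexample, however, is different. The paper takes $L_1 = \{a^n \mid n \ge 1\}$ and $L_2 = (L_{\mathrm{Dyck}}(a,b))^{\omega}$ with the matrix $a \lessdot a,\ a \doteq b,\ b \gtrdot a,\ b \gtrdot b$, and asserts that $a^{+}(L_{\mathrm{Dyck}}(a,b))^{\omega}$ is not recognizable by any \bfae. You instead recycle the complementation witness, using the much simpler pair $L_1=\{a\}$, $L_2=\{b^{\omega}\}$.

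Your route is more elementary (no Dyck matching needed) and you make explicit something the paper leaves implicit: because $M$ is already complete on $\Sigma\times\Sigma$, any \bfae\ for the concatenation with a matrix extending $M_1\cup M_2$ is forced to use exactly $M$, which is what pins down the stack behaviour. The paper's choice buys something else: the Dyck structure in $L_2$ intrinsically forces $a \lessdot a$ in \emph{any} OPM that can recognize it, so its concatenation remains non-\bfae\ even if one drops the ``compatible matrix'' convention; by contrast your language $\{ab^{\omega}\}$ \emph{would} be \bfae-recognizable under an incompatible matrix such as $a \gtrdot b,\ b \gtrdot b$. Within the paper's stated framework (binary operations are taken between languages with compatible OPMs) your argument is entirely adequate.
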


\begin{proof}
Given $\Sigma = \{a,b\}$, let $L_1 = \{a^n \mid n \geq 1\}$ and let $L_2 = (L_{\text{Dyck}}(a,b))^{\omega}$ be the language of $\omega$-words composed by an infinite sequence of finite-length words belonging to the Dyck language with pair $a,b$.

$L_1$ is recognized by the \opa\ with OPM and graph in Figure~\ref{fig:Concat1BFAE} and language $L_2$ is recognized by the \bfae\ in Figure~\ref{fig:Concat2BFAE}.

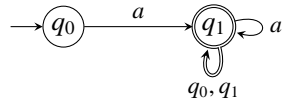
\begin{figure}[h!]
\begin{center}
\begin{tabular}{m{.4\textwidth}m{.45\textwidth}}
\qquad\quad
$
\begin{array}{c|cc}
      & a  & \# \\
\hline
a & \lessdot & \gtrdot  \\
\#       & \lessdot  &\doteq\\
\end{array}
$
&
\begin{tikzpicture}[every edge/.style={draw,solid}, node distance=4cm, auto, 
                    every state/.style={draw=black!100,scale=0.5}, >=stealth]

\node[initial by arrow, initial text=,state] (q0) {{\huge $q_0$}};
\node[state] (q1) [right of=q0, accepting, xshift=0cm] {{\huge $q_1$}};

\path[->]
(q0) edge [] node {$a$} (q1)
(q1) edge [loop right] node {$a$} (q1)
(q1) edge [loop below, double] node {$q_0, q_1$} (q1);
\end{tikzpicture}
\end{tabular}
\caption{\opa\ recognizing language $L_1$ of Theorem~\ref{th:clConcatBFAE}.}\label{fig:Concat1BFAE}
\end{center}
\end{figure}

\begin{figure}[h!]
\begin{center}
\begin{tabular}{m{.4\textwidth}m{.45\textwidth}}
\qquad\quad
$
\begin{array}{c|cc}
      & a & b  \\
\hline
a & \lessdot   & \doteq  \\
b  & \gtrdot & \gtrdot  \\
\#  & \lessdot  \\
\end{array}
$
&
\begin{tikzpicture}[every edge/.style={draw,solid}, node distance=4cm, auto, 
                    every state/.style={draw=black!100,scale=0.5}, >=stealth]

\node[initial by arrow, initial text=,state, accepting] (q0) {{\huge $q_0$}};
\node[state] (q1) [right of=q0, xshift=0cm] {{\huge $q_1$}};

\path[->]
(q0) edge [bend left] node {$a$} (q1)
(q1) edge [double] node {$q_0$} (q0)
(q1) edge [loop right] node {$a,b$} (q1)
(q1) edge [loop below, double] node {$q_1$} (q1);
\end{tikzpicture}
\end{tabular}
\caption{\bfae\ recognizing language $L_2$ of Theorem~\ref{th:clConcatBFAE}.}\label{fig:Concat2BFAE}
\end{center}
\end{figure}
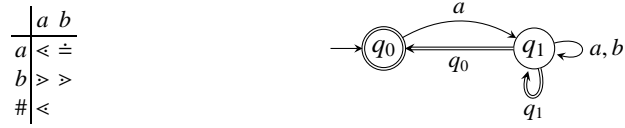

Language $L = L_1 \cdot L_2 = a^+ (L_{\text{Dyck}}(a,b))^{\omega}$, however, is not recognizable by any \bfae.
$\hfill\qed$
\end{proof}

\subsection{Closure properties of \lof{\dbfae}}
\label{sec:wDBFAE}

The class of languages accepted by \dbfae s is closed under intersection and union, but not under complementation and concatenation on the left with an \opl.

\subsection*{Closure under intersection}
\label{sec:intersectionDBFAE}

\begin{theorem}\label{th:clIntersectionDBFAE}
Let $L_1$ and $L_2$ be $\omega$-languages that can be recognized by two \dbfae s defined over the same alphabet $\Sigma$, with compatible precedence matrices $M_1$ and $M_2$ and $s_1$ and $s_2$ states respectively. Then $L = L_1 \cap L_2$ is recognizable by an \dbfae \  with OPM $M = M_1 \cap M_2$ and $O(s_1 s_2)$ states.
\end{theorem}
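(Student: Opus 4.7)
The plan is to adapt the proof of Theorem~\ref{th:clIntersectionBFAE} for the nondeterministic case and verify that each step preserves determinism, in the same spirit as the relation between Theorem~\ref{th:clIntersectionDBFAF} and its nondeterministic counterpart. The construction proceeds in two steps.

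First, I would apply to each input automaton the tagging construction (used in Section~\ref{sec:wBFAE} and detailed in~\cite{PanellaMasterThesis2011}) that splits the state set into ``empty-stack'' and ``non-empty-stack'' copies, yielding equivalent automata $\tilde{\mathcal A}_1$ and $\tilde{\mathcal A}_2$ whose accepting configurations are exactly those whose state is a tagged image of an original final state. Since whether the current stack is empty is a deterministic function of the run performed so far, and that run is unique for a deterministic OPA on a word compatible with its OPM, this refinement of the state set preserves determinism.

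Second, I would apply the standard B\"uchi product construction with a $\{1,2\}$-valued flag to the pair $(\tilde{\mathcal A}_1, \tilde{\mathcal A}_2)$: state space $\tilde Q_1 \times \tilde Q_2 \times \{1,2\}$, push/mark/flush transitions acting componentwise on $\tilde Q_1$ and $\tilde Q_2$, flag switching from $1$ to $2$ whenever the first component is tagged-final (and symmetrically for $2 \to 1$), and final states those triples $(q_1, q_2, 1)$ with $q_1$ tagged-final in $\tilde{\mathcal A}_1$. The product is deterministic because it combines two deterministic transition functions componentwise and the flag update depends only on the current state; the state count is bounded by $|\tilde Q_1|\cdot|\tilde Q_2|\cdot 2 = O(s_1 s_2)$.

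The main obstacle --- and the only delicate point --- is the alignment of empty-stack positions in the runs of the two automata, since without it the flag would switch based on independent schedules and would not witness a single configuration that is jointly empty-stack and final in both components. This alignment is ensured by the compatibility of the matrices: since $M = M_1 \cap M_2$ is contained in both $M_1$ and $M_2$, any word in $L_M$ induces exactly the same sequence of push, mark, and flush moves in all three automata, so their stacks are empty at precisely the same input positions. Hence the joint tagging and the flag update faithfully realize the conjunction of the two empty-stack B\"uchi acceptance conditions, and a successful run of the product corresponds exactly to runs of $\tilde{\mathcal A}_1$ and $\tilde{\mathcal A}_2$ that both visit tagged final states infinitely often.
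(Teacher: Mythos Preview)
Your proposal is correct and follows essentially the same approach as the paper: the paper's proof simply observes that the tagging transformation of Theorem~\ref{th:clIntersectionBFAE} preserves determinism and that the resulting product automaton is deterministic whenever $\tilde{\mathcal A}_1$ and $\tilde{\mathcal A}_2$ are. Your version is more explicit---in particular your remark that the common OPM forces both components to perform the same move type at each step, so their stacks are empty at identical positions, is exactly the reason the tagged product works, though the paper leaves this implicit in its reference to the nondeterministic construction.
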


\begin{proof}
The proof derives from the analogous proof of closure under intersection of languages in \lof{\bfae} (Theorem~\ref{th:clIntersectionBFAE}). In fact, the transformation of \bfae s into equivalent \bfae s with tagged and untagged states preserves determinism and, similarly, the \bfae \ that accepts the intersection of the languages recognized by the two \bfae s $\tilde {\mathcal  A_{1}}$ and $\tilde {\mathcal  A_{2}}$ presented in that proof is deterministic if both $\tilde {\mathcal  A_{1}}$ and $\tilde {\mathcal  A_{2}}$ are deterministic.
$\hfill\qed$
\end{proof}

\subsection*{Closure under union}
\label{sec:unionDBFAE}

\begin{theorem}\label{th:clUnionDBFAE}
Let $L_1$ and $L_2$ be $\omega$-languages that can be recognized by two \dbfae s defined over the same alphabet $\Sigma$, with compatible precedence matrices $M_1$ and $M_2$ and $s_1$ and $s_2$ states respectively. Then $L = L_1 \cup L_2$ is recognizable by an \dbfae \  with OPM $M = M_1 \cup M_2$ and $O(s_1 s_2)$ states.
\end{theorem}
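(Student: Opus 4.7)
The plan is to follow the product construction used for \dbfa s in Theorem~\ref{th:clUnionDBFAF}, adapting it to the empty-stack acceptance criterion of \dbfae s. First I would normalize the two given automata $\tilde{\mathcal{A}}_1$ and $\tilde{\mathcal{A}}_2$ to share the common complete OPM $M = M_1 \cup M_2$, as in~\cite{PanellaMasterThesis2011}, producing deterministic and complete \dbfae s $\mathcal{A}_1 = \langle \Sigma, M, Q_1, q_{01}, F_1, \delta_1 \rangle$ and $\mathcal{A}_2 = \langle \Sigma, M, Q_2, q_{02}, F_2, \delta_2 \rangle$, with $O(s_1)$ and $O(s_2)$ states respectively and with a well-defined run on every $\omega$-word in $L_M$. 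Since both automata share the same OPM, at each input position they perform the same type of move (mark, push, or flush), so their stacks grow and shrink in perfect synchrony.

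Next I would define the product automaton $\mathcal{A}_3 = \langle \Sigma, M, Q_1 \times Q_2, (q_{01}, q_{02}), F_3, \delta_3 \rangle$ with component-wise transitions $\delta_{\text{3push}}((q_1,q_2), a) = (\delta_{\text{1push}}(q_1, a), \delta_{\text{2push}}(q_2, a))$ and $\delta_{\text{3flush}}((q_1,q_2), (p_1,p_2)) = (\delta_{\text{1flush}}(q_1, p_1), \delta_{\text{2flush}}(q_2, p_2))$, and with final set $F_3 = F_1 \times Q_2 \cup Q_1 \times F_2$. The automaton $\mathcal{A}_3$ is deterministic, complete, and has $O(s_1 s_2)$ states, matching the claimed bound.

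The central observation, which I would state as a short claim, is that a configuration of $\mathcal{A}_3$ has stack $\stack 0\#{(q_1, q_2)}$ if and only if the corresponding configurations of $\mathcal{A}_1$ and $\mathcal{A}_2$ have stacks $\stack 0\#{q_1}$ and $\stack 0\#{q_2}$, respectively; this follows by induction on the length of the computation, using the synchrony of the stack operations. Letting $E$ be the set of positions of the joint run at which the stack is empty, it follows that $\mathcal{A}_3$ visits $F_3$ with empty stack at $t \in E$ iff $q_1(t) \in F_1$ or $q_2(t) \in F_2$. Since the union of two sets of positions is infinite iff at least one of them is infinite, $\mathcal{A}_3$ accepts $x$ iff $\mathcal{A}_1$ accepts $x$ or $\mathcal{A}_2$ accepts $x$, giving $L(\mathcal{A}_3) = L_1 \cup L_2$.

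The only delicate step, and the main point of departure from the argument for \dbfa s, will be verifying that the empty-stack acceptance condition survives the product construction. The construction works precisely because compatibility of the OPMs together with determinism and completeness of both component automata ensure lockstep stack behaviour, so no auxiliary flag (such as would be needed to handle intersection of ordinary B\"uchi acceptance) is required.
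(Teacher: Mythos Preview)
Your proposal is correct and follows the same route as the paper, which simply states that the proof is analogous to Theorem~\ref{th:clUnionDBFAF}. You have spelled out the one extra ingredient the paper leaves implicit, namely that the shared OPM forces the component stacks to become empty at exactly the same positions, so the product with final set $F_1 \times Q_2 \cup Q_1 \times F_2$ works for the empty-stack B\"uchi condition as well.
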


\begin{proof}
The proof is analogous to the proof of closure under union of languages belonging to \lof{\dbfa} (Theorem~\ref{th:clUnionDBFAF}).
$\hfill\qed$
\end{proof}

\subsection*{Complementation and concatenation}
\label{sec:complConcatDBFAE}

\begin{theorem}\label{th:clComplementDBFAE}
Let $L$ be an $\omega$-language accepted by an \dbfae\ with OPM $M$ on alphabet $\Sigma$. There does not necessarily exist an \dbfae\ recognizing the complement of $L$ w.r.t $L_M$.
\end{theorem}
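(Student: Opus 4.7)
The plan is to reuse the counterexample from Theorem~\ref{th:clComplementBFAE} after observing that the witness automaton it exhibits is already deterministic, so no new construction is required. Concretely, I would take the same alphabet $\Sigma=\{a,b\}$ and the same conflict-free OPM $M$ used there (with $a\lessdot a$, $a\lessdot b$, $b\gtrdot b$, $b\lessdot a$, and $\#\lessdot a$, $\#\lessdot b$), together with the language $L=\{b^\omega\}$.

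First I would inspect the \bfae\ depicted in Figure~\ref{fig:ComplementBFAE}: it has the single state $q_0$ which is both initial and final, with $\delta_{\text{push}}(q_0,b)=\{q_0\}$ and $\delta_{\text{flush}}(q_0,q_0)=\{q_0\}$. Since $|I|=1$ and both $\delta_{\text{push}}(q,a)$ and $\delta_{\text{flush}}(q,p)$ are either empty or singletons for every $q,p$, this automaton satisfies the determinism conditions stated in Definition~\ref{def:wFA}. Hence $L\in\mathcal{L}(\dbfae)$. On input $b^\omega$ the unique run consists of a sequence of mark moves interleaved with flush moves that keep returning to the configuration $\stack 0\#{q_0}$ infinitely often, so acceptance by final state with empty stack holds.

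Next I would appeal to the strict-inclusion relation $\mathcal{L}(\dbfae)\subseteq\mathcal{L}(\bfae)$, which is immediate from the definitions. Because the complement of $L$ with respect to $L_M$ contains every word of the form $a^n b^\omega$ with $n\geq 1$, and because Theorem~\ref{th:clComplementBFAE} has already shown that no \bfae\ can accept this complement (the OPM forces the initial block $a^n$ to be pushed onto the stack and never flushed, so the bottom marker $\#$ is never exposed again, preventing infinitely many final configurations with empty stack), the complement of $L$ cannot be accepted by any \bfae\ and in particular by any \dbfae.

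The argument therefore reduces almost entirely to a verification: the only nontrivial step is to check that the Figure~\ref{fig:ComplementBFAE} automaton already meets the determinism conditions, which is what makes the nondeterministic counterexample transfer without modification. There is no real obstacle; the hardest conceptual part has already been paid for inside Theorem~\ref{th:clComplementBFAE}, where the pumping-style argument showing that $a^n b^\omega$ cannot be accepted with empty stack infinitely often was established.
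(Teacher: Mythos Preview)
Your argument is correct, but it differs from the paper's own proof. The paper chooses a different counterexample: it takes $\Sigma=\{a,b\}$ with an OPM in which every relation is $\gtrdot$ (apart from $\#\lessdot a$, $\#\lessdot b$), sets $L=\{\alpha\in\Sigma^\omega:\alpha\text{ contains infinitely many }a\}$, exhibits a two-state \dbfae\ for $L$, and then observes that the complement is $L_1=\{\alpha:\text{finitely many }a\}$, which cannot be recognized by any \dbfae\ via the classical determinism argument (the one adapted from deterministic B\"uchi automata in Section~\ref{sec:BVPAvsDBFA} and reused in Section~\ref{sec:BFAEvsDBFAE}).

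Your route is more economical: you recycle the witness of Theorem~\ref{th:clComplementBFAE} verbatim after checking that the single-state automaton of Figure~\ref{fig:ComplementBFAE} already satisfies the determinism conditions, and then conclude via $\mathcal{L}(\dbfae)\subseteq\mathcal{L}(\bfae)$. This buys you a free proof with no new impossibility argument, and in fact establishes something slightly stronger about this particular $L$: its complement lies outside $\mathcal{L}(\bfae)$ altogether, not merely outside $\mathcal{L}(\dbfae)$. By contrast, the paper's counterexample has a complement ($L_1$) that \emph{is} in $\mathcal{L}(\bfae)$ (Figure~\ref{fig:BFAEfinitely_a}) but not in $\mathcal{L}(\dbfae)$, so its impossibility hinges on determinism rather than on the stack never emptying. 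Both arguments are valid; yours is shorter, the paper's isolates the role of determinism more sharply.
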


\begin{proof}
Given $\Sigma = \{a,b\}$, the language $L = \{  \alpha \in \Sigma^\omega : \alpha $ contains an infinite number of letters $a\}$ can be recognized by an \dbfae\ $\mathcal  A =\langle \Sigma, M, Q, I, F, \delta \rangle $ with OPM and graph as in the figure below (Figure~\ref{fig:ComplementDBFAE}).
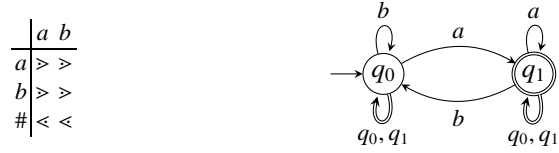
\begin{figure}[h!]
\begin{center}
\begin{tabular}{m{.4\textwidth}m{.45\textwidth}}
\qquad\quad
$
\begin{array}{c|cc}
      & a & b  \\
\hline
a & \gtrdot & \gtrdot   \\
b  & \gtrdot & \gtrdot  \\
\#       & \lessdot & \lessdot  \\
\end{array}
$
&
\begin{tikzpicture}[every edge/.style={draw,solid}, node distance=4cm, auto, 
                    every state/.style={draw=black!100,scale=0.5}, >=stealth]

\node[initial by arrow, initial text=,state] (q0) {{\huge $q_0$}};
\node[state] (q1) [right of=q0, accepting, xshift=0cm] {{\huge $q_1$}};

\path[->]
(q0) edge [bend left, above]  node {$a$} (q1)
(q0) edge [loop above] node {$b$} (q0)
(q0) edge [loop below, double] node {$q_0, q_1$} (q0)

(q1) edge [bend left, below]  node {$b$} (q0)
(q1) edge [loop above] node {$a$} (q1)
(q1) edge [loop below, double] node {$q_0, q_1$} (q1) ;
\end{tikzpicture}
\end{tabular}
\caption{ OPM and graph of the \dbfae\ $\mathcal A$ of Theorem~\ref{th:clComplementDBFAE}.}\label{fig:ComplementDBFAE}
\end{center}
\end{figure}

There is, however, no \dbfae\ that can recognize the complement of this language w.r.t. $L_M$, i.e. the language $\neg L = \{  \alpha \in \Sigma^\omega : \alpha \text{ contains finitely many letters $a$ } \}$.$\hfill\qed$
\end{proof}

\begin{theorem}\label{th:clConcatDBFAE}
Let $L_2$ be an $\omega$-language accepted by an \dbfae\ with OPM $M$ on alphabet $\Sigma$ and let $L_1 \subseteq \Sigma^*$ be a language (of finite words) recognized by an \opa\ with a compatible precedence matrix. The $\omega$-language defined by the product $L_1 \cdot L_2$ is not necessarily recognizable by an \dbfae.
\end{theorem}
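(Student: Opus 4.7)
The plan is to reuse the witness from Theorem~\ref{th:clConcatBFAE} verbatim and reduce to that theorem via the class inclusion $\lof{\dbfae}\subseteq\lof{\bfae}$. Let $\Sigma=\{a,b\}$, take $L_1=\{a^n\mid n\geq 1\}$ recognized by the OPA of Figure~\ref{fig:Concat1BFAE}, and take $L_2=(L_{\text{Dyck}}(a,b))^\omega$ recognized by the \bfae\ of Figure~\ref{fig:Concat2BFAE}; their OPMs are compatible by construction.

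The first step is to check that the automaton of Figure~\ref{fig:Concat2BFAE} is actually a \emph{deterministic} \bfae. Its initial set is the singleton $\{q_0\}$, the OPM is conflict-free so that the kind of each move (mark, push, or flush) is forced by the lookback--lookahead pair, and each entry of the push and flush tables has a single target state. Hence $L_2\in\lof{\dbfae}$, and together with $L_1$ being an OPL this yields exactly the hypotheses of the theorem.

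The second step is to invoke Theorem~\ref{th:clConcatBFAE}: the concatenation $L_1\cdot L_2=a^+(L_{\text{Dyck}}(a,b))^\omega$ was shown there to lie outside $\lof{\bfae}$. Since every \dbfae\ is a fortiori an \bfae\ (so $\lof{\dbfae}\subseteq\lof{\bfae}$), we conclude $L_1\cdot L_2\notin\lof{\dbfae}$ as well, which proves the theorem.

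The only substantive content is the structural obstruction inherited from Theorem~\ref{th:clConcatBFAE}: in any compatible OPM $M_3\supseteq M_1\cup M_2$ one has $a\lessdot a$ and $a\doteq b$, so every initial $a$ in a word of $L_1\cdot L_2$ is pushed as a marked symbol, and no letter of $\Sigma$ can ever trigger a flush on an $a$-top (no $c\in\Sigma$ satisfies $a\gtrdot c$). Consequently, after reading the prefix $a^n$ with $n\geq 1$, no configuration of the form $\stack 0\#{q_f}$ is ever reached again, so the empty-stack and final-state acceptance condition cannot be met infinitely often--whether the automaton is deterministic or not. This is the main obstacle, but it is already encapsulated in the preceding theorem; the determinism restriction adds nothing new to it.
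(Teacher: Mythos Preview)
Your argument is correct, but it takes a different route from the paper's own proof. The paper does \emph{not} reuse the witness of Theorem~\ref{th:clConcatBFAE}; instead it picks the simpler pair $L_1=\Sigma^*$ and $L_2=\{b^\omega\}$ (both recognized by one-state deterministic automata), so that $L_1\cdot L_2$ is exactly the ``finitely many $a$'s'' language of Equation~\eqref{eq:Lfinitely_a}, and then appeals to the fact---already established in Section~\ref{sec:BFAEvsDBFAE}---that this language lies outside $\lof{\dbfae}$.

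Your approach has the virtue of economy: you only need to observe that the \bfae\ of Figure~\ref{fig:Concat2BFAE} is already deterministic (which it is: singleton initial set, and every push and flush entry is a singleton), and then the non-membership in $\lof{\dbfae}$ comes for free from the stronger non-membership in $\lof{\bfae}$ via the trivial inclusion $\lof{\dbfae}\subseteq\lof{\bfae}$. In fact your witness gives a slightly sharper statement, since $a^+(L_{\text{Dyck}}(a,b))^\omega$ is not even in the larger nondeterministic class, whereas the paper's witness $L_1\cdot L_2$ \emph{is} in $\lof{\bfae}$ (cf.\ Figure~\ref{fig:BFAEfinitely_a}) and only fails the deterministic empty-stack condition. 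The paper's choice, on the other hand, ties the result back to the classical ``finitely many $a$'s'' obstruction for deterministic B\"uchi acceptance, keeping the exposition self-contained with respect to the comparisons drawn earlier in Section~\ref{sec:otherModels}.
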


\begin{proof}
Let $\Sigma = \{a,b\}$; the language $L$ of Equation~\ref{eq:Lfinitely_a} is the concatenation $L=L_1 \cdot L_2$ of a language of finite words $L_1$ and an $\omega$-language $L_2$, with compatible precedence matrices, defined as follows:

$L_1 = \Sigma^*$

$
L_2 \subseteq \Sigma^\omega,\quad L_2 = \{ b^\omega \}
$

\noindent Language $L_1$ is recognized by the \opa\ with OPM and state-graph in Figure~\ref{fig:Concat1DBFAE}:

\begin{figure}[h!]
\begin{center}
\begin{tabular}{m{.4\textwidth}m{.45\textwidth}}
\qquad\quad
$
\begin{array}{c|ccc}
      & a & b & \# \\
\hline
a & \lessdot & \lessdot  &\gtrdot \\
b  & \lessdot & \gtrdot &\gtrdot \\
\#       & \lessdot & \lessdot  &\doteq\\
\end{array}
$
&
\begin{tikzpicture}[every edge/.style={draw,solid}, node distance=4cm, auto, 
                    every state/.style={draw=black!100,scale=0.5}, >=stealth]

\node[initial by arrow, initial text=,state, accepting] (S) {{\huge $q_0$}};

\path[->]
(S) edge [loop above] node {$a, b$} (S)
(S) edge [loop right, double] node {$q_0$} (S);
\end{tikzpicture}
\end{tabular}
\caption{\opa\ recognizing language $L_1$ of Theorem~\ref{th:clConcatDBFAE}.}\label{fig:Concat1DBFAE}
\end{center}
\end{figure}
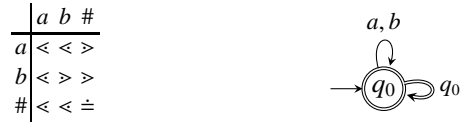

\noindent and language $L_2$ is recognized by the \dbfae\ in Figure~\ref{fig:Concat2DBFAE}:

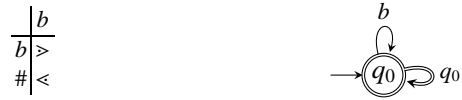
\begin{figure}[h!]
\begin{center}
\begin{tabular}{m{.4\textwidth}m{.45\textwidth}}
\qquad\quad
$
\begin{array}{c|c}
      & b  \\
\hline
b   & \gtrdot  \\
\#  & \lessdot  \\
\end{array}
$
&
\begin{tikzpicture}[every edge/.style={draw,solid}, node distance=4cm, auto, 
                    every state/.style={draw=black!100,scale=0.5}, >=stealth]

\node[initial by arrow, initial text=,state, accepting] (S) {{\huge $q_0$}};

\path[->]
(S) edge [loop above] node {$b$} (S)
(S) edge [loop right, double] node {$q_0$} (S);
\end{tikzpicture}
\end{tabular}
\caption{\dbfae\ recognizing language $L_2$ of Theorem~\ref{th:clConcatDBFAE}.}\label{fig:Concat2DBFAE}
\end{center}
\end{figure}
\noindent Since language $L$ cannot be recognized by an \dbfae, then the class of languages \lof{\dbfae} is not closed w.r.t concatenation.
$\hfill\qed$
\end{proof}

\section{Conclusions and further research}
\label{sec:conclusions}

We presented a formalism for infinite-state model checking based on operator precedence languages, continuing to explore the paths in the lode of operator precedence languages started up by Robert Floyd a long time ago. 
We introduced various classes of automata able to recognize operator precedence languages of infinite-length words whose expressive power outperforms classical models for infinite-state systems as Visibly Pushdown $\omega$-languages, allowing to represent more complex systems in several practical contexts. We proved the closure properties of \ofl s under Boolean operations that, along with the decidability of the emptiness problem, are fundamental for the application of such a formalism to model checking.
For instance, with reference to Example~\ref{ex:BFA}, imagine that one builds a specialized system that includes only procedures of type $a$ and where interrupts of lowest level are disabled when there is any pending $call_a$: once having built a new model $\hat{\mathcal{A}}$ for such a system she can automatically verify its compliance with the more general one $\mathcal{A}$ by checking whether $L(\hat{\mathcal{A}}) \subseteq L(\mathcal{A})$.

Our results open further directions of research. 
A first topic deals with the investigation of properties and fields of application of OPAs and \ofa s as transducers, as they may e.g. translate tagged documents written in mark-up languages (as XML, HTML) into the final displayed (XML, HTML) page, or they may translate the traces of operations of do-undo actions performed on different versions of a file into an end-user log or document. Thus, it might be possible to define a formal translation from structured or semistructured languages or patterns of tasks and client behaviors into suitable final-user views of the model.

A second interesting research issue is the characterization of \ofl s in terms of suitable monadic second order logical formulas, that has already been studied for operator precedence languages of finite-length strings~\cite{LonatiMandrioliPradella2013a}.
This would further strengthen applicability of model checking techniques.
The next step of investigation will regard the actual design and study
of complexity issues of algorithms for model checking of expressive
logics on these pushdown models. We expect that the peculiar features
of operator precedence languages, as their ``locality principle'' which makes them
suitable for parallel and incremental parsing~\cite{BarenghiEtAl2012a,BVCMP12} and their expressivity, might be interestingly exploited to devise efficient and attractive software model-checking procedures and approaches.

\bibliographystyle{splncs03}
\bibliography{Floydbib}

\end{document}